\newtheorem{theorem}{Theorem}[section]
\newtheorem{lemma}[theorem]{Lemma}
\theoremstyle{definition}
\newtheorem{definition}[theorem]{Definition}
\newtheorem{remark}[theorem]{Remark}
\newtheorem{proposition}[theorem]{Proposition}
\renewcommand{\epsilon}{\varepsilon}
\newcommand{\E}{\mathcal{E}}
\renewcommand{\phi}{\varphi}
\newcommand{\R}{\mathbb{R}}
\renewcommand{\S}{\mathbb{S}}
\newcommand{\be}{\begin{equation}}
\newcommand{\1}{{\ensuremath {\mathds 1} }}
\newcommand\tr{\mathop{\mathrm{tr}}\nolimits} 
\numberwithin{equation}{section}  
\begin{document}

\title{The maximal excess charge in reduced Hartree-Fock molecule}

\author{Yukimi Goto}
\address{RIKEN iTHEMS, Wako, Saitama 351-0198, Japan}
\email{\tt yukimi.goto@riken.jp}

\begin{abstract}
We consider a molecule described by the Hartree-Fock model without the exchange term.
We prove that nuclei of total charge $Z$ can bind at most $Z+C$ electrons, where $C$ is a constant independent of $Z$.
\end{abstract}

\maketitle

\section{Introduction}
We denote by $N>0$ and $K>0$ the total number of electrons and nuclei, respectively.
Our model is described by an energy functional defined on one-body density matrices.
A one-body density matrix $\gamma$ is a self-adjoint operator on $L^2(\R^3)$ satisfying $0 \le \gamma \le 1$ and $\tr \gamma < \infty$.
The kernel can be written as $\gamma(x, y) = \sum_{i\ge1} n_i \phi_i(x) \phi^*_i(y)$, with the eigenfunctions $\phi_i$, such that $\gamma \phi_i = n_i \phi_i$.
Then we define the one-particle electron density $\rho_\gamma$ by $\rho_{\gamma} (x)= \gamma(x, x)$.
The reduced Hartree-Fock (RHF) functional is given by the functional
\[
\mathcal{E}^\mathrm{RHF}(\gamma)
=\tr \left[\left( -\frac{1}{2} \Delta - V_Z \right) \gamma\right]
+D[\rho_{\gamma}],
\]
where
\[
D[\rho_{\gamma}]  \coloneqq \frac{1}{2}\iint_{\R^3\times \R^3}
\frac{\rho_{\gamma}(x) \rho_{\gamma}(y)}{|x-y|}dxdy.
\]
Here $V_Z$ is the Coulomb potential
\[ V_Z(x) = \sum_{i=1}^K \frac{z_i}{|x-R_i|}, \quad Z=\sum_{i=1}^K z_i,
\]
where $z_1, \dots, z_K >0$ are the charges of fixed  nuclei located at $R_1, \dots, R_K \in \R^{3}$.
For all $N> 0$ and $z_i > 0$, we define the energy by
\[
E^\mathrm{RHF}(N, Z) = \inf \{\mathcal{E}^\mathrm{RHF}(\gamma) \colon
\gamma \in \mathcal{P},
\tr \gamma = N\}
\]
where $\mathcal{P} = \{ \gamma \colon \gamma =\gamma^{\dagger}, 0 \le \gamma \le 1, (-\Delta +1)^{1/2}\gamma (-\Delta +1)^{1/2} \in \mathcal{S}^1\}$, and $\mathcal{S}^1$ is the set of trace-class operators. 

Our interest is to investigate the maximum ionization.
It is believed (see~\cite[Chapter 12]{LS}) that real atoms in nature can only bind one or possibly two extra electrons.
This ionization conjecture has only been shown for the atomic case ($K=1$) in the reduced Hartree-Fock model~\cite{SolovejRHF} and the full Hartree-Fock model~\cite{SolovejIC}.
Recently, Frank et al. proved this conjecture also in the Thomas-Fermi-Dirac-von Weizs\"aker model~\cite{TFDWIC} and the M\"uller model~\cite{mullerIC}.
However, they only dealt with the atomic case.

In this article, we will prove as follows.
\begin{theorem}[Maximal ionization]
\label{theorem.main}
Let $z_\mathrm{min} \coloneqq \min_{1\le j \le K}z_j$, $z_\mathrm{max} \coloneqq \max_{1\le j \le K} z_j$ and $R_\mathrm{min} = \min_{i \neq j}|R_i - R_j|$.
We assume $z_\mathrm{min}  \ge \delta z_\mathrm{max}$, and $R_\mathrm{min} \ge c_0$ with some $c_0, \delta>0$ independent of $Z$.
There is a constant $C_K>0$ depending on $K$ such that for all $Z>0$, if $E^\mathrm{RHF}(N, Z) $ has a minimizer, then $N \le Z+C_K$ holds.
\end{theorem}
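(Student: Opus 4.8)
The plan is to follow the mean-field screening strategy of Solovej's atomic analysis~\cite{SolovejRHF}, the number $K$ of nuclei entering only at the final counting stage. By convexity of $\E^{\mathrm{RHF}}$ on $\mathcal P$ and linearity of the constraint $\tr\gamma=N$, a minimizer $\gamma$ is characterized by the existence of a multiplier $\mu\in\R$ with
\[
\1(h_\gamma<\mu)\ \le\ \gamma\ \le\ \1(h_\gamma\le\mu),\qquad h_\gamma=-\tfrac12\Delta-\Phi_\gamma,\quad \Phi_\gamma=V_Z-\rho_\gamma*|\cdot|^{-1},
\]
where $-\Delta\Phi_\gamma=4\pi\bigl(\sum_j z_j\delta_{R_j}-\rho_\gamma\bigr)$. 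Since $h_\gamma$ has essential spectrum $[0,\infty)$ while $\tr\gamma<\infty$, one must have $\mu\le0$, hence $\1(h_\gamma<0)\le\gamma\le\1(h_\gamma\le0)$; the goal is to bound $N=\int\rho_\gamma$.

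\textbf{Step 2 (screening, the crux).} Let $\Phi^{\mathrm{TF}}$ be the Thomas--Fermi mean-field potential of the neutral molecule with the same nuclei. By Lieb--Simon theory $\Phi^{\mathrm{TF}}>0$, it is superharmonic off the $R_j$, behaves like $z_j|x-R_j|^{-1}$ near $R_j$, and decays like $|x|^{-4}$ (Sommerfeld asymptotics), so each nucleus is asymptotically screened. Using $R_{\mathrm{min}}\ge c_0$, $z_{\mathrm{min}}\ge\delta z_{\mathrm{max}}$, and the atomic screening estimates of~\cite{SolovejRHF}, I would establish the pointwise bound
\[
0\ \le\ \bigl(\Phi_\gamma(x)\bigr)_+\ \le\ C\,\Phi^{\mathrm{TF}}(x)+C\,\bigl(\mathrm{dist}(x,\{R_j\})\bigr)^{-4}\quad\text{for}\quad \mathrm{dist}(x,\{R_j\})\ge r_0,
\]
with $C,r_0$ depending only on $K,c_0,\delta$ (not on $Z$). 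This is the standard bootstrap on the screening radius: compare $\Phi_\gamma$ with a multiple of $\Phi^{\mathrm{TF}}$ on a sphere of macroscopic radius, propagate the bound inward over dyadic annuli using the maximum principle for the subharmonic part of $\Phi_\gamma-C\Phi^{\mathrm{TF}}$, and control the boundary data at each step by the quantum-versus-TF kinetic comparison (a Lieb--Thirring bound from above, a coherent-state trial density from below). The separation hypothesis makes each nucleus together with its nearly neutralizing cloud contribute only a mild, slowly varying correction in the other cells; the comparability of the $z_j$ keeps the atomic scales $z_j^{-1/3}$ of one order, so a single iteration can be pushed down uniformly in $Z$ to the fixed radius $r_0$.

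\textbf{Step 3 (counting).} Partition $\R^3$ into the Voronoi cells of the $R_j$. First, for the fixed $r_0<R_{\mathrm{min}}$, integrate the Poisson equation over $B(R_j,r_0)$: with the Step~2 bound on $\Phi_\gamma|_{\partial B(R_j,r_0)}$ and the near-radial monotonicity inherited from $\Phi^{\mathrm{TF}}$ one gets a running-charge estimate $\int_{B(R_j,r_0)}\rho_\gamma\le z_j+C'$, with $C'$ independent of $Z$. Second, the electrons lying outside $\bigcup_j B(R_j,r_0)$ are estimated, by an IMS localization and the CLR/Lieb--Thirring inequality, by $c\int_{\mathrm{dist}(x,\{R_j\})\ge r_0}(\Phi_\gamma)_+^{3/2}\,dx$ up to a $Z$-independent localization error; by Step~2 this is $\lesssim K\int_{|y|\ge r_0}|y|^{-6}\,dy<\infty$. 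Summing over the $K$ cells,
\[
N=\tr\gamma\ \le\ \sum_{j=1}^K\bigl(z_j+C'\bigr)+C''\ =\ Z+C_K.
\]

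\textbf{Expected main obstacle.} Essentially all the work is in Step~2: transplanting Solovej's atomic screening iteration to the multicenter geometry and, above all, keeping it uniform in $Z$ down to a length scale $r_0$ that does not shrink. The two delicate technical points are the decoupling of the nuclei in the iteration (this is exactly where $R_{\mathrm{min}}\ge c_0$ is used, since at distances $\ge c_0$ a nucleus together with its nearly neutral cloud produces only an $o(z_j)$, slowly varying correction) and the treatment of the borderline case $\mu=0$ with possible zero-energy modes of $h_\gamma$; both can be handled along the lines of~\cite{SolovejRHF,SolovejIC}, while the comparable-charge hypothesis is what allows a single bootstrap rather than several interlocking ones.
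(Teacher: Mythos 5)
Your Steps 1--2 correctly identify the overall architecture (a bootstrap comparing the quantum mean field with Thomas--Fermi, Sommerfeld decay, Voronoi cells), which is indeed the paper's strategy. However, there are two genuine gaps.

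First, the counting in Step 3 does not work as stated. Integrating the Poisson equation $-\Delta\Phi_\gamma=4\pi\bigl(\sum_j z_j\delta_{R_j}-\rho_\gamma\bigr)$ over $B(R_j,r_0)$ gives $4\pi\bigl(z_j-\int_B\rho_\gamma\bigr)=-\int_{\partial B}\partial_n\Phi_\gamma$, so the excess charge in the ball is controlled by the \emph{flux} of $\Phi_\gamma$ through the sphere, not by its boundary values; a pointwise upper bound on $(\Phi_\gamma)_+|_{\partial B}$ gives no control on $\partial_n\Phi_\gamma$, and the ``near-radial monotonicity'' you invoke is not available for the quantum potential. The paper instead extracts $\sum_j\int_{|x-R_j|<D}(\rho^\mathrm{RHF}-\rho^\mathrm{TF})$ from the fact that the difference of \emph{screened} potentials $\Phi^\mathrm{RHF}_D-\Phi^\mathrm{TF}_D$ is harmonic in $A_D$ and satisfies a \emph{two-sided} bound of order $D^{-4}$ on $\partial A_D$; the charge difference is then read off from the $1/|x|$ coefficient at infinity (Lemma~\ref{harmonic} and (\ref{ite1})). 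Your one-sided bound on $(\Phi_\gamma)_+$ for the unscreened potential is not the right object and would not yield this.

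Second, you are missing the exterior $L^1$ estimate, which is the paper's main technical novelty (Lemmas~\ref{theorem:outside} and~\ref{outside}, combining Lieb's method with a moving-plane argument). This is not merely a final counting device: the bootstrap of your Step 2 needs, at every scale $r$, a bound on $\int_{A_r}\rho^\mathrm{RHF}$ and on the localized kinetic energy (Lemma~\ref{lem.ite}) in order to run the semiclassical upper and lower bounds and the Coulomb estimate that close the iteration. Deferring all density counting to Step 3 as a consequence of Step 2 therefore makes the plan circular. In the non-neutral molecular case the identity $\int_{A_r}\rho^\mathrm{RHF}=\int_{A_r}\rho^\mathrm{TF}+\int_{A_r^c}(\rho^\mathrm{TF}-\rho^\mathrm{RHF})$ is unavailable, which is precisely why a new argument is needed here. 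Your CLR/Lieb--Thirring count of the exterior electrons could in principle replace part of the final step, but only after the screened-potential comparison is established, and it does not supply the input the iteration itself requires.
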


\begin{remark}
Presumably, the true $C_K$ behaves linearly on $K$, but this is still open.
\end{remark}

As in the atomic cases in~\cite{SolovejRHF, SolovejIC, TFDWIC, mullerIC}, the basic strategy to prove Theorem~\ref{theorem.main} is Solovej's argument, which consists of three main ingredients:
\begin{enumerate}
\item An inequality to control the electronic density outside a ball by using the screened potential inside the ball.

\item A Sommerfeld estimate for the screened nuclear potential in the Thomas-Fermi (TF) theory.

\item A bootstrap argument to compare the Hartree-Fock potential to the Thomas-Fermi potential.
\end{enumerate}

Recently, Samojlow has generalized these ingredients to diatomic ($K=2$) molecules~\cite{Samojlow}, where the Born-Oppenheimer curves were investigated.
From a technical point of view, our paper's main novelty is to extend Samojlow's results to $K>2$.
However, Samojlow has restricted the analysis to the neutral case $z_1=z_2=N/2$, and thus the arguments in~\cite{Samojlow} do not rely on the exterior $L^1$-estimate in a region $A_r \coloneqq \{x \in \R^3 \colon |x-R_j| > r \, \text{ for all } j=1, \dots, K\}$ with an adequate $r \in (0, R_\mathrm{min})$.
Indeed, in the neutral case $N=Z$, we can write
\[
\int_{A_r}\rho^\mathrm{RHF} =  \int_{A_r}\rho^\mathrm{TF}
+ \int_{A_r^c}\left(\rho^\mathrm{TF} - \rho^\mathrm{RHF} \right),
\]
where $\rho^\mathrm{RHF}$ and $\rho^\mathrm{TF}$ are the ground-state densities in the RHF and TF models, respectively.
The first term can be estimated by the Sommerfeld bound.
For the second term, we may use the bound (\ref{ite1}) below.
Hence, for the first ingredient, we need a new strategy to control the number of electrons in $A_r$.
One of our analyses' central ideas is to combine Lieb's method~\cite{Lieb1984} and the moving plane method~\cite{TFDWIC, mullerIC}.
Namely, on the first step we will use Lieb's method to control $\int_{A_{R_0}}\rho^\mathrm{RHF}$, where $R_0 \coloneqq \min \{1, R_\mathrm{min}/4\}$.
Next, the moving plane method allows us to control the $L^1$-norm of the density in the regions $r \le |x-R_j| \le R_0$ for all $j=1, \dots, K$.

For the second ingredient: Usually, the Sommerfeld asymptotics refers to the behavior at infinity, but it has been proven in~\cite[Theorem 4.6]{SolovejIC} for sufficiently large $|x|$.
Thus, to extend the bounds to molecular cases, we have to study bounds on small domains close to nuclei.
Then one can extend the proof in~\cite{Samojlow} to the case of $K>2$ with some slight modifications.
The only essential difference is that instead of the using the half spaces $H^{\pm} \coloneqq \{ x \in \R^3 \colon \pm (x -R\nu/2)\cdot R\nu/2 > 0\}$ with $R_1=0$ and $R_2=R\nu$ for some $\nu \in \S^2$, we will use the Voronoi cells $\Gamma_j \coloneqq \{ x \in \R^3 \colon |x-R_j| < |x-R_i| \, \text{ for all } i\neq j\}$ for $j=1, \dots, K$.

For the third ingredient: Our strategy is virtually the same as in~\cite{TFDWIC, SolovejIC, mullerIC, Samojlow}.
At the technical level, the molecular cases are slightly more complicated than the atomic or diatomic ones. Thus, we may require additional arguments.
In particular, the (sub) harmonicity of our potentials will be crucial.

\section*{Outline}
This article is organized as follows.
In Section 2, we derive the exterior estimate for the number of electrons in $A_r$.
In Section 3, we compare our minimizer with the minimizer of an effective exterior functional.
In Section 4, we study TF theory for molecules. In particular, we prove Sommerfeld bounds.
The proof of Theorem~\ref{theorem.main} is given in Section 7 by using Solovej's argument relying on an initial step given in Section 5 and an iteration step in Section 6.

\section*{Conventions}
In the remainder of this article, we will assume that $E^\mathrm{RHF}(N, Z) $ has a minimizer $\gamma^\mathrm{RHF}$ for some $N \ge Z$.
Then we will write $\rho^\mathrm{RHF} \coloneqq \rho_{\gamma^\mathrm{RHF}}$ and $\rho^\mathrm{TF}$ to be the minimizer for the neutral TF molecule.
We also use the shorthand notation
\[
D(f, g) \coloneqq \frac{1}{2}\iint_{\R^3 \times \R^3} \frac{f(x)g(y)}{|x-y|} \, dxdy.
\]


\section{$L^1$ exterior estimate}
As in~\cite{{RuskaiSolovej}}, we choose smooth localizing functions $\theta_j \in C^\infty(\R^3)$, $j = 0, 1, \dots, K$ with the following properties:
\begin{definition}
\label{localize}
Let $\lambda \in (0, 1/2]$ and $R_0 \coloneqq \min \{1, R_\mathrm{min}/4\}$.

\begin{enumerate}
  \setlength{\parskip}{0.1cm}
    \setlength{\itemsep}{0.1cm}
\item[(i)]
For $j \ge 1$ we have $\theta_j(x) = \theta (|x-R_j|/R_0)$, with smooth $\theta $ satisfying $0 \le \theta \le 1$ and $\theta(t) = 1$ if $t < 1$ and $\theta(t) = 0$ if $t > 1+\lambda$.

\item[(ii)]
$\sum_{j=0}^K \theta_j(x)^2 = 1$ (which defines $\theta_0$).

 These properties imply
\item[(iii)]
$|\nabla \theta_j(x)| \le CR_0^{-1}$ for all $j$.
\end{enumerate}

We put  $\gamma_j \coloneqq \theta_j \gamma^\mathrm{RHF} \theta_j$ and $\rho_j\coloneqq \rho_{\gamma_j}$ for $j = 0, 1 \dots, K$.

\end{definition}
Now we introduce here the screened potentials defined by
\begin{align*}
\Phi^\mathrm{RHF}_{r}(x) &\coloneqq V_Z(x)-  \int_{A_r^c} \frac{\rho^\mathrm{RHF}(y)}{|x-y|} \, dy,\\
\Phi^\mathrm{TF}_{r}(x) &\coloneqq V_Z(x)-  \int_{A_r^c} \frac{\rho^\mathrm{TF}(y)}{|x-y|} \, dy,
\end{align*}
where $A_r^c$ stands for the complement of $A_r = \{x \in \R^3 \colon |x-R_j| > r \, \text{ for all } j=1, \dots, K\}$.
Our first goal is to control the integral $\int_{A_{R_0}} \rho^\mathrm{RHF}$.
Namely, we will show as follows.
\begin{lemma}
\label{theorem:outside}
Let
\[
\phi(x) \coloneqq \sum_{j=1}^K\mu_j|x-R_j|^{-1}, \quad \mu_j = \frac{z_j}{Z}.
\]
Then it holds that
 \begin{align*}
\left(  \int_{A_{R_0}} \rho^\mathrm{RHF}(x) dx \right)^2 
  &\le  C\left( \frac{1}{R_0^2} + \sup_{x \in  A_{R_0/3}}  \phi(x)^{-1} \left[ \Phi^\mathrm{RHF}_{R_0/3}(x) \right]_+\right) \int_{A_{R_0/3}}\rho^\mathrm{RHF}.
 \end{align*}
\end{lemma}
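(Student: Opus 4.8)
The plan is to run Lieb's method, adapted to the molecular setting. Since $\gamma^\mathrm{RHF}$ is a minimizer with $N\ge Z$, there is a Lagrange multiplier $\mu\le 0$ with $\gamma^\mathrm{RHF}=\1_{(-\infty,\mu)}(H)+\delta$, where $H=-\tfrac12\Delta-\Phi^\mathrm{RHF}$, $\Phi^\mathrm{RHF}(x)=V_Z(x)-\int_{\R^3}\rho^\mathrm{RHF}(y)|x-y|^{-1}\,dy$, and $0\le\delta\le\1$ is supported on $\ker(H-\mu)$ (the sign $\mu\le 0$ being the standard consequence of $N\ge Z$). In particular $[H,\gamma^\mathrm{RHF}]=0$ and $(\gamma^\mathrm{RHF})^{1/2}(H-\mu)(\gamma^\mathrm{RHF})^{1/2}\le 0$, so for every multiplication operator $f\ge 0$ with $\int f\rho^\mathrm{RHF}<\infty$,
\[
0\ \ge\ \tr\!\big[\sqrt f\,(\gamma^\mathrm{RHF})^{1/2}(H-\mu)(\gamma^\mathrm{RHF})^{1/2}\sqrt f\,\big]\ =\ \tr\!\big[(H-\mu)\tfrac12\{f,\gamma^\mathrm{RHF}\}\big],
\]
the last equality because $H$ commutes with $\gamma^\mathrm{RHF}$. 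Now $\tfrac12\{f,\gamma^\mathrm{RHF}\}$ has density exactly $f\rho^\mathrm{RHF}$, and $\tr[(-\tfrac12\Delta)\tfrac12\{f,\gamma^\mathrm{RHF}\}]\ge-\tfrac14\int(\Delta f)\rho^\mathrm{RHF}$; together with $-\mu\int f\rho^\mathrm{RHF}\ge 0$ this yields $\int(-\Phi^\mathrm{RHF})\,f\,\rho^\mathrm{RHF}\le\tfrac14\int(\Delta f)\rho^\mathrm{RHF}$.

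Next I would choose the molecular Lieb multiplier $f=\phi^{-1}\chi^2$, where $\chi$ is smooth with $\chi\equiv 1$ on $A_{R_0}$, $\operatorname{supp}\chi\subseteq A_{R_0/3}$ and $|\nabla\chi|\le CR_0^{-1}$; the condition $\int f\rho^\mathrm{RHF}<\infty$ is met because $\rho^\mathrm{RHF}$ decays fast while $\phi^{-1}$ grows at most linearly. The relevant features of $g\coloneqq\phi^{-1}$ are: (i) $g$ is subharmonic on $\R^3\setminus\{R_1,\dots,R_K\}$, where $\Delta g=2\phi^{-3}|\nabla\phi|^2\ge 0$; (ii) $g(x)\ge|x-R_j|$ for $x\in\Gamma_j$, because there $\phi(x)\le\big(\sum_i\mu_i\big)|x-R_j|^{-1}=|x-R_j|^{-1}$; (iii) $\phi\,g=1$, so $\phi f=\chi^2\le 1$. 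Using (i), the estimate $R_0\le 1$, $z_j/Z\ge\delta/K$, and $\operatorname{supp}\nabla\chi\subseteq\{R_0/3\le|x-R_j|\le R_0\}$, one controls $\tfrac14\int(\Delta f)\rho^\mathrm{RHF}\le CR_0^{-2}\int_{A_{R_0/3}}\rho^\mathrm{RHF}$.

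Then I would introduce the screening $-\Phi^\mathrm{RHF}=W-\Phi^\mathrm{RHF}_{R_0/3}$ with $W(x)\coloneqq\int_{A_{R_0/3}}\rho^\mathrm{RHF}(y)|x-y|^{-1}\,dy\ge 0$, so the inequality of the first paragraph becomes $\int W f\rho^\mathrm{RHF}\le\int\Phi^\mathrm{RHF}_{R_0/3}f\rho^\mathrm{RHF}+CR_0^{-2}\int_{A_{R_0/3}}\rho^\mathrm{RHF}$. For the first term on the right, on $\operatorname{supp}\chi\subseteq A_{R_0/3}$ one has $[\Phi^\mathrm{RHF}_{R_0/3}(x)]_+\le\phi(x)\sup_{A_{R_0/3}}\phi^{-1}[\Phi^\mathrm{RHF}_{R_0/3}]_+$, so by (iii) it is $\le\big(\sup_{A_{R_0/3}}\phi^{-1}[\Phi^\mathrm{RHF}_{R_0/3}]_+\big)\int_{A_{R_0/3}}\rho^\mathrm{RHF}$. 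For the left term, $f=\phi^{-1}$ on $A_{R_0}$, hence
\[
\int W f\rho^\mathrm{RHF}\ \ge\ \tfrac12\iint_{A_{R_0}\times A_{R_0}}\frac{\rho^\mathrm{RHF}(x)\,\rho^\mathrm{RHF}(y)\big(\phi(x)^{-1}+\phi(y)^{-1}\big)}{|x-y|}\,dx\,dy,
\]
and keeping only the diagonal Voronoi blocks $x,y\in\Gamma_j$, on which $\phi(x)^{-1}+\phi(y)^{-1}\ge|x-R_j|+|y-R_j|\ge|x-y|$ by (ii), this is $\ge\tfrac12\sum_{j=1}^K\big(\int_{\Gamma_j\cap A_{R_0}}\rho^\mathrm{RHF}\big)^2\ge\tfrac1{2K}\big(\int_{A_{R_0}}\rho^\mathrm{RHF}\big)^2$ by Cauchy--Schwarz. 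Combining the three bounds gives the assertion, with $C$ depending only on $K,\delta,c_0$.

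The step I expect to be the main obstacle is dealing with the non-commutativity of $\gamma^\mathrm{RHF}$ and the weight $f$: one cannot literally insert $f$ into the density, and the remedy is to replace $(\gamma^\mathrm{RHF})^{1/2}f(\gamma^\mathrm{RHF})^{1/2}$ by the symmetrized operator $\tfrac12\{f,\gamma^\mathrm{RHF}\}$ --- which is legitimate in the trace against $H-\mu$ precisely because $H$ commutes with $\gamma^\mathrm{RHF}$ --- at the sole price of the error $\tfrac14\int(\Delta f)\rho^\mathrm{RHF}$, whose control relies on the subharmonicity of $\phi^{-1}$ off the nuclei. A secondary difficulty, responsible for the (presumably non-optimal) $K$-dependence of the constant, is that because the nuclei may be arbitrarily far apart the inequality $\phi^{-1}(x)+\phi^{-1}(y)\ge|x-y|$ only survives within a single Voronoi cell, forcing the passage to diagonal blocks and the Cauchy--Schwarz loss of a factor $K$.
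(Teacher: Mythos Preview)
Your proposal is correct and follows the same overall strategy as the paper---Lieb's method with the molecular multiplier $\phi^{-1}$ localized away from the nuclei---but the two proofs diverge at the Coulomb self-interaction lower bound, and the paper's treatment there is sharper.

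You restrict the symmetrized double integral to diagonal Voronoi blocks $\Gamma_j\times\Gamma_j$, use $\phi^{-1}(x)+\phi^{-1}(y)\ge|x-R_j|+|y-R_j|\ge|x-y|$ on each block, and then pay a factor $K$ via Cauchy--Schwarz. The paper does not attempt the pointwise inequality $\phi^{-1}(x)+\phi^{-1}(y)\ge|x-y|$ globally (you are right that it fails across cells); instead it writes
\[
\phi(x)^{-1}+\phi(y)^{-1}=\sum_{j}\mu_j\,\frac{|x-R_j|+|y-R_j|}{\phi(x)\phi(y)\,|x-R_j||y-R_j|}
\;\ge\;|x-y|\sum_{j}\frac{\mu_j}{\phi(x)\phi(y)\,|x-R_j||y-R_j|},
\]
so that the double integral factorizes as $\sum_j\mu_j\big(\int\rho_0(x)\,(\phi(x)|x-R_j|)^{-1}dx\big)^2$, and then Jensen's inequality together with the partition-of-unity identity $\sum_j\mu_j(\phi(x)|x-R_j|)^{-1}=1$ gives $\big(\int\rho_0\big)^2$ with no $K$-loss. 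This is worth knowing: it makes this particular step $K$-uniform, even though the final constant $C_K$ depends on $K$ elsewhere.

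Two minor remarks on your write-up. First, for the kinetic error you invoke the subharmonicity $\Delta g\ge0$, but that sign points the wrong way for the \emph{upper} bound you need on $\tfrac14\int(\Delta f)\rho$; what you actually use is the explicit formula $\Delta g=2\phi^{-3}|\nabla\phi|^2$ together with $|\nabla\phi|/\phi\le \min_j|x-R_j|^{-1}$ and $\phi^{-1}\le(K/\delta)\min_j|x-R_j|$, which indeed gives $\Delta g\le C R_0^{-1}\le C R_0^{-2}$ on $A_{R_0/3}$. Second, the paper handles the kinetic term via the IMS formula applied to $\eta=\phi^{-1/2}\theta_0$ rather than your anticommutator identity; the two are equivalent at the level of estimates.
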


\begin{proof}
The reduced Hartree-Fock minimizer $\gamma^\mathrm{RHF} = \sum_{i=1}^\infty \lambda_i \ket{u_i} \bra{u_i}$ satisfies the RHF equation $H_{\gamma^\mathrm{RHF}} u_i = \epsilon_i u_i$ with $\epsilon_i \le 0$ (see~\cite[Theorem 1]{SolovejRHF}).
Here $H_{\gamma^\mathrm{RHF}}$ is defined by
\begin{align*}
H_{\gamma^\mathrm{RHF}} = -\frac{1}{2} \Delta -V_Z(x) + \rho^\mathrm{RHF} \star |x|^{-1}.
\end{align*}
Now we use Lieb's method in~\cite{Lieb1984}.
By the RHF equation, we have
\begin{align*}
0 &\ge \sum_{i=1}^\infty  \epsilon_i \lambda_i \int |u_i(x)|^2 \phi(x)^{-1} \theta_0(x)^2\, dx \\
&= \sum_{i=1}^\infty  \frac{\lambda_i}{2} \int \nabla (u_i(x)^\ast \phi(x)^{-1} \theta_0(x)^2) \cdot \nabla u_i(x) dx
- \int \rho_0V_Z \phi^{-1} \\
&\quad+ \iint \frac{\rho^\mathrm{RHF}(x) \rho^\mathrm{RHF}(y)}{|x-y|} \phi(x)^{-1} \theta_0(x)^2 \, dx dy.
\end{align*}

Next, we use the next proposition.
\begin{proposition}[The IMS formula~{\cite[Lemma 2.4]{SolovejIC}}]
For $u \in H^1(\R^3)$ and $\eta \in C^1(\R^3)$ satisfying $\|\nabla \eta \|_\infty \le C$ we have
\[
\mathrm{Re} \int \nabla (\eta^2 u^\ast) \cdot \nabla u = \int |\nabla u|^2 - \int |\nabla \eta|^2|u|^2.
\]
\end{proposition}
Then we deduce that
\[
\int \nabla (u_i(x)^\ast \phi(x)^{-1} \theta_0(x)^2) \cdot \nabla u_i(x) dx
=  \int |\nabla (u_i(x)\phi(x)^{-1/2} \theta_0(x)|^2 dx - \int |\nabla(\theta_0 \phi^{-1/2})|^2|u_i|^2.
\]
By definition, $|\nabla \theta_0 \phi^{-1/2}|^2 \le CR_0^{-2}$ holds.
Hence
\[
\int \nabla (u_i(x)^\ast \phi(x)^{-1} \theta_0(x)^2) \cdot \nabla u_i(x) dx
\ge -\frac{C}{R_0^2} \int_{A_{R_0}} |u_i(x)|^2 \, dx.
\]
We note from the triangle inequality that
\[
\phi(x)^{-1} + \phi(y)^{-1} = \sum_{j=1}^K \mu_j\frac{|x-R_j| + |y-R_j|}{\phi(x)\phi(y)|x-R_j||y-R_j|}
\ge \sum_{j=1}^K \frac{\mu_j|x-y|}{\phi(x)\phi(y)|x-R_j||y-R_j|}.
\]
Then it holds that
\begin{align*}
\iint \frac{\rho^\mathrm{RHF}  (x) \rho^\mathrm{RHF}  (y)  }{|x-y|}& \phi(x)^{-1} \theta_0(x)^2 \, dx dy \\
&= \iint \frac{\rho^\mathrm{RHF}  (x)\rho^\mathrm{RHF}  (y)  }{|x-y|} \phi(x)^{-1} (1-\theta_0(y)^2)\theta_0(x)^2 \, dx dy \\
&\quad + \frac{1}{2} \iint \frac{\rho^\mathrm{RHF} (x) \rho^\mathrm{RHF} (y) }{|x-y|} (\phi(x)^{-1} + \phi(y)^{-1}) \theta_0(y)^2\theta_0(x)^2 \, dx dy \\
&\ge  \iint \frac{\rho^\mathrm{RHF} (x)\rho^\mathrm{RHF} (y)  }{|x-y|} \phi(x)^{-1} (1-\theta_0(y)^2)\theta_0(x)^2 \, dx dy \\
&\quad + \frac{1}{2}\sum_{j=1}^K\mu_j\left(  \iint \frac{\rho_0(x) dx }{\phi(x)|x-R_j|}\right)^2.
\end{align*}
Furthermore, we may estimate
\begin{align*}
 \iint &\frac{\rho^\mathrm{RHF} (x)\rho^\mathrm{RHF} (y) }{|x-y|} \phi(x)^{-1} (1-\theta_0(y)^2)\theta_0(x)^2 \, dx dy \\
 &\ge \sum_{j=1}^K  \iint_{|y-R_j| < R_0} \frac{\rho^\mathrm{RHF} (x)\rho^\mathrm{RHF} (y) }{|x-y|} \phi(x)^{-1} \theta_0(x)^2 \, dx dy.
\end{align*}
These estimates lead to that
\begin{align*}
0 &\ge -\frac{C}{R_0^2} \int_{A_{R_0}}\rho^\mathrm{RHF} \, dx - C\int \rho_0(x)\phi(x)^{-1} \Phi^\mathrm{RHF}_{R_0/2}(x)\\
&\quad +\frac{1}{2}\sum_{j=1}^K\mu_j\left(  \int \frac{\rho_0(x) dx }{\phi(x)|x-R_j|}\right)^2.
\end{align*}
Furthermore, by the convexity, we deduce from $\sum_{j=1}^K\mu_j(\phi(x)|x-R_j|)^{-1} = 1$ that
\[
\sum_{j=1}^K \mu_j \left(  \int \frac{\rho_0(x) dx }{\phi(x)|x-R_j|}\right)^2
 \ge \left(  \int \rho_0(x) dx \right)^2.
\] 
Together with these estimates, we have
\begin{align*}
 \left(  \int_{A_{R_0}} \rho_0(x) dx \right)^2 &\le \frac{C}{R_0^2} \int_{A_{R_0}}\rho^\mathrm{RHF}(x) \, dx  \\
 &\quad +C\int_{A_{R_0}} \rho^\mathrm{RHF}(x)\phi(x)^{-1} [\Phi^\mathrm{RHF}_{R_0}(x)]_+\, dx.
 \end{align*}
 Hence we arrive at
 \begin{align*}
  \frac{1}{2} \left(  \int_{A_{(1+\lambda)R_0}} \rho^\mathrm{RHF}(x) dx \right)^2 
  &\le  \frac{C}{R_0^2} \int_{A_{R_0}}\rho^\mathrm{RHF} \, dx \\
&\quad+ C \sup_{x \in A_{R_0}} \phi(x)^{-1} \left[ \Phi^\mathrm{RHF}_{R_0}(x) \right]_+ \int_{A_{R_0}}\rho^\mathrm{RHF}.
 \end{align*}
 Replacing $R_0$ to $(1+\lambda)^{-1}R_0$ and choosing $\lambda = 1/2$, we have the claim.
\end{proof}
Following, we will use the cut-off function
\[
\chi_{r}^+ = \1_{A_r}
\]
and a smooth function $\eta_{r} \colon \R^3 \to [0, 1]$ satisfying
\[
\chi_{r}^+ \ge \eta_{r} \ge \chi_{(1+\lambda)r}^+, \quad |\nabla \eta_{r}| \le C(\lambda r)^{-1}.
\]
The next lemma is a modification of~\cite[Lemma 7]{mullerIC} and ~\cite[Lemma 5]{Kehle2017}.
\begin{lemma}
\label{outside}
For all $r \in (0, R_0]$, $s > 0$, and for all $\lambda \in (0, 1/2]$ we have
\begin{align*}
 \int_{A_r} \rho^\mathrm{RHF}(x) dx 
&\le
C\sum_{j=1}^K \int_{r \le|x-R_j|<(1+\lambda)r} \rho^\mathrm{RHF} (x) \, dx\\
&\quad+  C\left(\sup_{x \in A_r} \phi(x)^{-1}[\Phi_{r}^\mathrm{RHF}(x)]_+  + s + (\lambda^2 s)^{-1}  +\lambda^{-1} +  \frac{1}{R_0^2}\right)  \\
&\quad+C\left(s^2 \tr (- \Delta \eta_{r} \gamma^\mathrm{RHF} \eta_{r}) \right)^{3/5}.
\end{align*}
\end{lemma}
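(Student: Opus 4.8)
The plan is to run a localized version of Lieb's weighted test function argument --- the one used in the proof of Lemma~\ref{theorem:outside}, but now adapted to the ball system $A_r$ with $r\le R_0$ --- and then to feed the resulting localization error into a localized Lieb--Thirring estimate; the only delicate point is to keep every error term bounded uniformly in $r$, since the natural cut-off $\eta_r$ satisfies merely $|\nabla\eta_r|\sim(\lambda r)^{-1}$. First, since $\eta_r\equiv 1$ on $A_{(1+\lambda)r}$ and $A_r\setminus A_{(1+\lambda)r}\subset\bigcup_{j=1}^K\{r\le|x-R_j|<(1+\lambda)r\}$,
\[
\int_{A_r}\rho^\mathrm{RHF}\ \le\ \sum_{j=1}^K\int_{r\le|x-R_j|<(1+\lambda)r}\rho^\mathrm{RHF}\ +\ \int\eta_r^2\,\rho^\mathrm{RHF},
\]
so it suffices to estimate $\int\eta_r^2\rho^\mathrm{RHF}$ by the remaining terms on the right-hand side.

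\emph{The quadratic inequality.} I would test the RHF equation $H_{\gamma^\mathrm{RHF}}u_i=\epsilon_iu_i$ against $g^{1/2}u_i$ with the weight $g:=\eta_r^2\phi^{-1}\ge 0$, which is supported in $A_r$. Using $\epsilon_i\le 0$, $0\le\lambda_i\le 1$, the IMS formula (the Proposition above) with $\eta=g^{1/2}=\eta_r\phi^{-1/2}$, summing against $\lambda_i$, and discarding the nonnegative localized kinetic term, one gets
\[
\iint\frac{\rho^\mathrm{RHF}(x)\rho^\mathrm{RHF}(y)}{|x-y|}\,\eta_r(x)^2\phi(x)^{-1}\,dx\,dy\ \le\ \int\eta_r^2\phi^{-1}\,V_Z\,\rho^\mathrm{RHF}\ +\ \tfrac12\int|\nabla(\eta_r\phi^{-1/2})|^2\rho^\mathrm{RHF}.
\]
On $\operatorname{supp}g\subset A_r$ we write $V_Z(x)=\Phi^\mathrm{RHF}_{r}(x)+\int_{A_r^c}\rho^\mathrm{RHF}(y)|x-y|^{-1}dy$, so the $y\in A_r^c$ part of the double integral cancels the screening part of $\int\eta_r^2\phi^{-1}V_Z\rho^\mathrm{RHF}$, leaving
\[
\iint_{y\in A_r}\frac{\rho^\mathrm{RHF}(x)\rho^\mathrm{RHF}(y)}{|x-y|}\,\eta_r(x)^2\phi(x)^{-1}\,dx\,dy\ \le\ \int\eta_r^2\phi^{-1}\,\Phi^\mathrm{RHF}_{r}\,\rho^\mathrm{RHF}\ +\ \tfrac12\int|\nabla(\eta_r\phi^{-1/2})|^2\rho^\mathrm{RHF}.
\]
Bounding the left-hand side from below exactly as in the proof of Lemma~\ref{theorem:outside} --- $\1_{A_r}(y)\ge\eta_r(y)^2$, symmetrization in $x,y$, the triangle inequality $\phi(x)^{-1}+\phi(y)^{-1}\ge\sum_j\mu_j|x-y|\big(\phi(x)\phi(y)|x-R_j||y-R_j|\big)^{-1}$, and Jensen together with $\sum_j\mu_j(\phi(x)|x-R_j|)^{-1}=1$ --- gives $\tfrac12(\int\eta_r^2\rho^\mathrm{RHF})^2$; on the right, $\int\eta_r^2\phi^{-1}\Phi^\mathrm{RHF}_{r}\rho^\mathrm{RHF}\le(\sup_{A_r}\phi^{-1}[\Phi^\mathrm{RHF}_{r}]_+)\int\eta_r^2\rho^\mathrm{RHF}$. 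Solving the quadratic inequality for $X:=\int\eta_r^2\rho^\mathrm{RHF}$ yields
\[
\int\eta_r^2\rho^\mathrm{RHF}\ \le\ 2\sup_{x\in A_r}\phi(x)^{-1}\big[\Phi^\mathrm{RHF}_{r}(x)\big]_+\ +\ \Big(\int|\nabla(\eta_r\phi^{-1/2})|^2\rho^\mathrm{RHF}\Big)^{1/2}.
\]

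\emph{The localization error.} It remains to control $\int|\nabla(\eta_r\phi^{-1/2})|^2\rho^\mathrm{RHF}$, split via $|\nabla(\eta_r\phi^{-1/2})|^2\le 2|\nabla\eta_r|^2\phi^{-1}+\tfrac12\eta_r^2\phi^{-3}|\nabla\phi|^2$. The first piece is supported in $\bigcup_j\{r<|x-R_j|\le(1+\lambda)r\}$, where $\phi^{-1}\le(1+\lambda)r/\mu_j\lesssim_{K,\delta}r$ (using $z_\mathrm{min}\ge\delta z_\mathrm{max}$); thus, although $|\nabla\eta_r|^2\sim(\lambda r)^{-2}$, the product $|\nabla\eta_r|^2\phi^{-1}$ is only $\lesssim_{K,\delta}(\lambda^2r)^{-1}$ there, and measuring $\rho^\mathrm{RHF}$ in $L^{5/3}$ on these annuli (localized Lieb--Thirring relating $\int_{\operatorname{supp}\nabla\eta_r}(\rho^\mathrm{RHF})^{5/3}$ to $\tr(-\Delta\,\eta_r\gamma^\mathrm{RHF}\eta_r)$), together with a splitting of $\rho^\mathrm{RHF}$ at a threshold proportional to a power of $s$ and Young's inequality, produces --- after the square root and using $r\le 1$ --- the terms $s$, $(\lambda^2s)^{-1}$, $\lambda^{-1}$ and $\big(s^2\tr(-\Delta\,\eta_r\gamma^\mathrm{RHF}\eta_r)\big)^{3/5}$. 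The second piece is handled by the explicit bound $\phi^{-3}|\nabla\phi|^2\lesssim_{K,\delta}(\min_j|x-R_j|)^{-1}\le\sum_j|x-R_j|^{-1}$, Cauchy--Schwarz and Hardy's inequality combined with the Hoffmann--Ostenhof inequality (absorbing the remaining singular factor into $\tr(-\Delta\,\eta_r\gamma^\mathrm{RHF}\eta_r)$), and $r\le R_0$, which contribute the $R_0^{-2}$ and the remaining constant terms. Combined with the reduction step, this is the claim.

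\emph{Main obstacle.} The hard part is this last step, namely getting the localization error with \emph{no negative power of $r$}. For the $\nabla\eta_r$-part this is exactly why Lieb's weight $\phi^{-1}$ is used: it is of size $\sim r$ precisely on the transition annuli and so compensates the $(\lambda r)^{-2}$ from $|\nabla\eta_r|^2$ once $\rho^\mathrm{RHF}$ is put in $L^{5/3}$ and bounded by the localized kinetic energy; for the $\nabla\phi$-part one cannot rely on smallness of the weight and must instead trade the singular factor $(\min_j|x-R_j|)^{-1}$ against the kinetic energy through a Hardy/Hoffmann--Ostenhof estimate. Choosing the threshold and balancing the free parameter $s$ against $\tr(-\Delta\,\eta_r\gamma^\mathrm{RHF}\eta_r)$ --- which itself will only be controlled later, through the comparison with Thomas--Fermi theory --- so as to land on exactly the stated combination of terms is where the real work lies.
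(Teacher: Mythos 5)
Your route is genuinely different from the paper's. The paper does \emph{not} prove this lemma by Lieb's weighted test\--function method on $A_r$; it uses the binding inequality together with a sliding\--slab (``moving plane'') localization $\chi_j^{(i)}(x)=g_i\bigl((\nu\cdot h_j(x)-l)/s\bigr)$, derives the fundamental inequality (\ref{fundaineq}) for each slab position $l$ and direction $\nu$, and then integrates over $l$ and averages over $\nu\in\S^2$ to reconstruct the Coulomb kernel; Lieb's method (Lemma~\ref{theorem:outside}) is reserved for the far region $A_{R_0}$ only. In that construction the free parameter $s$ is the \emph{slab width}: the terms $s$ and $(\lambda^2 s)^{-1}$ come from the $l$\--integral of the slab localization error, and the term $\bigl(s^2\tr(-\Delta\eta_r\gamma^{\mathrm{RHF}}\eta_r)\bigr)^{3/5}$ comes from the leftover $CsD[\chi_r^+\rho^{\mathrm{RHF}}]$ via Hardy--Littlewood--Sobolev, Lieb--Thirring and the final algebraic step. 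Your argument has no object playing the role of $s$, and this is where the gap is.

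Concretely: your quadratic inequality is sound (the symmetrization, triangle inequality and Jensen steps carry over verbatim from Lemma~\ref{theorem:outside}), and with $X=\int\eta_r^2\rho^{\mathrm{RHF}}$ it yields
$X\le 2\sup_{A_r}\phi^{-1}[\Phi_r^{\mathrm{RHF}}]_+ +\bigl(\int|\nabla(\eta_r\phi^{-1/2})|^2\rho^{\mathrm{RHF}}\bigr)^{1/2}$. But the localization error then produces $(\lambda^2 r)^{-1}$ (from $|\nabla\eta_r|^2\phi^{-1}\lesssim\lambda^{-2}r^{-1}$ on the transition annuli) and, via your Hardy/Hoffmann--Ostenhof treatment of the $\nabla\phi$ piece, a term $\bigl(\tr(-\Delta\eta_r\gamma^{\mathrm{RHF}}\eta_r)\bigr)^{1/2}X^{1/2}$, hence $\tr^{1/3}$ after solving for $X$ --- not a constant, and not $(s^2\tr)^{3/5}$. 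No ``threshold splitting plus Young'' converts these into the stated right-hand side for \emph{all} $s>0$: e.g.\ $\tr^{1/3}\le\epsilon(s^2\tr)^{3/5}+C_\epsilon s^{-3/2}$ and $\lambda^{-4/5}r^{1/10}\tr^{3/10}\le\tfrac12(s^2\tr)^{3/5}+\tfrac12\lambda^{-8/5}r^{1/5}s^{-6/5}$ leave residues ($s^{-3/2}$, or $(\lambda^2r)^{-1}$ when $s\gg r$) that are not dominated by $s+(\lambda^2s)^{-1}+\lambda^{-1}+R_0^{-2}$ --- take $s$ of order one and $r$ small. There is also a secondary issue: $\int_{\mathrm{supp}\,\nabla\eta_r}(\rho^{\mathrm{RHF}})^{5/3}$ is not controlled by $\tr(-\Delta\eta_r\gamma^{\mathrm{RHF}}\eta_r)$, since $\eta_r<1$ there. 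To be fair, what your method \emph{does} give, namely $\int_{A_r}\rho^{\mathrm{RHF}}\le C\bigl(\sup_{A_r}\phi^{-1}[\Phi_r^{\mathrm{RHF}}]_++(\lambda^2r)^{-1}+\sum_j\int_{r\le|x-R_j|<(1+\lambda)r}\rho^{\mathrm{RHF}}\bigr)$, would likely suffice for the paper's applications (Lemma~\ref{lem.ite} only uses $s=r$), and is an interesting simplification worth checking carefully --- but it is not the lemma as stated, and your claim to recover exactly the stated combination of terms is not substantiated.
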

\begin{proof}
As \cite[Corollary 1]{Kehle2017}, we can obtain the binding inequality
\[
E^\mathrm{RHF}(N, Z) \le E^\mathrm{RHF}(N-M, Z) + E^\mathrm{RHF}(M, 0) \quad \text{for any } M>0.
\]
For fixed $\lambda \in (0, 1/2]$, and any $s, l > 0$, $\nu \in \S^2$ we choose
\[
\chi^{(i)}_j (x) = g_i \left(\frac{\nu \cdot h_j(x) - l}{s}\right), \quad i=1, 2.
\]
Here $g_i \colon \R \to \R$ satisfy
\[
g_1^2 + g_2^2 = 1, \quad g_1(t) = 1 \, \text{ if } \, t \le 0, \quad \mathrm{supp} \, g_1 \subset \{t \le 1\}, \quad |\nabla g_1| + |\nabla g_2| \le C,\]
and $h_j \colon \R^3 \to \R^3$ is the function with $|h_j(x)| \le |x-R_j|,$ $h_j(x) = 0$  if  $|x-R_j| \le r$; $h_j(x) = x-R_j$  if  $|x-R_j| \ge (1+\lambda)r,$ and $|\nabla h_j(x)| \le C\lambda^{-1},$  $j=1, \dots, K$.
We denote $\gamma_j^{i} \coloneqq \chi^{i}_j \gamma_j \chi^{i}_j$ for $j=1, \dots, K$ and $i=1, 2$, where $\gamma_j$ is as in Definition~\ref{localize}.
We note that the supports of $\gamma_j$, $j=1, \dots, K,$  are mutually disjoint by definitions.
Then, by using the IMS formula, we have
\begin{align*}
\E^\mathrm{RHF}(\gamma) &\le \E^\mathrm{RHF}\left(\sum_{j=1}^K\gamma_j^{(1)} \right) + \E^\mathrm{RHF}_{V_Z=0}( \gamma_0)+ \sum_{j=1}^K\E^\mathrm{RHF}_{V_Z=0}(\gamma_j^{(2)}) \\
&= \sum_{j=1}^K \sum_{i=1,2}\E^\mathrm{RHF}(\gamma_j^{(i)}) + \E^\mathrm{RHF}(\gamma_0)  +\sum_{1\le i<j\le K}2D(\rho_{\gamma_i^{(1)}}, \rho_{\gamma_j^{(1)}})  \\
&\quad+ \sum_{j=1}^K\tr(V_Z\gamma_j^{(2)}) + \tr(V_Z\gamma_0)\\
&=
\sum_{j=0}^K \E^\mathrm{RHF}(\gamma_j) +\sum_{1\le i<j\le K}2D(\rho_{\gamma_i^{(1)}}, \rho_{\gamma_j^{(1)}}) + \sum_{j=1}^K\tr(V_Z\gamma_j^{(2)}) 
+\tr(V_Z\gamma_0)\\
&\quad + \sum_{j=1}^K\left( \sum_{i=1,2}\int |\nabla \chi_j^{(i)}|^2\rho_j-\iint \frac{\chi_j^{(2)}(x)^2\rho_j(x)\rho_j(y)\chi_j^{(1)}(y)^2}{|x-y|} \, dxdy \right).
\end{align*}
Again by the IMS formula, we arrive at
\begin{align*}
0 &\le \sum_{1\le i<j\le K}2D(\rho_{\gamma_i^{(1)}}, \rho_{\gamma_j^{(1)}}) + \sum_{j=1}^K\tr(V_Z\gamma_j^{(2)}) +  \tr(V_Z\gamma_0) \\
&\quad + \sum_{j=1}^K\left( \sum_{i=1,2}\int |\nabla \chi_j^{(i)}|^2\rho_j-\iint \frac{\chi_j^{(2)}(x)^2\rho_j(x)\rho_j(y)\chi_j^{(1)}(y)^2}{|x-y|} \, dxdy \right) \\
&\quad  +\sum_{j=0}^K\int |\nabla \theta_j|^2\rho^\mathrm{RHF} -\sum_{j=1}^K 2D(\rho_0, \rho_j)  -\sum_{1\le i < j \le K}2D(\rho_i, \rho_j).
\end{align*}
By constructions, we obtain
\[
2D(\rho_{\gamma_i^{(1)}}, \rho_{\gamma_j^{(1)}})  -2D(\rho_i, \rho_j) \le -2D(\rho_{\gamma_i^{(1)}}, \rho_{\gamma_j^{(2)}}) -2D(\rho_{\gamma_i^{(2)}}, \rho_{\gamma_j^{(1)}}),
\]
and
\[
\sum_{i=1, 2}\int |\nabla \chi_j^{(i)}|^2 \rho_j \le C(1 + (\lambda s)^{-2}) \int_{\nu \cdot h_j(x) - s \le l \le \nu \cdot h_j (x)} \rho_j(x) dx.
\]
We note that 
\begin{align*}
\tr(V_Z\gamma_0)  -\sum_{j=1}^K 2D(\rho_0, \rho_j)
&\le \int_{\R^3} \rho_0(x) \Phi_{R_0}^\mathrm{RHF}(x) \, dx.
\end{align*}
Then it follows that for all $j$
\begin{align*}
 &\int V_Z(x)\chi_j^{(2)}(x)^2 \rho_j(x)\, dx
- \sum_{i=1}^K\iint \frac{\chi_j^{(2)}(x)^2 \rho_j(x) \rho_i(y)\chi_i^{(1)}(y)^2}{|x-y|} \, dxdy\\
&\le \int \chi_j^{(2)}(x)^2 \rho_j(x) \Phi^\mathrm{RHF}_{r}(x) dx - \iint_{|y-R_j| \ge r} \frac{\chi_j^{(2)}(x)^2 \rho_j(x) \rho_j(y)\chi_j^{(1)}(y)^2}{|x-y|} \, dxdy\\
&\le
\int_{l \le \nu \cdot h_j(x)}\rho_j(x) [ \Phi^\mathrm{RHF}_{r}(x) ]_+ dx 
- \iint_{\substack{\nu \cdot h_j(y) \le l \le \nu \cdot h_j(x) -s}} \chi_r^+(y) \frac{\rho_j(x) \rho_j(y)}{|x-y|} \, dxdy. \\
\end{align*}
Since $h_j(x) = x-R_j$ when $|x-R_j| > (1+\lambda)r$, we get
\begin{align*}
&\iint_{\substack{\nu \cdot h_j(y) \le l \le \nu \cdot h_j(x) -s}} \chi_{r}^+(y) \frac{\rho_j(x) \rho_j(y)}{|x-y|} \, dxdy\\
&\quad \ge
\iint_{\substack{\nu \cdot (y-R_j) \le l \le \nu \cdot (x-R_j) -s}} \chi_{(1+\lambda)r}^+(y) \chi_{(1+\lambda)r}^+(x) \frac{\rho_j(x) \rho_j(y)}{|x-y|} \, dxdy.
\end{align*}
With  these inequality, we have that
\begin{equation}
\label{fundaineq}
\begin{split}
\sum_{j=1}^K&\iint_{\substack{\nu \cdot (y-R_j) \le l \le \nu \cdot (x-R_j) -s}} \chi_{(1+\lambda)r}^+(y)\chi_{(1+\lambda)r}^+(x) \frac{\rho_j(x) \rho_j(y)}{|x-y|} \, dxdy\\
&\le
C\sum_{j=1}^K\bigg[ (1 + (\lambda s)^{-2}) \int_{\nu \cdot h_j(x) - s \le l \le \nu \cdot h_j (x)} \rho_j(x) dx
+ \int_{l \le \nu \cdot h_j(x)}\rho_j(x) [\Phi_{r}^\mathrm{RHF}(x)]_+ dx\bigg]  \\
&\qquad \qquad +\frac{C}{R_0^2} \int_{A_{R_0}}\rho^\mathrm{RHF}
+ \int_{\R^3} \rho_0(x) \Phi_{R_0}^\mathrm{RHF}(x) \, dx.
\end{split}
\end{equation}
for all $s, l > 0$ and $\nu \in \S^2$.
Now we integrate (\ref{fundaineq}) over $R_0 > l > 0$, then average over $\nu \in \S^2$ and use
\[
\int_{\S^2} [\nu \cdot x]_+ \frac{d \nu}{4 \pi} = \frac{|x|}{4}, \quad \text{for all } x \in \R^3.
\]
For the left side, we also use Fubini's theorem and
\[
\int_0^\infty \left( \1(b \le l \le a-s) + \1(-a \le l \le -b-s) \right)\, dl
\ge \left[ [a-b]_+ - 2s\right]_+
\]
with $a= \nu \cdot (x-R_j)$, $b= \nu \cdot (y-R_j)$.
For the right side, we use the fact that $\{x \colon \nu \cdot h_j(x) \ge l\} \subset \{x\colon |x-R_j| \ge r\}$ by construction.
We note that $|x-R_j| \le \phi(x)^{-1}$ on $r \le |x-R_j| \le  (1+\lambda)R_0$ and $R_0 \le \phi(x)^{-1}$ in $A_{R_0}$.
Together with these facts, we find that
\begin{equation*}
\begin{split}
\frac{1}{8} \sum_{j=1}^K&\left( \int_{(1+\lambda)r \le  |x-R_j|\le R_0}  \rho^\mathrm{RHF} \right)^2 \\
&\le C\bigg( \sup_{x \in A_r}\phi(x)^{-1}[ \Phi^\mathrm{RHF}_{r}(x) ]_+ + s + (\lambda^2 s)^{-1} +\frac{1}{R_0^2} \bigg) \int_{A_r}\rho^\mathrm{RHF}(x) dx \\
&\qquad \qquad \quad +  CsD\left[\chi^+_r \rho^\mathrm{RHF}\right].
\end{split}
\end{equation*}
For the left side, we use
\[
\left( \int_{(1+\lambda)r \le  |x-R_j|\le R_0}  \rho^\mathrm{RHF} \right)^2
\ge \frac{1}{2}\left(\int_{r \le  |x-R_j|\le R_0}  \rho^\mathrm{RHF} \right)^2 -\left( \int_{r \le|x-R_j|\le (1+ \lambda)r}  \rho^\mathrm{RHF}\right)^2.
\]
For the right side, by the Hardy-Littlewood-Sobolev inequality and the Lieb-Thirring inequality,
\begin{equation*}
\begin{split}
D[\chi^+_r \rho^\mathrm{RHF}]
&\le C\|\chi^+_r \rho^\mathrm{RHF} \|^2_{L^{6/5}} \\
&\le C\|\chi^+_r \rho^\mathrm{RHF}\|^{7/6}_{L^1} \|\chi^+_r \rho^\mathrm{RHF} \|^{5/6}_{L^{5/3}} \\
&\le C\|\chi^+_r \rho^\mathrm{RHF} \|^{7/6}_{L^1}\left( \tr (- \Delta \eta_{r} \gamma^\mathrm{RHF}\eta_{r}) \right)^{1/2}.
\end{split}
\end{equation*}
Hence, by Lemma~\ref{theorem:outside}, we have
\begin{equation*}
\begin{split}
&\left(\sum_{j=1}^K \int_{r \le |x-R_j| \le R_0}  \rho^\mathrm{RHF}\right)^2 +\left(  \int_{A_{ R_0}} \rho^\mathrm{RHF}(x) dx \right)^2 \\
&\le
C\sum_{j=1}^K\left( \int_{r <|x-R_j|<  (1+ \lambda)r} \rho^\mathrm{RHF} \right)^2\\
&\quad+  C\left(\sup_{x \in A_r} \phi(x)^{-1}[ \Phi^\mathrm{RHF}_{r}(x) ]_+  + s + (\lambda^2 s)^{-1} + \lambda^{-1} +\frac{1}{R_0^2}\right) \int_{A_r}  \rho^\mathrm{RHF} \\
&\quad+Cs \|\chi^+_r \rho^\mathrm{RHF} \|^{7/6}_{L^1}\left( \tr (- \Delta \eta_{r} \gamma^\mathrm{RHF} \eta_{r}) \right)^{1/2}.
\end{split}
\end{equation*}
Consequently, we arrive at
\begin{equation*}
\begin{split}
& \left(  \int_{A_r} \rho^\mathrm{RHF}(x) dx \right)^2 \\
&\le
C\sum_{j=1}^K\left( \int_{r \le |x-R_j|<  (1+ \lambda)r}  \rho^\mathrm{RHF}\right)^2\\
&\quad+  C\left(\sup_{x \in A_r} \phi(x)^{-1}[\Phi_{r}^\mathrm{RHF}(x)]_+  + s + (\lambda^2 s)^{-1} +  \lambda^{-1}+  \frac{1}{R_0^2}\right) \int_{A_r}  \rho^\mathrm{RHF} \\
&\quad+Cs \|\chi^+_r \rho^\mathrm{RHF} \|^{7/6}_{L^1}\left( \tr (- \Delta \eta_{r} \gamma^\mathrm{RHF}\eta_{r}) \right)^{1/2}.
\end{split}
\end{equation*}
We now use the fact that  for any $a, c_i, p_i > 0$ if $na^2 \le \sum_{i=1}^n c_i^{p_i} a^{2-p_i}$ then it follows that $a \le \sum_{i=1}^nc_i$ (see the last line in the proof of~\cite[Lemma 7]{mullerIC}).
Then the proof of Lemma \ref{outside} is complete.
\end{proof}


\section{Spliting outside from inside}
Our next task is to extend the conclusion of~\cite[Section 4]{mullerIC}.
We may choose
\[
\eta_-^2 + \eta_+^2 + \eta^2_{r} = 1
\]
with
\[
\mathrm{supp} \, \eta_- \subset A_r^c. \quad \mathrm{supp} \, \eta_+ \subset A_{(1-\lambda)r} \cap A_{(1+\lambda)r}^c,
\]
$\eta_- (x)=1$ if $x \in A_{(1-\lambda)r}^c$, and
\[
\sum_{\# = +, -, r}|\nabla \eta_\#|^2 \le C(\lambda r)^{-2}.
\]
Next, we introduce the screened RHF functional by
\[
\E_{r}^\mathrm{RHF}(\gamma) \coloneqq \tr \left[\left(-\frac{\Delta}{2} - \Phi^\mathrm{RHF}_{r}\right) \gamma \right] + D[\rho_{\gamma}].
\]
In this section, we will prove as follows.
\begin{lemma}
\label{frominside}
For all $r \in (0, R_0], \lambda \in (0, 1/2]$, and for any $0 \le \gamma \le 1$ satisfying
\[
\mathrm{supp} \, \rho_\gamma \subset A_r, \quad \tr \gamma \le \int_{A_r}\rho^\mathrm{RHF},
\]
it holds that
\[
\E_{r}^\mathrm{RHF}\left(\eta_{r} \gamma^\mathrm{RHF} \eta_{r} \right) \le \E_{r}^\mathrm{RHF}(\gamma) + \mathcal{R},
\]
where
\begin{equation}
\begin{split}
\mathcal{R} &\le C\left(1+ (\lambda r)^{-2} \right) \int_{A_{(1-\lambda)r} \cap A_{(1+\lambda)r}^c} \rho^\mathrm{RHF}
+ C\lambda r^3\sup_{x \in A_{(1-\lambda)r}}[\Phi_{(1-\lambda)r}^\mathrm{RHF}(x)]_+^{5/2}.
\end{split}
\end{equation}
\end{lemma}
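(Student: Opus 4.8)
The plan is to test $\gamma^{\mathrm{RHF}}$ against the trial density matrix $\widetilde\gamma\coloneqq\gamma_-+\gamma$, where I write $\gamma_\#\coloneqq\eta_\#\gamma^{\mathrm{RHF}}\eta_\#$ and $\rho_\#\coloneqq\rho_{\gamma_\#}=\eta_\#^2\rho^{\mathrm{RHF}}$ for $\#\in\{-,+,r\}$, and then to recast the resulting variational inequality in terms of $\E_r^{\mathrm{RHF}}$. First I would check that $\widetilde\gamma$ is admissible: since $\eta_-$ vanishes on $A_r$ while $\operatorname{supp}\rho_\gamma\subset A_r$, the operators $\gamma_-$ and $\gamma$ have orthogonal ranges, so $0\le\widetilde\gamma\le1$; and because $\eta_+^2+\eta_r^2=1-\eta_-^2=1$ on $A_r$,
\[
\tr\widetilde\gamma=\tr\gamma+\int\eta_-^2\rho^{\mathrm{RHF}}=\tr\gamma+N-\int(\eta_+^2+\eta_r^2)\rho^{\mathrm{RHF}}\le\tr\gamma+N-\int_{A_r}\rho^{\mathrm{RHF}}\le N .
\]
Since $E^{\mathrm{RHF}}(\,\cdot\,,Z)$ is non-increasing (by the binding inequality recalled in the proof of Lemma~\ref{outside}, using $E^{\mathrm{RHF}}(M,0)=0$), minimality of $\gamma^{\mathrm{RHF}}$ gives $\E^{\mathrm{RHF}}(\gamma^{\mathrm{RHF}})\le\E^{\mathrm{RHF}}(\widetilde\gamma)$, i.e.
\[
\E_r^{\mathrm{RHF}}(\gamma_r)-\E_r^{\mathrm{RHF}}(\gamma)\le\bigl[\E^{\mathrm{RHF}}(\widetilde\gamma)-\E_r^{\mathrm{RHF}}(\gamma)\bigr]-\bigl[\E^{\mathrm{RHF}}(\gamma^{\mathrm{RHF}})-\E_r^{\mathrm{RHF}}(\gamma_r)\bigr].
\]

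The next step is to compute the two brackets. For the first, using $\operatorname{supp}\rho_\gamma\subset A_r$ and $\rho_-=\eta_-^2\rho^{\mathrm{RHF}}\le\1_{A_r^c}\rho^{\mathrm{RHF}}$ (so $\1_{A_r^c}-\eta_-^2\ge0$),
\[
\E^{\mathrm{RHF}}(\widetilde\gamma)-\E_r^{\mathrm{RHF}}(\gamma)=\tr\!\bigl(-\tfrac12\Delta\gamma_-\bigr)-\int V_Z\rho_-+D[\rho_-]-2D\bigl(\rho_\gamma,(\1_{A_r^c}-\eta_-^2)\rho^{\mathrm{RHF}}\bigr)\le\tr\!\bigl(-\tfrac12\Delta\gamma_-\bigr)-\int V_Z\rho_-+D[\rho_-].
\]
For the second, applying the IMS formula $\tr(-\Delta\gamma^{\mathrm{RHF}})=\sum_\#\tr(-\Delta\gamma_\#)-\int\bigl(\sum_\#|\nabla\eta_\#|^2\bigr)\rho^{\mathrm{RHF}}$ to the kinetic term, splitting $V_Z$ and $D[\cdot]$ along $\rho^{\mathrm{RHF}}=\rho_-+\rho_++\rho_r$, and using $\eta_-^2+\eta_+^2-\1_{A_r^c}=\1_{A_r}-\eta_r^2$, one finds
\begin{align*}
\E^{\mathrm{RHF}}(\gamma^{\mathrm{RHF}})-\E_r^{\mathrm{RHF}}(\gamma_r)
={}&\tr\!\bigl(-\tfrac12\Delta\gamma_-\bigr)+\tr\!\bigl(-\tfrac12\Delta\gamma_+\bigr)-\tfrac12\!\int\!\Bigl(\sum_\#|\nabla\eta_\#|^2\Bigr)\rho^{\mathrm{RHF}}-\int V_Z(\rho_-+\rho_+)\\
&+D[\rho_-]+D[\rho_+]+2D(\rho_-,\rho_+)+2D\bigl(\rho_r,(\1_{A_r}-\eta_r^2)\rho^{\mathrm{RHF}}\bigr).
\end{align*}
On subtracting, the large (in general $Z$-dependent) interior quantities $\tr(-\tfrac12\Delta\gamma_-)$, $\int V_Z\rho_-$ and $D[\rho_-]$ cancel identically, and one is left with
\begin{align*}
\E_r^{\mathrm{RHF}}(\gamma_r)-\E_r^{\mathrm{RHF}}(\gamma)\le{}&-\tr\!\bigl(-\tfrac12\Delta\gamma_+\bigr)+\int V_Z\rho_+-2D(\rho_-,\rho_+)\\
&+\tfrac12\!\int\!\Bigl(\sum_\#|\nabla\eta_\#|^2\Bigr)\rho^{\mathrm{RHF}}-D[\rho_+]-2D\bigl(\rho_r,(\1_{A_r}-\eta_r^2)\rho^{\mathrm{RHF}}\bigr).
\end{align*}

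It then remains to bound the right-hand side by $\mathcal{R}$. I discard $-D[\rho_+]\le0$. For $-2D(\rho_r,(\1_{A_r}-\eta_r^2)\rho^{\mathrm{RHF}})$ I would take the partition of Section~3 to satisfy, in addition, $\operatorname{supp}\eta_r\subset A_r$ (on $A_r^c\cap A_{(1-\lambda)r}$ put $\eta_+^2=1-\eta_-^2$), which is compatible with the stated properties; then $\1_{A_r}-\eta_r^2\ge0$, so this term is $\le0$ too. This sign, not available for a generic partition, is precisely what lets the interior density $\rho_r$ drop out. Each $\eta_\#$ is locally constant off the shell $S\coloneqq A_{(1-\lambda)r}\cap A_{(1+\lambda)r}^c$ and $\sum_\#|\nabla\eta_\#|^2\le C(\lambda r)^{-2}$, so the localization error is at most $C(\lambda r)^{-2}\int_S\rho^{\mathrm{RHF}}$, which is the first term of $\mathcal{R}$. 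Finally, $-\tr(-\tfrac12\Delta\gamma_+)+\int V_Z\rho_+-2D(\rho_-,\rho_+)=-\tr(-\tfrac12\Delta\gamma_+)+\int\rho_+(V_Z-\rho_-\star|x|^{-1})$, and since $\rho_-\ge\1_{A_{(1-\lambda)r}^c}\rho^{\mathrm{RHF}}$ we have $V_Z-\rho_-\star|x|^{-1}\le\Phi_{(1-\lambda)r}^{\mathrm{RHF}}$; as $\operatorname{supp}\rho_+\subset S$ we may replace $[\Phi_{(1-\lambda)r}^{\mathrm{RHF}}]_+$ by $W\coloneqq[\Phi_{(1-\lambda)r}^{\mathrm{RHF}}]_+\1_S$, and then, using $0\le\gamma_+\le1$ and the Lieb--Thirring inequality,
\[
-\tr\!\bigl(-\tfrac12\Delta\gamma_+\bigr)+\int\rho_+W=-\tr\!\bigl[\bigl(-\tfrac12\Delta-W\bigr)\gamma_+\bigr]\le\sum_{\lambda<0}|\lambda|\le C\!\int_{\R^3}W^{5/2}\le C\,|S|\sup_{x\in A_{(1-\lambda)r}}\bigl[\Phi_{(1-\lambda)r}^{\mathrm{RHF}}(x)\bigr]_+^{5/2},
\]
where the sum runs over the negative eigenvalues of $-\tfrac12\Delta-W$; together with $|S|\le CK\lambda r^3$ this is the second term of $\mathcal{R}$.

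The main difficulty is the exact cancellation in the second step: every individual piece of $\E^{\mathrm{RHF}}(\gamma^{\mathrm{RHF}})$, $\E^{\mathrm{RHF}}(\widetilde\gamma)$ and of the two screened functionals is of the same order as the full interior energy, and one must see that everything cancels except for remainders supported in the thin shell $S$ (plus the localization error). Within this, I expect the delicate point to be the sign of $2D(\rho_r,(\1_{A_r}-\eta_r^2)\rho^{\mathrm{RHF}})$, which dictates the choice $\operatorname{supp}\eta_r\subset A_r$. The other key ingredient is the last display, where the screening bound $V_Z-\rho_-\star|x|^{-1}\le\Phi_{(1-\lambda)r}^{\mathrm{RHF}}$ upgrades a potentially $Z$-dependent shell term into the $Z$-independent Thomas--Fermi-type quantity $\lambda r^3\sup[\Phi_{(1-\lambda)r}^{\mathrm{RHF}}]_+^{5/2}$.
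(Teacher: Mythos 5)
Your proof is correct and follows essentially the same route as the paper: the upper bound via the trial state $\eta_-\gamma^{\mathrm{RHF}}\eta_-+\gamma$ together with monotonicity of $N\mapsto E^{\mathrm{RHF}}(N,Z)$, the IMS decomposition with the sign of the cross terms, and Lieb--Thirring on the shell for the $\eta_+$ piece; the paper merely organizes this as a two-sided sandwich around $\E^{\mathrm{RHF}}(\gamma^{\mathrm{RHF}})$ rather than as one chain with explicit cancellation. The only remark worth making is that the property $\eta_r\le\chi_r^+$, which you present as an extra choice needed for the sign of $2D(\rho_r,(\1_{A_r}-\eta_r^2)\rho^{\mathrm{RHF}})$, is already built into the definition of $\eta_r$ in Section~2, so no modification of the partition is required.
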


\begin{proof}
It suffices to show that
\begin{align*}
\E^\mathrm{RHF}(\eta_- \gamma^\mathrm{RHF} \eta_-) +
\E_{r}^\mathrm{RHF}\left(\eta_{r} \gamma^\mathrm{RHF} \eta_{r} \right) - \mathcal{R}
&\le \E^\mathrm{RHF}(\gamma^\mathrm{RHF}) \\
&\le \E^\mathrm{RHF}(\eta_{-} \gamma^\mathrm{RHF} \eta_{-}) + \E_{r}^\mathrm{RHF}(\gamma).
\end{align*}

{\bf Upper bound.} From the minimizing property and the fact that $N \mapsto E^\mathrm{RHF}(N, Z)$ is non-increasing, we have
\[
\E^\mathrm{RHF}(\gamma^\mathrm{RHF}) \le \E^\mathrm{RHF}(\gamma + \eta_- \gamma^\mathrm{RHF} \eta_-)
\]
By direct computation, we have
\begin{equation*}
\begin{split}
 \E^\mathrm{RHF}(\gamma + \eta_- \gamma^\mathrm{RHF} \eta_-)
 &=  \E^\mathrm{RHF}(\eta_- \gamma^\mathrm{RHF} \eta_-) +  \E^\mathrm{RHF}(\gamma) +\iint \frac{\eta_-(x)^2 \rho^\mathrm{RHF}(x) \rho_\gamma(y)}{|x-y|} \, dxdy \\
 &\le \E^\mathrm{RHF}(\eta_- \gamma^\mathrm{RHF} \eta_-) +\E_{r = 0}^\mathrm{RHF}(\gamma) + \iint_{A_r^c}\frac{\rho^\mathrm{RHF}(x) \rho_{\gamma} (y)}{|x-y|}\, dxdy \\
 &= \E^\mathrm{RHF}(\eta_- \gamma^\mathrm{RHF} \eta_-) + \E_{r }^\mathrm{RHF}(\gamma).
\end{split}
\end{equation*}

{\bf Lower bound.} 
By the IMS formula, we have
\begin{equation*}
\begin{split}
\E^\mathrm{RHF}(\gamma^\mathrm{RHF}) &\ge  \E^\mathrm{RHF}(\eta_- \gamma^\mathrm{RHF} \eta_-)  +  \E^\mathrm{RHF}(\eta_+ \gamma^\mathrm{RHF} \eta_+)\\
&\quad +  \E^\mathrm{RHF}(\eta_{r} \gamma^\mathrm{RHF} \eta_{r}) - \sum_{\# = -, +, r} \int |\nabla \eta_{\#}|^2 \rho^\mathrm{RHF} \\
&\quad + \iint \frac{\eta_{r}(x)^2 \rho^\mathrm{RHF}(x) \rho^\mathrm{RHF}(y) (\eta_-(y)^2 + \eta_+(y)^2)}{|x-y|} \, dxdy \\
&\quad +  \iint \frac{\eta_{+}(x)^2 \rho^\mathrm{RHF}(x) \rho^\mathrm{RHF}(y) (\eta_-(y)^2)}{|x-y|} \, dxdy.
\end{split}
\end{equation*}
By construction, we see
\[
- \sum_{\# = -, +, r} \int |\nabla \eta_{\#}|^2 \rho^\mathrm{RHF}
\ge -C(\lambda r)^{-2} \int_{A_{(1-\lambda)r} \cap A_{(1+\lambda)r}^c} \rho^\mathrm{RHF}.
\]
Moreover, we get
\begin{equation*}
\begin{split}
\E^\mathrm{RHF}(\eta_{r} \gamma^\mathrm{RHF} \eta_{r}) 
&+ \iint \frac{\eta_{r}(x)^2 \rho^\mathrm{RHF}(x) \rho^\mathrm{RHF}(y) (\eta_-(y)^2 + \eta_+(y)^2)}{|x-y|} \, dxdy   \\
&\ge
\E^\mathrm{RHF}(\eta_{r} \gamma^\mathrm{RHF} \eta_{r})  
+ \iint  \frac{\eta_{r}(x)^2 \rho^\mathrm{RHF}(x) \rho^\mathrm{RHF}(y) (1-\chi_r^+)}{|x-y|} \, dxdy \\
&\ge \E_{r}^\mathrm{RHF}(\eta_{r} \gamma^\mathrm{RHF} \eta_{r}).
\end{split}
\end{equation*}
Similarly, it follows that
\begin{equation*}
\begin{split}
\E^\mathrm{RHF}(\eta_{+} \gamma^\mathrm{RHF} \eta_{+}) 
&+ \iint \frac{\eta_{+}(x)^2 \rho^\mathrm{RHF}(x) \rho^\mathrm{RHF}(y) \eta_-(y)^2}{|x-y|} \, dxdy \\
&\ge
\E^\mathrm{RHF}(\eta_{+} \gamma^\mathrm{RHF} \eta_{+}) 
+ \iint \frac{\eta_{+}(x)^2 \rho^\mathrm{RHF}(x) \rho^\mathrm{RHF}(y) \1_{A^c_{(1-\lambda)r}}(y)}{|x-y|} \, dxdy \\
&\ge \E_{(1-\lambda)r}^\mathrm{RHF}(\eta_{+} \gamma^\mathrm{RHF} \eta_{+}) \\
&\ge \tr \left[\left(-\frac{\Delta}{2} - \Phi_{(1-\lambda)r} \right) \eta_{+} \gamma^\mathrm{RHF} \eta_{+}\right].
\end{split}
\end{equation*}
Applying Lieb-Thirring inequality with $V = \Phi_{(1-\lambda)r} ^\mathrm{RHF} \1_{\mathrm{supp} \eta_+}$, we see that
\begin{equation*}
\begin{split}
\tr \left[\left(-\frac{\Delta}{2} - \Phi_{(1-\lambda)r}^\mathrm{RHF}  \right) \eta_{+} \gamma^\mathrm{RHF} \eta_{+}\right]
&\ge \tr \left( - \frac{\Delta}{2} - V\right)_- \\
&\ge -C\sum_{j=1}^K \int_{{(1-\lambda)r} \le |x-R_j|\le(1+\lambda)r} \left[\Phi_{(1-\lambda)r} ^\mathrm{RHF} \right]_+^{5/2} \\
&\ge -C\lambda r^3\sup_{x \in A_{(1-\lambda)r}}[\Phi_{(1-\lambda)r}^\mathrm{RHF}(x)]_+^{5/2}.
\end{split}
\end{equation*}
Hence
\begin{equation*}
\begin{split}
\E^\mathrm{RHF}(\gamma^\mathrm{RHF}) &\ge  \E^\mathrm{RHF}(\eta_- \gamma^\mathrm{RHF} \eta_-) +
\E_{r}^\mathrm{RHF}(\eta_{r} \gamma^\mathrm{RHF} \eta_{r}) \\
&-C(1+\lambda r)^{-2} \int_{A_{(1-\lambda)r} \cap A_{(1+\lambda)r}^c} \rho^\mathrm{RHF} \\
&-C\lambda r^3 \sup_{x \in A_{(1-\lambda)r}}[\Phi_{(1-\lambda)r}^\mathrm{RHF}(x)]_+^{5/2}.
\end{split}
\end{equation*}
This completes the proof.
\end{proof}

By pursuing the above reasoning, one can show the following lemma. 
\begin{lemma}
\label{inside2}
For any $r \in (0, R_0]$ and any $\lambda \in (0, 1/2]$ we have
\begin{equation}
\begin{split}
\tr \left(-\frac{\Delta}{2} \eta_r \gamma^\mathrm{RHF} \eta_r\right) &\le C(1+(\lambda r)^{-2})\int_{A_{(1-\lambda)r} } \rho^\mathrm{RHF} + C\lambda r^3\sup_{x \in A_{(1-\lambda)r}}[\Phi_{(1-\lambda)r}^\mathrm{RHF}(x)]_+^{5/2}\\
&\quad  + C\sup_{x \in A_r}[\phi(x)^{-1}\Phi_r^\mathrm{RHF}(x)]_+^{7/3}.
\end{split}
\end{equation}
\end{lemma}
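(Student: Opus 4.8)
The plan is to re-use Lemma~\ref{frominside} with the trivial trial state $\gamma=0$ and then to isolate the kinetic energy. Since the zero density matrix satisfies $\mathrm{supp}\,\rho_0=\emptyset\subset A_r$ and $\tr 0=0\le\int_{A_r}\rho^\mathrm{RHF}$, and since $\E_r^\mathrm{RHF}(0)=0$, Lemma~\ref{frominside} yields
\[
\E_r^\mathrm{RHF}\!\left(\eta_r\gamma^\mathrm{RHF}\eta_r\right)\le \mathcal{R}\le C\bigl(1+(\lambda r)^{-2}\bigr)\int_{A_{(1-\lambda)r}}\rho^\mathrm{RHF}+C\lambda r^3\sup_{x\in A_{(1-\lambda)r}}\bigl[\Phi_{(1-\lambda)r}^\mathrm{RHF}(x)\bigr]_+^{5/2},
\]
where I have enlarged the annulus $A_{(1-\lambda)r}\cap A_{(1+\lambda)r}^c$ to $A_{(1-\lambda)r}$. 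Because $\rho_{\eta_r\gamma^\mathrm{RHF}\eta_r}=\eta_r^2\rho^\mathrm{RHF}$ and $D[\,\cdot\,]\ge 0$, expanding $\E_r^\mathrm{RHF}$ gives
\[
\tr\!\left(-\tfrac{\Delta}{2}\,\eta_r\gamma^\mathrm{RHF}\eta_r\right)=\E_r^\mathrm{RHF}\!\left(\eta_r\gamma^\mathrm{RHF}\eta_r\right)+\int\Phi_r^\mathrm{RHF}\,\eta_r^2\rho^\mathrm{RHF}-D[\eta_r^2\rho^\mathrm{RHF}]\le \mathcal{R}+\int_{A_r}\bigl[\Phi_r^\mathrm{RHF}\bigr]_+\,\eta_r^2\rho^\mathrm{RHF},
\]
so the whole matter reduces to absorbing the last integral into $\tfrac12\tr(-\tfrac{\Delta}{2}\eta_r\gamma^\mathrm{RHF}\eta_r)$ at the cost of the advertised extra term $C\sup_{A_r}[\phi^{-1}\Phi_r^\mathrm{RHF}]_+^{7/3}$.

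For that last integral I would combine three facts. First, on $A_r$ one has $[\Phi_r^\mathrm{RHF}]_+\le M\,\phi$ with $M:=\sup_{A_r}\phi^{-1}[\Phi_r^\mathrm{RHF}]_+$, so it suffices to estimate $M\int_{A_r}\phi\,\eta_r^2\rho^\mathrm{RHF}$. Second, the Lieb--Thirring kinetic inequality gives $\int(\eta_r^2\rho^\mathrm{RHF})^{5/3}=\int\rho_{\eta_r\gamma^\mathrm{RHF}\eta_r}^{5/3}\le C\,\tr(-\Delta\,\eta_r\gamma^\mathrm{RHF}\eta_r)$, valid since $0\le\eta_r\gamma^\mathrm{RHF}\eta_r\le 1$. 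Third, $\Phi_r^\mathrm{RHF}$ is harmonic on $A_r$ --- the screening charge $\rho^\mathrm{RHF}\1_{A_r^c}$ is supported in $A_r^c$ and $V_Z$ is harmonic off the nuclei --- and it decays like $|x|^{-1}$, while the weight obeys $\phi\le\operatorname{dist}(\cdot,\{R_j\})^{-1}$, so that $\int_{r<|x-R_j|}\phi^{q}\lesssim r^{3-q}$ for $q>3$ and the far part of $\phi$ is integrable to every power $>3$. I would then split $\int_{A_r}\phi\,\eta_r^2\rho^\mathrm{RHF}$ into the region near the nuclei, estimated by H\"older against $\rho_{\eta_r\gamma^\mathrm{RHF}\eta_r}\in L^{5/3}$ and then Young's inequality to peel off a small multiple of the kinetic energy, and the complementary region, where $\phi\lesssim|x|^{-1}$ turns the contribution into an $L^1$-bound on the exterior density; optimizing over the radius that separates the two regions, and using the maximum principle for the harmonic $\Phi_r^\mathrm{RHF}$ to control the far field, produces a term $C M^{7/3}$ together with pieces absorbable into $\mathcal{R}$ and into $C(1+(\lambda r)^{-2})\int_{A_{(1-\lambda)r}}\rho^\mathrm{RHF}$. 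The exponent $7/3$ is exactly what comes out of balancing the H\"older exponent against the $3/5$-power produced by Lieb--Thirring in the Young step. Collecting the estimates and absorbing $\tfrac12\tr(-\tfrac{\Delta}{2}\eta_r\gamma^\mathrm{RHF}\eta_r)$ on the left then finishes the proof.

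The step I expect to be the real obstacle is the treatment of the far field: since $[\Phi_r^\mathrm{RHF}]_+$ decays only like $|x|^{-1}$, its $L^{5/2}$-norm over all of $A_r$ is infinite, so one cannot simply apply H\"older and Lieb--Thirring globally. Making the harmonicity of $\Phi_r^\mathrm{RHF}$ on $A_r$ carry the decay --- so that the near/far split and the optimization close with $r$-independent constants and precisely the exponents $5/2$ and $7/3$ --- is the delicate point; everything else is a repetition of the IMS-formula and Lieb--Thirring computations already carried out in the proofs of Lemmas~\ref{theorem:outside}--\ref{frominside}.
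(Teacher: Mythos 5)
Your opening is exactly the paper's: apply Lemma~\ref{frominside} with $\gamma=0$ to get $\E_r^{\mathrm{RHF}}(\eta_r\gamma^{\mathrm{RHF}}\eta_r)\le\mathcal{R}$ and then extract the kinetic energy. The gap is in how you handle the potential term. By discarding $D[\eta_r^2\rho^{\mathrm{RHF}}]\ge 0$ at the outset you throw away precisely the term that produces the exponent $7/3$. Once $D$ is gone, the only tools left for $M\int\phi\,\eta_r^2\rho^{\mathrm{RHF}}$ (with $M:=\sup_{A_r}\phi^{-1}[\Phi_r^{\mathrm{RHF}}]_+$) are H\"older/Young against $\int(\eta_r^2\rho^{\mathrm{RHF}})^{5/3}$ on the near region and the $L^1$ bound $\int_{A_{(1-\lambda)r}}\rho^{\mathrm{RHF}}$ on the far region. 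The far part forces the splitting radius $R\gtrsim M$ (so that the factor $M/R$ can be absorbed into the $L^1$ term), and then the near part gives, after Young's inequality with exponents $5/3$ and $5/2$, a remainder of order $(MR^{1/5})^{5/2}\sim M^3$, not $M^{7/3}$. This is not a cosmetic loss: in the application $M\le Cr^{-3}$ by (\ref{ite2}), so $M^3=r^{-9}$ would ruin the $r^{-7}$ bound (\ref{ite5}) and the subsequent bootstrap. Nor can any cleverer split recover $7/3$: the functional $c\int\sigma^{5/3}-M\int\sigma/|x|$ \emph{without} the repulsion $D[\sigma]$ is unbounded below (its pointwise minimization gives $-cM^{5/2}|x|^{-5/2}$, which is not integrable at infinity). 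Your appeal to harmonicity and the maximum principle for $\Phi_r^{\mathrm{RHF}}$ does not repair this either; it controls $\sup_{A_r}[\Phi_r^{\mathrm{RHF}}]_+$ by its boundary values but not the divergence of $\|[\Phi_r^{\mathrm{RHF}}]_+\|_{L^{5/2}(A_r)}$ coming from the $|x|^{-1}$ tail.

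The paper's fix is to keep $D[\eta_r^2\rho^{\mathrm{RHF}}]$ and to spend only half of the kinetic energy, via the kinetic Lieb--Thirring inequality, to generate $C^{-1}\int(\eta_r^2\rho^{\mathrm{RHF}})^{5/3}$. Since $[\Phi_r^{\mathrm{RHF}}]_+\le M\phi=M\sum_j(z_j/Z)|x-R_j|^{-1}$ on $A_r$, the leftover $C^{-1}\int\sigma^{5/3}-\sum_j\int\frac{Mz_j/Z}{|x-R_j|}\sigma\,dx+D[\sigma]$ with $\sigma=\eta_r^2\rho^{\mathrm{RHF}}$ is a molecular Thomas--Fermi functional with effective nuclear charges $Mz_j/Z$, and the known ground-state energy bound $\ge -C\sum_j(Mz_j/Z)^{7/3}\ge -CM^{7/3}$ (Lieb, Lieb--Simon) yields exactly the third term of the lemma. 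The missing idea in your proposal is this identification; with it, the rest of your argument goes through.
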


\begin{proof}
We apply Lemma \ref{frominside} with $\gamma = 0$ and obtain $\E_r^\mathrm{RHF} (\eta_r \gamma^\mathrm{RHF} \eta_r)\le \mathcal{R}$.
On the other hand, by the kinetic Lieb-Thirring inequality and the fact that the ground state energy in Thomas-Fermi theory is  $-\mathrm{const.}\sum_{j=1}^Kz_j^{7/3}$~\cite{LiebTF, LiebSimon1977}, we have

\begin{align*}
\E_r^\mathrm{RHF} (\eta_r \gamma^\mathrm{RHF} \eta_r) 
&\ge  \tr \left(-\frac{\Delta}{4} \eta_r \gamma^\mathrm{RHF} \eta_r\right) +C^{-1} \int(\eta_r^2 \rho^\mathrm{RHF})^{5/3}  \\
&\quad- \sup_{x \in A_r} \phi(x)^{-1}[\Phi_{r}^\mathrm{RHF}(x)]_+\sum_{j=1}^K\int\eta_r^2\frac{z_j\rho^\mathrm{RHF}(x)}{Z|x-R_j|}\, dx +D[\eta_r^2 \rho^\mathrm{RHF}] \\
&\ge  \tr \left(-\frac{\Delta}{4} \eta_r \gamma^\mathrm{RHF} \eta_r\right)  - C \sup_{x \in A_r}[\phi(x)^{-1}\Phi_{r}^\mathrm{RHF}(x)]_+^{7/3}.
\end{align*}
Therefore,
\[
\tr \left(-\frac{\Delta}{2} \eta_r \gamma^\mathrm{RHF} \eta_r\right) \le C\mathcal{R} +C\sup_{x \in A_r}[\phi(x)^{-1}\Phi_{r}^\mathrm{RHF}(x)]_+^{7/3}
\]
which implies the conclusion.
\end{proof}
\section{Sommerfeld estimates}
In this section, we will show the Sommerfeld asymptotics for molecules.
Let $\Gamma_j$ be the Voronoi cells $\Gamma_j \coloneqq \{ x \in \R^3 \colon |x-R_j| < |x-R_i| \, \text{ for all } i\neq j\}$, for $j=1, \dots, K$.
The following theorem is a generalization of~\cite[Theorem 4.6]{SolovejIC} and~\cite[Lemma 3.11]{Samojlow}.
\begin{theorem}[Sommerfeld asymptotics]
\label{Sommerfeld1}
Let $r \in (0, R_0]$ and $\phi$ be the TF potential satisfying $\Delta\phi = 4\pi c_\mathrm{TF}^{-3/2}[\phi -\mu]_+^{3/2}$ in $A_r$, where $c_\mathrm{TF} = 2^{-1}(3\pi^2)^{2/3}$, and $\mu \ge 0$ is a chemical potential.
We assume $\lim_{s\to +r} \inf_{\partial A_s} \phi > \mu$, and $\phi$ is continuous on $A_r$ and vanishes at infinity.
Then for any $x \in A_r$ it follows that
\[
\max\left\{ \max_{1 \le j \le K }\omega_a^-(x-R_j), \max_{1 \le j \le K }\frac{\nu(\mu, r)}{|x-R_j|} \right\}\le \phi(x) \le \sum_{j=1}^K\omega_A^+(x-R_j) + \mu,
\]
where $\nu(\mu, r) \coloneqq \inf_{|x|\ge r} \max \{ \mu|x|, \omega_a^-(x)|x|\}$ and
\[
a(r) \coloneqq \liminf_{s \to +r} \sup_{\partial A_s} \left(\sqrt{c_\mathrm{S} r^{-4}\phi^{-1}} -1\right), \quad \omega_a^-(x) \coloneqq c_\mathrm{S}|x|^{-4}\left(1+a(r)\left(r|x|^{-1} \right)^\xi \right)^{-2},
\]
\[
A(r) \coloneqq \liminf_{s \to +r} \sup_{\partial A_s} \left(c_\mathrm{S}^{-1} s^{4}(\phi-\mu) -1\right), \quad \omega_A^+(x) \coloneqq c_\mathrm{S}|x|^{-4}\left(1+A(r)\left(r|x|^{-1} \right)^\xi \right).
\]
Here $\xi = (-7+\sqrt{73})/2 \sim 0.77$ and $c_\mathrm{s} = 3^42^{-3}\pi^2$.
\end{theorem}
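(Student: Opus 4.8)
The plan is to run a barrier (comparison) argument against the maximum principle, with the Voronoi partition $\{\Gamma_j\}_{j=1}^K$ playing the role that the half‑spaces $H^\pm$ played in the diatomic case. Set $\kappa\coloneqq 4\pi c_\mathrm{TF}^{-3/2}$, so the equation is $\Delta\phi=\kappa[\phi-\mu]_+^{3/2}$ on $A_r$; since $t\mapsto[t-\mu]_+^{3/2}$ is $C^1$, elliptic regularity makes $\phi$ a classical solution, and the hypothesis $\liminf_{s\to r+}\inf_{\partial A_s}\phi>\mu$ guarantees that $a(r)$ and $A(r)$ are finite and strictly larger than $-1$. The backbone of everything is the classical identity that the Sommerfeld profile $\omega(x)=c_\mathrm{S}|x|^{-4}$ solves $\Delta\omega=\kappa\omega^{3/2}$ exactly, which fixes $c_\mathrm{S}=3^4 2^{-3}\pi^2$ through $12c_\mathrm{S}=\kappa c_\mathrm{S}^{3/2}$ since $\Delta(|x|^{-4})=12|x|^{-6}$.

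\textbf{Step 1 (scalar barriers).} I would first prove a one‑center lemma: for every anchor radius $\rho>0$ and every constant $b>-1$,
\[
w_b^+(x)=c_\mathrm{S}|x|^{-4}\bigl(1+b(\rho/|x|)^\xi\bigr), \qquad
w_b^-(x)=c_\mathrm{S}|x|^{-4}\bigl(1+b(\rho/|x|)^\xi\bigr)^{-2}
\]
satisfy $\Delta w_b^+\le \kappa (w_b^+)^{3/2}$ and $\Delta w_b^-\ge \kappa (w_b^-)^{3/2}$ on $\{|x|>\rho\}$. Writing $t=b(\rho/|x|)^\xi>-1$, a radial differentiation gives the clean identities $\Delta w_b^+=c_\mathrm{S}|x|^{-6}(12+18t)$ and $\Delta w_b^- - \kappa (w_b^-)^{3/2}=c_\mathrm{S}|x|^{-6}(36-42\xi)\,t^2(1+t)^{-4}$. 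Two things make this work and explain the constants in the statement: the Cauchy–Euler linearisation of $w\mapsto\kappa w^{3/2}$ at $\omega$ leads to the indicial relation $\xi^2+7\xi-6=0$, whose positive root is $\xi=(\sqrt{73}-7)/2$, so $(\rho/|x|)^\xi$ is exactly the decaying homogeneous mode; and the exponent $-2$ in $w_b^-$ is precisely the one for which (using $\xi^2=6-7\xi$) the term linear in $t$ cancels, leaving the remainder $36-42\xi=183-21\sqrt{73}>0$. The supersolution property of $w_b^+$ then reduces to the elementary convexity bound $(1+t)^{3/2}\ge 1+\tfrac32 t$. Note the signs hold for \emph{any} anchor $\rho$; this flexibility is exploited in Step 3.

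\textbf{Step 2 (assembling the molecular barriers).} For the upper bound I put $W(x)=\sum_{j=1}^K\omega_A^+(x-R_j)+\mu$. Each term is a supersolution on $A_r\subset\{|x-R_j|>r\}$ by Step 1, and by superadditivity of $s\mapsto s^{3/2}$,
\[
\Delta W=\sum_{j}\Delta\omega_A^+(\cdot-R_j)\le\kappa\sum_j\omega_A^+(\cdot-R_j)^{3/2}\le\kappa\Bigl(\sum_j\omega_A^+(\cdot-R_j)\Bigr)^{3/2}=\kappa[W-\mu]_+^{3/2},
\]
so $W$ is a supersolution on $A_r$. For the lower bound I set, for each fixed $j$, $v_j(x)=\max\{\omega_a^-(x-R_j),\,\nu(\mu,r)/|x-R_j|\}$. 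Where the first term dominates, $v_j$ is a subsolution by Step 1; where the Coulomb term dominates, the definition $\nu(\mu,r)=\inf_{|y|\ge r}\max\{\mu|y|,\omega_a^-(y)|y|\}$ forces $\nu(\mu,r)/|x-R_j|\le\mu$, hence $\Delta v_j=0=\kappa[v_j-\mu]_+^{3/2}$ there — this is exactly why $\nu(\mu,r)$ is defined as it is. Therefore $v_j$, and then $V\coloneqq\max_j v_j$ (a max of subsolutions), is a subsolution of the TF equation on $A_r$; note $V=\max\{\max_j\omega_a^-(\cdot-R_j),\max_j\nu(\mu,r)/|\cdot-R_j|\}$, the function in the statement.

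\textbf{Step 3 (comparison and the main difficulty).} It remains to compare boundary data and invoke the comparison principle. At infinity $\phi\to 0$, $W\to\mu\ge 0$, $V\to 0$, which is handled by the standard exterior maximum principle (adjust by $\pm\epsilon|x|^{-1}$ and let $\epsilon\to0$). Near $\partial A_r$ one uses the Voronoi cells: a point $x$ close to $\partial A_r$ lies in some $\overline{\Gamma_{j}}$ with $|x-R_{j}|=s(x)\to r$, and then the definitions of $A(r)$ and $a(r)$ give $\phi\le W$ and $\phi\ge V$ on the relevant boundary pieces. The genuinely delicate point — and where the molecular geometry bites hardest — is that these boundary constants are $\liminf$'s over $s\to r+$ of suprema over the non‑radial, $K$‑sphere hypersurfaces $\partial A_s$; to use them cleanly I would fix a sequence $s_n\downarrow r$ realising the $\liminf$, build the Step 1 barriers \emph{anchored at $\rho=s_n$} with the sharp constant read off on $\partial A_{s_n}$, run the comparison on $A_{s_n}$, and finally let $n\to\infty$, the anchored barriers converging pointwise on $A_r$ to $\omega_A^+$ and $\omega_a^-$. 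The comparison itself is then routine: on $\{\phi>W\}\cap A_r$ one has $\Delta(\phi-W)\ge\kappa([\phi-\mu]_+^{3/2}-[W-\mu]_+^{3/2})\ge 0$, so $\phi-W$ is subharmonic with nonpositive boundary values, forcing $\{\phi>W\}=\emptyset$; symmetrically $\{\phi<V\}=\emptyset$ because $V$ is a subsolution. Together these give $V(x)\le\phi(x)\le\sum_j\omega_A^+(x-R_j)+\mu$ for all $x\in A_r$, which is the claim. The main obstacles are thus (i) the exact cancellations in Step 1, which hinge on $\xi$ being that specific algebraic number and on the power being exactly $-2$, and (ii) the geometric bookkeeping of Step 3 — partitioning $A_r$ by Voronoi cells and handling the $\liminf$‑type boundary constants via the anchored‑barrier limit.
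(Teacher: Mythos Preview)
Your proposal is correct and follows essentially the same barrier/comparison argument as the paper, with the same barriers $\omega_a^-$ and $\omega_A^+$, the same anchoring-and-limit device for handling the $\liminf$ boundary data, and the same maximum-principle conclusion. The one cosmetic difference is that for the lower barrier the paper verifies the subsolution property of $\max_j\omega_a^-(\cdot-R_j)$ explicitly by integration by parts over the Voronoi cells (using convexity of $\Gamma_j$ to sign the interface terms $n_j\cdot\nabla\omega_a^-(\cdot-R_j)\le 0$ on $\partial\Gamma_j$), whereas you invoke the general principle that a maximum of subsolutions is a subsolution.
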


\begin{proof}
\textbf{Step 1}
By assumption, there is a $r_0 \in (r, R_0)$ such that $\inf_{\partial A_s} \phi > \mu \ge0$ for any $s \in (r, r_0)$.
Hence $a(s)$ is well-defined for any $s \in (r, r_0)$.
We prove the claim with $r$ replaced by arbitrary $s \in (r, r_0)$ and take the limit $s \to r$.

{\bf Step 2} (Lower bound)
We consider $f(x) \coloneqq \max \{ \max_{1 \le j \le K }\omega_a^-(x-R_j),  \nu \max_{1 \le j \le K } |x-R_j|^{-1}\}$ on $\overline{A_s}$.
Since $\inf_{\partial A_s} \phi > \mu$, we have $a(s) > -1$.
By definition, we have
\begin{itemize}
\item[(a)]
$\omega_a^- (x)|x|$ is positive and radial for $|x| \ge s$.

\item[(b)]
$\omega_a^-(x) = \inf_{\partial A_s} \phi > \mu$ for any $|x| = s$.

\item[(c)]
$\Delta \omega_a^- (x) \ge 4\pi c_\mathrm{TF}^{-3/2}\omega_a^-(x)^{3/2}$ for any $|x|>s$.
\end{itemize}
Indeed, (a) and (b) are followed by definition.
Then (c) is obtained in~\cite[Eq. (38)]{SolovejIC}.
From (a), (b), and the fact that $\mu |x|$ is increasing, there is a $R \in (s, \infty)$ so that $\omega_a^-(|x|=R) = \mu$ and $\nu = \mu R$.
Moreover, for any $x \in \overline{A_s}$
\begin{equation}
   f(x) 
  = \begin{cases}
       \max_{1 \le j \le K }\omega_a^-(x-R_j)  & \text{if $f(x) > \mu$} \\
     \nu \max_{1 \le j \le K } |x-R_j|^{-1} & \text{if $f(x) \le \mu$}.
    \end{cases}
\end{equation}
Thus, by (b) we have $f|_{\partial A_s} = \omega_a^-(|x-R_j|=s) = \inf_{\partial A_s} \phi$.
Let $u \coloneqq f - \phi$.
It suffices to show that $\Delta u \ge 0$ in $A_s \cap \{ u > 0\}$.
From $\Delta u = \Delta  f - 4\pi c_\mathrm{TF}^{-3/2} [\phi - \mu]_+^{3/2}$ we will show that
\[
\Delta f \ge 4\pi c_\mathrm{TF}^{-3/2} [f-\mu]_+^{3/2} \quad \text{in } A_s.
\]
For any nonnegative function $\psi \in C_c^\infty(A_r\cap \{f > \mu\})$ we may compute
\begin{align*}
\int_{\R^3} f \Delta \psi &= \sum_{j=1}^K\int_{\Gamma_j} \mathrm{div}( \omega_a^-(x-R_j) \nabla \psi(x)) \, dx - \sum_{j=1}^K\int_{\Gamma_j} \nabla  \omega_a^-(x-R_j) \cdot \nabla \psi(x) \, dx \\
&=  \sum_{j=1}^K\int_{\partial \Gamma_j}  \omega_a^-(x-R_j) n_j \cdot \nabla \psi(x) \, dS -\sum_{j=1}^K\int_{\Gamma_j} \nabla  \omega_a^-(x-R_j) \cdot \nabla \psi(x) \,dx
\end{align*}
by Gauss's theorem.
Here $n_j$ is the outward normal of $\partial \Gamma_j$.
We note that the first integral is zero by the fact that $n_j = - n_k$ on $\partial \Gamma_j \cap \partial \Gamma_k$.
Similarly,
\begin{align*}
- \sum_{j=1}^K&\int_{\Gamma_j} \nabla  \omega_a^-(x-R_j) \cdot \nabla \psi(x) \, dx \\
&=  -\sum_{j=1}^K\int_{\Gamma_j} \mathrm{div}( \psi(x) \nabla  \omega_a^-(x-R_j)) \, dx + \sum_{j=1}^K\int_{\Gamma_j} \psi(x) \Delta \omega_a^-(x-R_j) \,dx  \\
&\ge -\sum_{j=1}^K\int_{\partial \Gamma_j}  \psi(x) n_j \cdot \nabla \omega_a^-(x-R_j) \, dS +4 \pi c_\mathrm{TF}^{-3/2} \int \psi f(x)^{3/2} \, dx.
\end{align*}
From the fact  that $n_j \cdot \nabla \omega_a^-(x-R_j) \le 0$ on $\partial \Gamma_j$ (because $\Gamma_j$ is convex),  we have
\[
\int_{\R^3} f \Delta \psi  \ge 4 \pi c_\mathrm{TF}^{-3/2} \int \psi [f-\mu]_+^{3/2},
\]
and thus $\Delta f \ge 4 \pi c_\mathrm{TF}^{-3/2} [f-\mu]_+^{3/2}$ in $A_s \cap \{f > \mu\}$.
We note $\omega_a^-$ is subharmonic and $|x-R_j|^{-1}$ is harmonic on $A_s$.
Thus $\Delta f \ge 0$ in $A_s$.
We pick any nonnegative function $\psi \in C_c^\infty(A_s)$ and a nonnegative monotone sequence $0 \le \xi_n \in C_c^\infty(\{f > \mu\})$ so that $\xi_n \to \1_{\{f>\mu\}}$ pointwise in $\mathrm{supp } \, \psi$.
Then, with the above results, we find
\[
\int f\Delta\psi = \int f \Delta(\xi_n \psi) + \int f \Delta(1-\xi_n)\psi
\ge 4 \pi c_\mathrm{TF}^{-3/2} \int [f-\mu]_+^{3/2} \xi_n\psi \to 4 \pi c_\mathrm{TF}^{-3/2} \int [f-\mu]_+^{3/2}\psi 
\]
by monotone convergence theorem.
Hence $\Delta u \ge 0$ in $A_s \cap \{u > 0\}$ holds.
From the maximum principle, $A_s \cap \{u > 0\} $ is empty.
Hence $f \le \phi$ follows.

{\bf Step 3} (upper bound)
We consider $g(x) \coloneqq \sum_{j=1}^K\omega_A^+(x-R_j) + \mu$.
Since $\Delta \omega_A^+ \le 4 \pi c_\mathrm{TF}^{-3/2}(\omega_A^+)^{3/2}$ in $|x| \ge s$,
it satisfies that $\Delta g \le 4 \pi c_\mathrm{TF}^{-3/2}[g-\mu]_+^{3/2}$ in $A_s$.
By $\omega_j^+|_{\partial A_s} = \sup_{\partial A_s} \phi -\mu$,  we have $g(x) \ge \omega_A^+(|x-R_j| = s)  + \mu =  \sup_{\partial A_s} \phi $ for any $x \in \partial A_s$.
Let $u \coloneqq \phi - g$.
Then we have, on $g < \phi$,
\[
\Delta u \ge  4 \pi c_\mathrm{TF}^{-3/2}([\phi - \mu]_+^{3/2} - [g - \mu]^{3/2}_+) \ge 0.
\]
Hence we learn $\phi \le g$ on $A_s$ by the maximum principle.
\end{proof}

Next, as in~\cite[Lemma 3.12]{Samojlow}, we improve the upper bound for $x$ close to $\partial A_r$. Namely, we will show the following theorem.
\begin{theorem}[Refined upper bound]
\label{Sommerfeld2}
Let $r \in (0, R_0]$, $\mu \ge 0$, and $\phi$ is continuous on $A_r$ and vanishes at infinity.
We assume $\Delta \phi = 4\pi c_\mathrm{TF}^{-3/2}[\phi-\mu]_+^{3/2}$ in $A_r$.
Then it holds that, for $j=1,\dots K$,
\[
\phi(x) \le \omega_{A_1, A_2}^j(x-R_j) + \mu \quad \text{if } x \in A_r \cap \Gamma_j,
\]
 where
\begin{align*}
 \omega_{A_1, A_2}^j(x) &\coloneqq c_\mathrm{S}|x|^{-4}\left(1 + A_1^j(r) \left(\frac{|x|}{R_j} \right)^\eta + A_2^j(r)\left(\frac{r}{|x|} \right)^\xi \right), \\
 R_j &\coloneqq \frac{1}{2}\min_{i \neq j}|R_i - R_j|, \quad A_i^j(r) \coloneqq \liminf_{s \to +r} B_i^j(s),\quad i= 1,2, \\
 B_1^j(s) &\coloneqq \frac{4+B_2^j(s)(4+\xi)\left(\frac{s}{R_j} \right)^\xi}{\eta -4}, \\
  B_2^j(s) &\coloneqq \left[ \frac{\sup_{\partial A_s} \left(c_\mathrm{S}^{-1}s^4(\phi-\mu) -1 \right) - \frac{4}{\eta-4}\left(\frac{s}{R_j} \right)^\eta}{1+\frac{4+\xi}{\eta-4} \left(\frac{s}{R_j} \right)^{\xi + \eta}} \right]_+.
\end{align*}
Here $\eta = (7+\sqrt{73})/2 \sim 7.772$.
\end{theorem}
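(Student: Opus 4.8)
As in the proof of Theorem~\ref{Sommerfeld1}, I would first reduce to a slightly smaller region: choose $r_0\in(r,R_0)$ with $\inf_{\partial A_\sigma}\phi>\mu$ for all $\sigma\in(r,r_0)$, prove the asserted inequality with $r$ replaced by an arbitrary $s\in(r,r_0)$ and $A_i^j(r)$ replaced by $B_i^j(s)$, and then let $s\to r^+$; since $B_1^j,B_2^j\ge0$ and $(|x-R_j|/R_j)^{\eta}$, $(s/|x-R_j|)^{\xi}$ depend continuously on $s$, passing to the limit yields the claim. So fix $s\in(r,r_0)$, write $\omega^j\coloneqq\omega^j_{B_1(s),B_2(s)}$, extend it to a radial function of $x-R_j$ on $\R^3\setminus\{R_j\}$, and put $g_j(x)\coloneqq\omega^j(x-R_j)+\mu$. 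The first point is that $g_j$ is a supersolution of the TF equation on $\R^3\setminus\{R_j\}$, i.e.\ $\Delta g_j\le4\pi c_\mathrm{TF}^{-3/2}[g_j-\mu]_+^{3/2}$. Indeed $g_j-\mu=w_0(1+v_j)$ with $w_0(x)=c_\mathrm{S}|x-R_j|^{-4}$ and $v_j=B_1^j(|x-R_j|/R_j)^{\eta}+B_2^j(s/|x-R_j|)^{\xi}\ge0$; here $w_0$ solves $\Delta w_0=4\pi c_\mathrm{TF}^{-3/2}w_0^{3/2}$ with $4\pi c_\mathrm{TF}^{-3/2}w_0^{1/2}=12|x-R_j|^{-2}$, and the exponents are chosen precisely so that $(\eta-3)(\eta-4)=18=(\xi+3)(\xi+4)$, hence $|x-R_j|^{\eta-4}$ and $|x-R_j|^{-\xi-4}$ both solve the linearized equation $\Delta u=18|x-R_j|^{-2}u$, that is, $\Delta(w_0 v_j)=\tfrac32\cdot4\pi c_\mathrm{TF}^{-3/2}w_0^{3/2}v_j$. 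Combining this with the convexity bound $(1+v_j)^{3/2}\ge1+\tfrac32 v_j$ gives $\Delta g_j=\Delta w_0+\Delta(w_0 v_j)\le4\pi c_\mathrm{TF}^{-3/2}w_0^{3/2}(1+\tfrac32 v_j)\le4\pi c_\mathrm{TF}^{-3/2}\big(w_0(1+v_j)\big)^{3/2}=4\pi c_\mathrm{TF}^{-3/2}[g_j-\mu]_+^{3/2}$.

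Next I would record the boundary behaviour of $g_j$. Since $s\le R_0\le R_\mathrm{min}/4\le R_j/2$, the ball $\overline{B(R_j,s)}$ lies in $\Gamma_j$, and on $\{|x-R_j|=s\}$ the definitions of $B_1^j(s),B_2^j(s)$ are rigged so that $B_1^j(s)(s/R_j)^{\eta}+B_2^j(s)\ge\sup_{\partial A_s}\!\big(c_\mathrm{S}^{-1}s^4(\phi-\mu)-1\big)$ (with equality when the $[\,\cdot\,]_+$ in $B_2^j$ is active, the inactive case giving the inequality for free); hence $g_j\ge\phi$ on $\partial B(R_j,s)$. At infinity $g_j\to+\infty$ (as $\eta>4$ and $B_1^j>0$) while $\phi\to0$. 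Crucially, the identity $B_1^j(s)(\eta-4)=4+B_2^j(s)(4+\xi)(s/R_j)^{\xi}$ defining $B_1^j$ says exactly that the profile $\omega^j(t)$ has a critical point at $t=R_j$; since $(\omega^j)'(t)=c_\mathrm{S}t^{-5}\big(-4+B_1^j(\eta-4)(t/R_j)^{\eta}-B_2^j(4+\xi)(s/t)^{\xi}\big)$ with the bracket strictly increasing in $t$, it follows that $\omega^j$ is strictly decreasing on $(0,R_j)$ and strictly increasing on $(R_j,\infty)$. Consequently, as every point of $\partial\Gamma_j$ satisfies $|x-R_j|\ge R_j$ and $(x-R_j)\cdot n_j\ge0$ (convexity of $\Gamma_j$, $n_j$ the outer normal), we obtain $\partial_{n_j}g_j\ge0$ on $\partial\Gamma_j$.

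It then remains to prove $\phi\le g_j$ on $\Gamma_j\cap A_s$ for every $j$; the theorem follows. Suppose not and set $M\coloneqq\max_j\sup_{\overline{\Gamma_j\cap A_s}}(\phi-g_j)>0$, attained at $(j^*,x^*)$. An interior maximum is impossible: there $\phi>g_{j^*}>\mu$, so $\Delta(\phi-g_{j^*})\ge0$, and the strong maximum principle would force $\phi-g_{j^*}\equiv M$ locally, whence $\Delta\phi=\Delta g_{j^*}$ and $[g_{j^*}+M-\mu]_+^{3/2}\le[g_{j^*}-\mu]_+^{3/2}$, impossible for $M>0$. A maximum on $\partial B(R_{j^*},s)$ is excluded by the inner bound and one at infinity by $g_{j^*}\to\infty$. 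Hence $x^*$ lies on a Voronoi face $\partial\Gamma_{j^*}\cap\partial\Gamma_k$. Applying Hopf's lemma to the (subharmonic, non-constant) function $\phi-g_{j^*}$ on $\Gamma_{j^*}\cap A_s$ at $x^*$, using that $\Gamma_{j^*}$ is convex so there is an interior ball touching $x^*$, yields $\partial_{n_{j^*}}\phi(x^*)>\partial_{n_{j^*}}g_{j^*}(x^*)\ge0$. Running the same argument on the adjacent cell $\Gamma_k$ and using that on the shared face $n_k=-n_{j^*}$ together with $\partial_{n_k}g_k\ge0$ on $\partial\Gamma_k$, one adds the two Hopf estimates to get $0=\nabla\phi(x^*)\cdot(n_{j^*}+n_k)>\partial_{n_{j^*}}g_{j^*}(x^*)+\partial_{n_k}g_k(x^*)\ge0$, a contradiction. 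Finally, letting $s\to r^+$ gives the asserted bound.

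The main obstacle is exactly the last step: a maximiser lying on a Voronoi face need not be a maximiser for the \emph{second} cell sharing that face (one only gets $g_k(x^*)\ge g_{j^*}(x^*)$, not equality), so combining the two Hopf estimates is not automatic; one also has to deal with maximisers on the lower-dimensional edges and vertices of the Voronoi diagram. Resolving this is where the argument genuinely goes beyond the symmetric diatomic case of~\cite{Samojlow} (where the reflection symmetry of $\phi$ across the bisecting plane makes $\partial_n\phi=0$ there and closes the argument directly). The key tools to push it through for $K>2$ are the convexity of the Voronoi cells, the subharmonicity of $\phi$, and the monotonicity of $\omega^j$ about $t=R_j$ established above — which is precisely why $B_1^j$ has the stated form — and the careful bookkeeping of the simultaneous comparison over all cells is the bulk of the technical work.
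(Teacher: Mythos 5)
Your preparatory steps coincide with the paper's: the supersolution property of $\omega^j_{B_1,B_2}$ (via the linearization exponents $\xi,\eta$), the boundary inequality on $\partial A_s$ encoded in $B_2^j$, and the observation that $B_1^j$ is chosen exactly so that the radial derivative of $\omega^j_{B_1,B_2}$ vanishes at $|x|=R_j$ and hence, since $|x-R_j|\ge R_j$ and $n_j\cdot(x-R_j)\ge 0$ on $\partial\Gamma_j$ by convexity, $n_j\cdot\nabla\omega^j_{B_1,B_2}(x-R_j)\ge 0$ on $\partial\Gamma_j$. But the way you deploy these ingredients --- a separate comparison $\phi\le g_j$ on each cell, closed by a double application of Hopf's lemma across a shared Voronoi face --- has the gap you yourself flag, and it is fatal as written: at a maximizer $x^*\in\partial\Gamma_{j^*}\cap\partial\Gamma_k$ of $\phi-g_{j^*}$ you have no control on $\phi-g_k$ near $x^*$ (it may be negative there, and $x^*$ need not be a local maximum of it relative to $\Gamma_k$), so Hopf's lemma cannot be invoked on the second cell and the two one-sided normal derivatives cannot be added; maximizers on the edges and vertices where three or more cells meet are likewise untreated. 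Since you do not resolve this, the argument is incomplete precisely at the step that distinguishes the molecular case from the atomic and symmetric diatomic ones.

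The paper never compares cell by cell. It sets $u=\phi-\bigl(\sum_{j}\omega^j_{B_1,B_2}(x-R_j)\1_{\Gamma_j}(x)+\mu\bigr)$ and shows $\int u\,\Delta\psi\ge 0$ for every nonnegative $\psi\in C_c^\infty(A_s\cap\{u>0\})$ by integrating by parts over each $\Gamma_j$ with Gauss's theorem. The face terms containing $\nabla\psi$ cancel (continuity of the glued function across the faces is invoked), and the remaining face terms appear as $\sum_j\int_{\partial\Gamma_j}\psi\, n_j\cdot\nabla\omega^j_{B_1,B_2}\,dS\ge 0$: each cell contributes with the favorable sign on its own, so no matching of information between adjacent cells is required. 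A single maximum principle on $A_s\cap\{u>0\}$ then concludes. In short, you have established all the needed estimates but are missing the one structural idea: glue the $\omega^j$ into a single piecewise comparison function and read off the sign of the singular part of its distributional Laplacian on the Voronoi faces, instead of trying to transfer pointwise Hopf-type information from one cell to its neighbour.
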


\begin{proof}
We prove the upper bound with $r$ replaced by any $s \in (r, R_0)$.
Then $A_i^j(s) = B_i^j(s)$ for $i=1, 2$.
Our strategy is to apply the maximum principle to the function
\[
u(x) \coloneqq \phi(x) -\left(\sum_{j=1}^K\omega_{B_1, B_2}^j(x-R_j)\1_{\Gamma_j}(x) + \mu\right).
\]
By definition, we have $u(x) \le 0$ on $\partial A_s$.
Hence it suffices to show that $-\Delta u\le 0$ in $A_s \cap \{u>0 \}$.

For any nonnegative function $\psi \in C_c^\infty(A_s \cap \{u>0\})$ we may compute
\begin{align*}
\int_{\R^3} u(x) \Delta \psi(x) \, dx &= \int_{\R^3} \phi(x) \Delta \psi (x) \, dx - \sum_{j=1}^K \int_{\Gamma_j} \omega_{B_1, B_2}^j(x-R_j)\Delta \psi(x) \, dx.
\end{align*}
The second integral is
\begin{align*}
\sum_{j=1}^K \int_{\Gamma_j} \omega_{B_1, B_2}^j(x-R_j)\Delta \psi(x) \, dx 
&=  \sum_{j=1}^K \int_{\partial \Gamma_j} \omega_{B_1, B_2}^j(x-R_j)n_j \cdot \nabla \psi(x) \, dx \\
&\quad - \sum_{j=1}^K \int_{\Gamma_j} \nabla\omega_{B_1, B_2}^j(x-R_j) \cdot \nabla \psi(x) \, dx,
\end{align*}
by Gauss's theorem.
The first integral is zero from the continuity.
We note that $\Delta \omega_{B_1, B_2}^j \le 4\pi c_\mathrm{TF}^{-3/2}(\omega_{B_1, B_2}^j)^{3/2}$ for $|x| \neq 0$.
Then we have
\begin{align*}
\int_{\R^3} u(x) \Delta \psi(x) \, dx &\ge  \sum_{j=1}^K \int_{\partial \Gamma_j} \psi(x)n_j \cdot \nabla  \omega_{B_1, B_2}^j(x-R_j) \, dx.
\end{align*}
By direct computation, we see
\[
\nabla  \omega_{B_1, B_2}^j(x) = c_\mathrm{S}\frac{x}{|x|^6}\left(B_1^j(\eta - 4)\left(\frac{|x|}{R_j}\right)^\eta - B_2^j(4 + \xi)\left(\frac{r}{|x|}\right)^\xi -4 \right).
\]
From the convexity of $\Gamma_j$, we learn $n_j \cdot (x-R_j) \ge 0$ on $\partial \Gamma_j$.
Hence $n_j \cdot \nabla  \omega_{B_1, B_2}^j(x-R_j) \ge 0$.
This shows $\Delta u \ge 0$.
\end{proof}


\section{Initial step}
From now on, we assume $N \ge Z \ge 1$.
In this section, our goal is as follows.
\begin{lemma}[initial step]
\label{initial}
There is a universal constant $C_1 > 0$ so that
\begin{equation}
\label{initial1}
\sup_{x \in \partial A_r}\left|\Phi^\mathrm{RHF}_{r}(x) - \Phi_{r}^\mathrm{TF}(x) \right| \le C_1Z^{49/36 - a}r^{1/12},
\end{equation}
for all $r \in (0, R_0]$ with $a=1/198$.
\end{lemma}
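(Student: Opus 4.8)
The plan is to write $\delta\Phi:=\Phi^\mathrm{RHF}_r-\Phi^\mathrm{TF}_r$ as the Coulomb potential of a density difference, control it in Coulomb norm, and convert this into the pointwise bound on $\partial A_r$ by a spherical–averaging argument. Straight from the definitions, for every $x$ one has
\[
\delta\Phi(x)=\int_{A_r^c}\frac{\rho^\mathrm{TF}(y)-\rho^\mathrm{RHF}(y)}{|x-y|}\,dy ,
\]
so $\delta\Phi$ is the Coulomb potential of $\delta\rho\,\1_{A_r^c}$ with $\delta\rho:=\rho^\mathrm{TF}-\rho^\mathrm{RHF}$; in particular $\delta\Phi$ is harmonic in $A_r$ and vanishes at infinity. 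Fix $x_0\in\partial A_r$ and a scale $\sigma\in(0,r]$ to be optimized later, and let $\omega$ be the uniform probability measure on $\partial B(x_0,\sigma)$, whose Coulomb potential at $x$ equals $\max(|x-x_0|,\sigma)^{-1}$ by Newton's theorem. Subtracting and adding $\max(|x_0-y|,\sigma)^{-1}$ in the integrand gives
\[
\delta\Phi(x_0)=2D\bigl(\delta\rho\,\1_{A_r^c},\,\omega\bigr)+\int_{B(x_0,\sigma)}\Bigl(\tfrac1{|x_0-y|}-\tfrac1\sigma\Bigr)\delta\rho(y)\,\1_{A_r^c}(y)\,dy ,
\]
and by the Cauchy--Schwarz inequality for the Coulomb form the first term is bounded by $\sigma^{-1/2}\bigl(2D[\delta\rho\,\1_{A_r^c}]\bigr)^{1/2}$, while the second is at most $\int_{B(x_0,\sigma)}\frac{\rho^\mathrm{TF}(y)+\rho^\mathrm{RHF}(y)}{|x_0-y|}\,dy$. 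Exchanging the roles of the two models gives the matching lower bound, so it remains to estimate these two quantities uniformly in $x_0\in\partial A_r$.

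For the energy–norm term the key input is the standard semiclassical comparison of the RHF and neutral TF energies: a coherent–state lower bound for $\tr[(-\tfrac12\Delta-V_Z)\gamma^\mathrm{RHF}]$, the strict convexity of $D[\cdot]$ and of $\rho\mapsto\int\rho^{5/3}$, and the trial–state bound $E^\mathrm{RHF}(N,Z)\le E^\mathrm{RHF}(Z,Z)\le E^\mathrm{TF}(Z)+CZ^{2}$ together yield $D[\rho^\mathrm{RHF}-\rho^\mathrm{TF}]\le CZ^{2}$ (only a bound $CZ^{\beta}$ with $\beta<7/3$ is needed). Combining this with the far–region contribution $D[\delta\rho\,\1_{A_r}]$, which by Hardy--Littlewood--Sobolev and the interpolation $\|f\|_{6/5}\le\|f\|_1^{7/12}\|f\|_{5/3}^{5/12}$ is dominated by $C(\int_{A_r}\rho^\#)^{7/6}(\int_{A_r}(\rho^\#)^{5/3})^{5/6}$ for $\#=\mathrm{RHF},\mathrm{TF}$, reduces the first term to explicit powers of $Z$ and $r$: on the RHF side via $\int_{A_r}\rho^\mathrm{RHF}$ and $\int_{A_r}(\rho^\mathrm{RHF})^{5/3}\le C\tr(-\Delta\eta_r\gamma^\mathrm{RHF}\eta_r)$, controlled by Lemmas~\ref{theorem:outside}, \ref{outside} and~\ref{inside2} after using $\Phi^\mathrm{RHF}_r\le V_Z$; on the TF side via the Sommerfeld estimate of Theorem~\ref{Sommerfeld1}, which gives $\rho^\mathrm{TF}\sim[\varphi^\mathrm{TF}]_+^{3/2}$ with $\varphi^\mathrm{TF}:=V_Z-\rho^\mathrm{TF}\star|x|^{-1}$ satisfying $\varphi^\mathrm{TF}(x)\lesssim\sum_j\min\bigl(z_j|x-R_j|^{-1},c_\mathrm{S}|x-R_j|^{-4}\bigr)$ on $A_r$, hence explicit bounds on $\int_{A_r}\rho^\mathrm{TF}$ and $\int_{A_r}(\rho^\mathrm{TF})^{5/3}$.

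For the local correction, Hölder's inequality with $|x_0-\cdot|^{-1}\in L^{5/2}(B(x_0,\sigma))$ gives $\int_{B(x_0,\sigma)}\rho^\#(y)|x_0-y|^{-1}\,dy\le C\sigma^{1/5}\|\rho^\#\|_{L^{5/3}(B(x_0,\sigma))}$, and the local $L^{5/3}$–norm is bounded near $\partial A_r$ by the pointwise Sommerfeld bound for $\rho^\mathrm{TF}$ and by the local Lieb--Thirring estimate of Lemma~\ref{inside2} for $\rho^\mathrm{RHF}$ (again with $\Phi^\mathrm{RHF}_r\le V_Z$). Collecting everything yields $\delta\Phi(x_0)\le C\sigma^{-1/2}Z^{p_1}r^{-q_1}+C\sigma^{1/5}Z^{p_2}r^{q_2}$ with explicit exponents; choosing $\sigma$ to be a suitable negative power of $Z$ times a power of $r$, truncated at $r$, balances the two terms, and taking the supremum over $x_0\in\partial A_r$ produces the claimed bound $C_1Z^{49/36-a}r^{1/12}$ with $a=1/198$.

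The crux is a delicate bookkeeping of exponents. Since $\rho^\mathrm{RHF}$ and $\rho^\mathrm{TF}$ genuinely disagree in the innermost (Scott) region $|x-R_j|\lesssim Z^{-1}$, the energy–norm input cannot be pushed below order $Z^{2}$, which pins the $Z$–exponent near $4/3$ at the decisive scale $r\sim Z^{-1/3}$; squeezing out a strictly positive gap $a>0$ from the competition between the $\sigma^{-1/2}$– and $\sigma^{1/5}$–terms and the various $r$–powers, uniformly for all $r\in(0,R_0]$, is what forces the particular fractional constants $49/36$, $1/12$ and $1/198$. A secondary difficulty is the molecular geometry: all TF inputs must be taken in the form valid on the Voronoi cells $\Gamma_j$ (Theorems~\ref{Sommerfeld1}--\ref{Sommerfeld2}), and the subharmonicity of $\delta\Phi$ and of $\varphi^\mathrm{TF}$ in $A_r$ is what keeps the various localizations and cut–offs from costing more than lower–order terms.
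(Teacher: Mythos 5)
Your overall skeleton — semiclassical upper and lower bounds to control a Coulomb norm of $\rho^\mathrm{RHF}-\rho^\mathrm{TF}$, then conversion to a pointwise bound on $\partial A_r$ using harmonicity — is the same as the paper's. But the conversion step is where your argument breaks down. The paper invokes the Coulomb estimate (Lemma~\ref{coulomb}): $\bigl|\int_{|y|<|x|}f(y)|x-y|^{-1}dy\bigr|\le C\|f\|_{L^{5/3}}^{5/6}(|x|D[f])^{1/12}$, applied with the \emph{global} difference $f=\rho^\mathrm{RHF}-\rho^\mathrm{TF}$ recentred at each $R_j$. The two crucial features are that only the global $D[f]$ and global $\|f\|_{5/3}$ enter (no localization of $f$ to $A_r^c$ is needed), and that $D[f]$ carries the tiny exponent $1/12$. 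With $D[f]\le CZ^{25/11}$ and $\|f\|_{5/3}\le CZ^{7/5}$ this gives exactly $Z^{7/6}\cdot(rZ^{25/11})^{1/12}=Z^{179/132}r^{1/12}=Z^{49/36-1/198}r^{1/12}$. Your replacement — Cauchy--Schwarz against a uniform sphere measure — puts the weight $1/2$ on the Coulomb norm and, worse, forces you to control $D[\delta\rho\,\1_{A_r^c}]$, hence $D[\delta\rho\,\1_{A_r}]$. By your own HLS/interpolation bound this localized term is only controlled by $\|\delta\rho\|_{L^1(A_r)}^{7/6}\|\delta\rho\|_{L^{5/3}(A_r)}^{5/6}$, which with the only a priori inputs available at this stage ($\int\rho^\mathrm{RHF}\le 2Z+K$, $\int(\rho^\mathrm{RHF})^{5/3}\le CZ^{7/3}$) is of order $Z^{7/3}$ — it swallows the semiclassical gain entirely. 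The refined exterior bounds $\int_{A_r}\rho^\mathrm{RHF}\le Cr^{-3}$ and $\int_{A_r}(\rho^\mathrm{RHF})^{5/3}\le Cr^{-7}$ that you invoke via Lemmas~\ref{theorem:outside}, \ref{outside}, \ref{inside2} are \emph{not} available here: they are consequences of the bootstrap hypothesis (\ref{initialassumption}) in Lemma~\ref{lem.ite}, i.e.\ of the very screened-potential bound this initial step is supposed to seed; using only $\Phi_r^\mathrm{RHF}\le V_Z$ gives $\sup\phi^{-1}[\Phi_r^\mathrm{RHF}]_+\le Z$ and hence bounds of order $Z$, not $r^{-3}$. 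Tracking your two terms $\sigma^{-1/2}Z^{7/6}$ and $\sigma^{1/5}\cdot(\text{local }L^{5/3})$ with these inputs, the optimized bound does not have the form $Z^{49/36-a}r^{1/12}$ and, for small $r$, grows like $r^{-1/2}$ rather than vanishing; in particular it only implies $|\Phi_r^\mathrm{RHF}-\Phi_r^\mathrm{TF}|\le\beta r^{-4}$ for $r\lesssim Z^{-1/3}$ with no margin. The gain $a=1/198>0$ is not decorative: it is exactly what makes Case~1 of Step~5 of the iteration close (the condition $\frac{1+\delta}{1-\delta}(\frac{49}{36}-a)<\frac{49}{36}$ requires $a>0$), so losing it is fatal.

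A secondary point: your claim $D[\rho^\mathrm{RHF}-\rho^\mathrm{TF}]\le CZ^{2}$ is not justified by the tools you cite. The coherent-state/trial-state comparison, after optimizing the smearing scale $s$, gives an error $CZ^{25/11}$ (as in the paper), not $CZ^{2}$; reaching $Z^2$ would require a Scott-correction analysis that is neither stated nor needed. Your hedge that any $\beta<7/3$ suffices is correct in spirit, but then the specific constants $49/36$, $1/12$, $1/198$ in the statement are determined by the pair $(25/11,\,7/3)$ fed into the Coulomb estimate, and your scheme neither uses that estimate nor reproduces those exponents. To repair the proof you should (i) keep your semiclassical step but claim only $D[\rho^\mathrm{RHF}-\rho^\mathrm{TF}]\le CZ^{25/11}$, and (ii) replace the sphere-averaging/Cauchy--Schwarz conversion by the Coulomb estimate of Lemma~\ref{coulomb} applied to the unlocalized difference, summed over the $K$ disjoint balls $B(R_j,r)$ making up $A_r^c$.
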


\begin{proof}
The strategy is to bound $\E^\mathrm{RHF}(\gamma^\mathrm{RHF})$ from above and below by using the semi-classical estimates.

{\bf Upper bound.}
We will show that
\begin{equation}
\label{upper}
\E^\mathrm{RHF}(\gamma^\mathrm{RHF}) \le \E^\mathrm{TF}(\rho^\mathrm{TF}) + CZ^{25/11}.
\end{equation}

Since $E^\mathrm{RHF}(N, Z)$ is non-increasing in $N$ we have
\[
\E^\mathrm{RHF}(\gamma^\mathrm{RHF}) \le \inf\{ \E^\mathrm{RHF}(\gamma) \colon 0 \le \gamma \le 1, \, \tr \gamma \le N\}.
\]

We now use the following lemma taken from \cite[Lemma 11] {mullerIC} and \cite[Lemma 8.2]{SolovejIC}.
\begin{lemma}
\label{semicl}
For fixed $s>0$ and smooth $g \colon \R^3 \to [0, 1]$ satisfying $\mathrm{supp} \, g \subset \{|x|<s\}$, $\int g^2=1$, $\int |\nabla g|^2 \le Cs^{-2}$ it follows that

\begin{enumerate}
\item[(1)] For any $V \colon \R^3 \to \R$ with $[V]_+$, $[V - V \star g^2]_+ \in L^{5/2}$ and for any $0 \le \gamma \le 1$
\begin{equation*}
\begin{split}
\tr \left[\left(- \frac{\Delta}{2} -V\right)\gamma\right] &\ge -2^{5/2}(15 \pi^2)^{-1}\int[V]^{5/2}_+ - Cs^{-2} \tr \gamma \\
& \quad -C\left( \int [V]^{5/2}_+ \right)^{3/5} \left(\int [V-V \star g^2]^{5/2}_+\right)^{2/5}.
\end{split}
\end{equation*}

\item[(2)]
If $[V]_+ \in L^{5/2} \cap L^{3/2}$, then there is a density-matrix $\gamma$ so that $\rho_\gamma = 2^{5/2}(6\pi^2)^{-1}[V]^{3/2}_+ \star g^2$,
\[
\tr \left( -\frac{\Delta}{2}\gamma \right) \le 2^{3/2}(5\pi^2)^{-1} \int [V]_+^{5/2}
+ Cs^{-2}\int[V]_+^{3/2}.
\]
\end{enumerate}
\end{lemma}

We introduce the Thomas-Fermi potential
\[
\phi^\mathrm{TF}(x) = V_Z(x) - \rho^\mathrm{TF} \star |x|^{-1}
\]
and apply Lemma \ref{semicl} (2) with $V=\phi^\mathrm{TF}$ and a spherically symmetric $g$ to obtain a density matrix $\gamma'$.
Because of the Thomas-Fermi equation we have
\[
\rho_{\gamma'} = 2^{5/2}(6\pi^2)^{-1}(\phi^\mathrm{TF})^{3/2}\star g^2 = \rho^\mathrm{TF} \star g^2.
\]
Since
\[
\tr \gamma' = \int \rho_{\gamma'} = \int \rho^\mathrm{TF} =Z\le N,
\]
we obtain
\[
\inf\{\E^\mathrm{RHF}(\gamma) \colon 0 \le \gamma \le 1, \, \tr \gamma \le N\} \le \E^\mathrm{RHF}(\gamma').
\]
Again by Lemma \ref{semicl} (2), we have
\begin{equation*}
\label{semibound}
\begin{split}
\E^\mathrm{RHF}(\gamma') &\le 2^{3/2}(5\pi^2)^{-1}\int [V]_+^{5/2} + Cs^{-2}\int[V]_+^{3/2} -\int V_Z(\rho^\mathrm{TF}\star g^2) +D[\rho^\mathrm{TF}\star g^2]\\
&\le \frac{3}{10}c_\mathrm{TF}\int_{\R^3}\rho^\mathrm{TF}(x)^{5/3}\, dx - \int V_Z\rho^\mathrm{TF} + D[\rho^\mathrm{TF}] \\
&+ Cs^{-2}\int \rho^\mathrm{TF} +\int(V_Z - V_Z\star g^2)\rho^\mathrm{TF}  \\
&=\E^\mathrm{TF}(\rho^\mathrm{TF})+ Cs^{-2}\int \rho^\mathrm{TF} +\int(V_Z - V_Z\star g^2)\rho^\mathrm{TF}.
\end{split}
\end{equation*}
In the second inequality, we have used $[g^2 \star |x|^{-1} \star g^2](x-y) \le |x-y|^{-1}$.
This fact follows from Fourier transform.
By Newton's theorem, we see
\begin{equation}
\label{Newton}
V_Z - V_Z\star g^2 = \sum_{j=1}^Kz_j\left(|x-R_j|^{-1} \1(|x-R_j| \le s)\right).
\end{equation}
Then, by H\"{o}lder's inequality,
\begin{equation*}
\begin{split}
\int(V_Z - V_Z\star g^2)\rho^\mathrm{TF}
&\le \left( \int_{\R^3}\rho^\mathrm{TF}(x)^{5/3} \, dx\right)^{3/5}\left( \int (V_Z - V_Z\star g^2)^{5/2}\right)^{2/5} \\
&\le CZ^{12/5}\left(\sum_{i=1}^K z_i/Z\int_{|x-R_i|\le s} |x-R_i|^{-5/2}\right)^{2/5}\, dx \\
&\le CZ^{12/5} s^{1/5},
\end{split}
\end{equation*}
where we have used (\ref{Newton}) and the convexity of $x^{5/2}$.
Thus, after optimization in $s$, we get
\[
\E^\mathrm{RHF}(\gamma') \le \E^\mathrm{TF}(\rho^\mathrm{TF}) + CZ^{25/11}.
\]
This shows the desired upper bound.

{\bf Lower bound.}
We will show that
\begin{equation}
\label{lower}
\E^\mathrm{RHF}(\gamma^\mathrm{RHF}) \ge \E^\mathrm{TF}(\rho^\mathrm{TF}) + D[\rho^\mathrm{RHF} - \rho^\mathrm{TF}] -CZ^{25/11}.
\end{equation}
We can write
\[
\E^\mathrm{RHF}(\gamma^\mathrm{RHF})  = \tr \left[\left(-\frac{\Delta}{2} - \phi^\mathrm{TF}\right)\gamma^\mathrm{RHF}\right] + D[\rho^\mathrm{RHF} - \rho^\mathrm{TF}] - D[\rho^\mathrm{TF}].
\]
Then, from Lemma \ref{semicl} (1) we have
\begin{equation*}
\begin{split}
\tr \left[\left(-\frac{\Delta}{2} - \phi^\mathrm{TF}\right)\gamma^\mathrm{RHF} \right]
&\ge -2^{5/2}(15 \pi^2)^{-1} \int_{\R^3}\phi^\mathrm{TF}(x)^{5/2}\, dx -Cs^{-2} \tr \gamma^\mathrm{RHF} \\
&-C\left( \int_{\R^3} \phi^\mathrm{TF}(x)^{5/2} \, dx \right)^{3/5} \left(\int [\phi^\mathrm{TF}-\phi^\mathrm{TF} \star g^2]_+^{5/2}\right)^{2/5}.
\end{split}
\end{equation*}
By the TF equation, we see that
\[
\int_{\R^3} \phi^\mathrm{TF}(x)^{5/2} \, dx = C \int_{\R^3}\rho^\mathrm{TF}(x)^{5/3} \le C{Z^{7/3}}.
\]
Since $V_Z - V_Z \star g^2 \ge 0$, because $V_Z$ is superharmonic, we obtain
\[
\int [\phi^\mathrm{TF}-\phi^\mathrm{TF} \star g^2]_+^{5/2} 
\le \int [V_Z- V_Z \star g^2]_+^{5/2}
\le CZ^{5/2}s^{1/2}.
\]
Hence we find that
\begin{equation*}
\begin{split}
\tr \left[\left(-\frac{\Delta}{2} - \phi^\mathrm{TF}\right)\gamma^\mathrm{RHF} \right]
&\ge -2^{5/2}(15 \pi^2)^{-1}\int_{\R^3}\phi^\mathrm{TF}(x)^{5/2}\, dx -Cs^{-2}Z - CZ^{12/5}s^{1/5}.
\end{split}
\end{equation*}
Optimizing over $s>0$, we get
\[
\tr \left[\left(-\frac{\Delta}{2} - \phi^\mathrm{TF}\right)\gamma^\mathrm{RHF} \right]
\ge -2^{5/2}(15 \pi^2)^{-1}\int_{\R^3}\phi^\mathrm{TF}(x)^{5/2}\, dx -CZ^{25/11}.
\]
Using the relation from the TF equation
\[
-2^{5/2}(15 \pi^2)^{-1} \int_{\R^3}\phi^\mathrm{TF}(x)^{5/2}\, dx -D[\rho^\mathrm{TF}] = \E^\mathrm{TF}(\rho^\mathrm{TF}),
\]
we arrive at the lower bound (\ref{lower}).

{\bf Conclusion.}
Combining (\ref{upper}) and ($\ref{lower}$), we infer that
\begin{equation}
D[\rho^\mathrm{RHF}-\rho^\mathrm{TF}] \le CZ^{25/11}.
\end{equation}

The following lemma is taken from~\cite[Cor. 9.3]{SolovejIC} and \cite[Lemma 12]{mullerIC}.
\begin{lemma}[Coulomb estimate]
\label{coulomb}
For every $f \in L^{5/3}(\R^3) \cap L^{6/5}(\R^3)$ and $x \in \R^3$, we have
\begin{equation*}
\left| \int_{|y| < |x|} \frac{f(y)}{|x-y|} \, dy \right| \le C \| f \|_{L^{5/3}}^{5/6}(|x|D[f])^{1/12}.
\end{equation*}
\end{lemma}
Using this Coulomb estimate with $f(y) = (\rho^\mathrm{RHF} - \rho^\mathrm{TF})(y+R_j)$, we find that, for $r \in (0, R_0]$,
\begin{equation}
\begin{split}
\sup_{x \in A_r}|\Phi^\mathrm{RHF}_{r}(x) - \Phi_{r}^\mathrm{TF}(x)| &\le \sum_{j=1}^K\sup_{|x-R_j| = r}\left| \int_{|y| < r}\frac{\rho^\mathrm{RHF}(y+R_j)-\rho^\mathrm{TF}(y+R_j)}{|x-R_j -y|} \, dy \right| \\
&\le C \|\rho^\mathrm{RHF} - \rho^\mathrm{TF}\|_{L^{5/3}}^{5/6}(rD[\rho^\mathrm{RHF}-\rho^\mathrm{TF}])^{1/12} \\
&\le C \|\rho^\mathrm{RHF} - \rho^\mathrm{TF}\|_{L^{5/3}}^{5/6}r^{1/12}Z^{25/132},
\end{split}
\end{equation}
where we have used the harmonicity.
Combining this with the kinetic energy estimates
\[
\int (\rho^\mathrm{RHF})^{5/3} \le CZ^{7/3}, \quad \int (\rho^\mathrm{TF})^{5/3} \le CZ^{7/3},
\]
we find that
\[
\sup_{x \in \partial A_r}|\Phi^\mathrm{RHF}_{r}(x) - \Phi_{r}^\mathrm{TF}(x)| \le CZ^{179/132}r^{1/12},
\]
for all $r \in (0, R_0]$.
Since $179/132 = 49/36 - 1/198$, this implies the desired bound (\ref{initial1}).
\end{proof}


\section{Iterative step}
In this section, we will prove the following theorem.
\begin{theorem}[iterative step]
\label{Thm.iterative}
There are universal constants $C_2, \beta, \delta, \epsilon >0$ such that, if 
\begin{align}
\label{initialassumption}
\sup_{x \in \partial A_s}\left|\Phi^\mathrm{RHF}_{s}(x) - \Phi_{s}^\mathrm{TF}(x) \right| \le \beta s^{-4} \quad \text{for any } s \le D,
\end{align}
where $D \in [Z^{-1/3}, R_0]$, then, with $r \coloneqq D^{1+\delta}$ and $\tilde r \coloneqq R_0^{-1}r^{\frac{\xi}{\xi + \eta}} R_\mathrm{min}^{\frac{\eta}{\xi + \eta}}$, it follows that
\begin{equation}
\sup_{x \in \partial A_s}\left| \Phi^\mathrm{RHF}_{s}(x) - \Phi_{s}^\mathrm{TF}(x)\right| \le C_2s^{-4+\epsilon} \quad \text{for any } s\in \left[r^{\frac{1}{1+\delta}},  \min\{r^{\frac{1-\delta}{1+\delta}}, \tilde r\}\right].
\end{equation}
\end{theorem}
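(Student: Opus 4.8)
### Proof Proposal

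The plan is to follow the bootstrap scheme from Solovej's argument as implemented in \cite{SolovejIC, mullerIC, Samojlow}, adapting the comparison step to the molecular geometry via the Voronoi-cell Sommerfeld bounds of Section~4. The strategy is to run a two-sided estimate: the hypothesis (\ref{initialassumption}) at scales $s \le D$ feeds into the $L^1$ exterior estimate and the splitting lemmas to control the RHF density and potential in the annular region near $\partial A_r$; then the comparison with the screened Thomas-Fermi functional, combined with the Sommerfeld asymptotics (Theorems~\ref{Sommerfeld1} and \ref{Sommerfeld2}), improves the bound from order $s^{-4}$ to order $s^{-4+\epsilon}$ on the stated range of $s$.

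First I would fix $r = D^{1+\delta}$ and use Lemma~\ref{outside} together with Lemma~\ref{inside2} to show that, under (\ref{initialassumption}), the quantities $\int_{A_r}\rho^\mathrm{RHF}$, $\sup_{A_r}\phi^{-1}[\Phi_r^\mathrm{RHF}]_+$, and $\tr(-\Delta \eta_r \gamma^\mathrm{RHF}\eta_r)$ are all bounded by appropriate powers of $r$ (roughly $r^{-1}$, $r^{-4}$-type scaling after using $\phi(x)^{-1}\lesssim |x-R_j|$ near the nuclei); here the key input is that the screened potential $\Phi_s^\mathrm{RHF}$ is close to $\Phi_s^\mathrm{TF}$, and the latter is controlled by the Sommerfeld bounds applied to the TF potential $\phi^\mathrm{TF}$ on $A_r$, which decays like $c_\mathrm{S}|x-R_j|^{-4}$ in each Voronoi cell $\Gamma_j$. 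Next, I would invoke Lemma~\ref{frominside}: the operator $\eta_r \gamma^\mathrm{RHF}\eta_r$ is an approximate minimizer of the screened functional $\E_r^\mathrm{RHF}$, and the error $\mathcal{R}$ is controlled by the density in the thin annulus $A_{(1-\lambda)r}\cap A_{(1+\lambda)r}^c$ (handled by the moving-plane $L^1$ bound) plus a term $\lambda r^3\sup[\Phi_{(1-\lambda)r}^\mathrm{RHF}]_+^{5/2}$. Comparing $\E_r^\mathrm{RHF}(\eta_r\gamma^\mathrm{RHF}\eta_r)$ against the value on a TF-type trial state (built as in Lemma~\ref{semicl}(2) from $[\Phi_r^\mathrm{TF}]_+$) gives a bound on $D[\chi_r^+(\rho^\mathrm{RHF}-\rho^\mathrm{TF})]$, i.e. a Coulomb-norm smallness of the difference of densities restricted to $A_r$, of the form $D[\cdots]\lesssim r^{-\text{something}}$ with a gain over the naive estimate.

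Then I would convert this Coulomb-norm bound into a pointwise bound on $\Phi_s^\mathrm{RHF}-\Phi_s^\mathrm{TF}$ at scales $s$ in the allowed window, using the Coulomb estimate (Lemma~\ref{coulomb}) with $f = (\rho^\mathrm{RHF}-\rho^\mathrm{TF})(\cdot+R_j)$ and the harmonicity/subharmonicity of the relevant potentials to push the supremum onto $\partial A_s$; this step exploits that $|x-R_j|^{-1}$ is harmonic and $\phi^\mathrm{TF}$ is subharmonic in each $\Gamma_j$, so the maximum principle localizes the estimate. The constraint $s \le \min\{r^{(1-\delta)/(1+\delta)}, \tilde r\}$ with $\tilde r = R_0^{-1} r^{\xi/(\xi+\eta)} R_\mathrm{min}^{\eta/(\xi+\eta)}$ arises precisely because the refined upper bound of Theorem~\ref{Sommerfeld2} is only useful when the correction term $A_2^j(r)(r/|x|)^\xi$ is controlled relative to the growing term $A_1^j(r)(|x|/R_j)^\eta$, which forces $|x|$ to stay below the geometric mean scale $\tilde r$; and the lower constraint $s \ge r^{1/(1+\delta)} = D$ is what lets us re-enter the induction. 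I expect the main obstacle to be bookkeeping the exponents so that the gain $\epsilon > 0$ survives all three ingredients simultaneously — in particular, ensuring that the $\lambda r^3\sup[\Phi^\mathrm{RHF}]_+^{5/2}$ error in $\mathcal{R}$, after inserting the Sommerfeld scaling $\sup[\Phi^\mathrm{RHF}]_+ \lesssim r^{-4}$, does not overwhelm the improvement, which requires choosing $\lambda$ as a suitable power of $r$ and tuning $\delta$ small; the molecular complication (sum over Voronoi cells, the extra cross terms $D(\rho_i,\rho_j)$ in the splitting) affects constants $C_K$ but not the exponents, since each cell contributes a bound of the same homogeneity.
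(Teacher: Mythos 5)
Your overall architecture is right (bootstrap via Sommerfeld estimates, energy comparison with a semiclassical trial state, Coulomb estimate to convert a $D[\cdot]$ bound into a pointwise bound, and you correctly identify why the constraint $s\le\tilde r$ appears), but there is a genuine gap at the centre of the argument: you omit the pivot object that the whole iterative step hinges on. The paper introduces the minimizer $\rho_r^{\mathrm{TF}}$ of the \emph{exterior} TF functional $\E_r^{\mathrm{TF}}$ whose external potential is the screened \emph{RHF} potential $V_r=\chi_r^+\Phi_r^{\mathrm{RHF}}$ (not $\Phi_r^{\mathrm{TF}}$, as your trial-state construction suggests), and then performs two separate comparisons: (a) $\rho_r^{\mathrm{TF}}$ versus $\chi_r^+\rho^{\mathrm{TF}}$, using TF minimality, convexity of $\rho\mapsto\int\rho^{5/3}+D[\rho]$, the chemical-potential argument showing $\mu_r^{\mathrm{TF}}=0$, and only then the Sommerfeld bounds of Theorems~\ref{Sommerfeld1} and \ref{Sommerfeld2} applied to \emph{both} $\phi^{\mathrm{TF}}$ and $\phi_r^{\mathrm{TF}}$; and (b) $\rho_r^{\mathrm{TF}}$ versus $\eta_r^2\rho^{\mathrm{RHF}}$, via the two-sided semiclassical bounds on $\E_r^{\mathrm{RHF}}(\eta_r\gamma^{\mathrm{RHF}}\eta_r)$, giving $D[\rho_r^{\mathrm{TF}}-\chi_r^+\rho^{\mathrm{RHF}}]\le Cr^{-7+1/3}$. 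Your proposed single direct bound on $D[\chi_r^+(\rho^{\mathrm{RHF}}-\rho^{\mathrm{TF}})]$ does not come out of the energy comparison, because $\chi_r^+\rho^{\mathrm{TF}}$ is not the minimizer of the exterior functional with the RHF screened potential; quantifying how close it is to that minimizer is precisely comparison (a), which you have skipped.

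The omission is not merely organisational: without the pivot, the final decomposition fails to improve the exponent. If you write $\Phi_s^{\mathrm{RHF}}-\Phi_s^{\mathrm{TF}}=-\int_{A_s^c}(\rho^{\mathrm{RHF}}-\rho^{\mathrm{TF}})|x-y|^{-1}\,dy$ and split off the region $A_r^c$, the hypothesis only gives $\beta r^{-4}$ there, which at scale $s\gg r$ equals $\beta(s/r)^4 s^{-4}$ --- much \emph{worse} than $s^{-4}$, with no gain of $\epsilon$. The gain comes from the fact that the exterior TF problem ``forgets'' its boundary data at scale $r$ with rate $(r/s)^\xi$, and this is only accessible through the comparison $|\phi_r^{\mathrm{TF}}-\phi^{\mathrm{TF}}|\le C(r/s)^\xi s^{-4}$ of Lemma~\ref{step3}, i.e.\ through the auxiliary minimizer. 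Relatedly, your application of Lemma~\ref{coulomb} to $f=(\rho^{\mathrm{RHF}}-\rho^{\mathrm{TF}})(\cdot+R_j)$ with the global bound $D[\rho^{\mathrm{RHF}}-\rho^{\mathrm{TF}}]\le CZ^{25/11}$ reproduces only the \emph{initial} step (Lemma~\ref{initial}); in the iterative step the Coulomb estimate must be applied to $\rho_r^{\mathrm{TF}}-\chi_r^+\rho^{\mathrm{RHF}}$ restricted to the balls $|y-R_i|<s$, with the improved input $D[\rho_r^{\mathrm{TF}}-\chi_r^+\rho^{\mathrm{RHF}}]\le Cr^{-7+1/3}$ from comparison (b). You would need to add the construction of $\rho_r^{\mathrm{TF}}$, the proof that $\mu_r^{\mathrm{TF}}=0$, and both comparisons to close the argument.
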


\textit{Step 1}
We collect some consequences of (\ref{initialassumption}).
\begin{lemma}
\label{lem.ite}
We assume that  (\ref{initialassumption}) holds for some $\beta, D \in (0, R_0]$.
Then, if $r \in (0, D]$, we have

\begin{equation}
\label{ite2}
\sup_{x \in A_r} \phi(x)^{-1}[\Phi_{r}^\mathrm{RHF}(x)]_+ \le \frac{C}{r^{3}} ,
\end{equation}

\begin{equation}
\label{ite1}
\left| \sum_{j=1}^K \int_{ |x-R_j| < r} (\rho^\mathrm{RHF} - \rho^\mathrm{TF}) \right| \le \frac{C\beta}{r^3},
\end{equation}

\begin{equation}
\label{ite3}
\int_{A_r} \rho^\mathrm{RHF} \le \frac{C}{r^3},
\end{equation}

\begin{equation}
\label{ite4}
\int_{A_r} (\rho^\mathrm{RHF})^{5/3} \le \frac{C}{r^7},
\end{equation}

\begin{equation}
\label{ite5}
\tr (-\Delta \eta_r \gamma^\mathrm{RHF} \eta_r) \le C\left(\frac{1}{r^7}+ \frac{1}{\lambda^2 r^5}\right), \quad \text{for any } \lambda \in (0, 1/2].
\end{equation}
\end{lemma}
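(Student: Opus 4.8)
The plan is to prove the five estimates in a chain, each one feeding into the next, so the order matters. I would first establish \eqref{ite2}: the starting assumption \eqref{initialassumption} controls $\Phi_r^\mathrm{RHF}$ on $\partial A_r$ in terms of $\Phi_r^\mathrm{TF}$, and $\Phi_r^\mathrm{TF}$ is essentially the TF screened potential whose size is $O(r^{-4})$ by the Sommerfeld bounds (Theorem~\ref{Sommerfeld1}, upper bound $\phi \le \sum_j \omega_A^+(x-R_j)+\mu$ with $\mu=0$ in the neutral case, giving $\Phi_r^\mathrm{TF}(x) \le Cr^{-4}$ near $\partial A_r$). Since $\Phi_r^\mathrm{RHF}$ is superharmonic in $A_r$ (it is $V_Z$ minus a Newtonian potential of a density supported in $A_r^c$), $[\Phi_r^\mathrm{RHF}]_+$ attains its $A_r$-supremum on $\partial A_r$, where it is $\le C r^{-4}$; multiplying by $\phi(x)^{-1} \le C r$ on $\partial A_r$ (since $\phi(x) = \sum_j \mu_j|x-R_j|^{-1} \ge c/r$ there, using the separation of the nuclei) gives $\phi^{-1}[\Phi_r^\mathrm{RHF}]_+ \le C r^{-3}$. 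One must be a bit careful that the $\sup$ in \eqref{ite2} is over all of $A_r$, not just $\partial A_r$; here the key point is that $\phi(x)^{-1}[\Phi_r^\mathrm{RHF}(x)]_+$ is itself subharmonic-like enough — more precisely, using $|x-R_j| \le \phi(x)^{-1}$ and monotonicity one reduces to the boundary value, exactly as in the atomic arguments of~\cite{SolovejIC}.

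Next I would prove \eqref{ite1}: integrating the Poisson equation $\Delta \Phi_r^\mathrm{RHF} = 4\pi \rho^\mathrm{RHF}$ (resp. its TF analogue) over $A_r^c$, or equivalently applying Gauss's theorem to $\Phi_r^\mathrm{RHF} - \Phi_r^\mathrm{TF}$ on the union of balls $\bigcup_j \{|x-R_j| < r\}$, converts the volume integral $\sum_j \int_{|x-R_j|<r}(\rho^\mathrm{RHF}-\rho^\mathrm{TF})$ into a flux integral over $\partial A_r$ of $\nabla(\Phi_r^\mathrm{RHF}-\Phi_r^\mathrm{TF})$. Since both potentials are harmonic inside each small ball (the charge there has been screened out), one can write the difference via its boundary values and use the interior gradient estimate $|\nabla(\Phi^\mathrm{RHF}_r - \Phi^\mathrm{TF}_r)| \le C r^{-1}\sup_{\partial A_r}|\Phi^\mathrm{RHF}_r - \Phi^\mathrm{TF}_r| \le C\beta r^{-5}$, and integrating over a surface of area $O(r^2)$ yields the $C\beta r^{-3}$ bound. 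Then \eqref{ite3} follows by combining \eqref{ite2} with Lemma~\ref{theorem:outside} iterated (or directly Lemma~\ref{outside}): the right-hand side of Lemma~\ref{outside} contains $\sup_{A_r}\phi^{-1}[\Phi_r^\mathrm{RHF}]_+ \le Cr^{-3}$, the geometric terms $s + (\lambda^2 s)^{-1} + \lambda^{-1} + R_0^{-2}$ which one optimizes to $O(r^{-?})$ consistent with $r^{-3}$, and the kinetic term — which is exactly \eqref{ite5}, so there is a small circular dependence to untangle. The standard resolution is a bootstrap: one first gets a crude bound on $\int_{A_r}\rho^\mathrm{RHF}$ with a worse power, feeds it into the Lieb–Thirring/kinetic estimate, and iterates; alternatively one proves \eqref{ite5} with an $\int_{A_{(1-\lambda)r}}\rho^\mathrm{RHF}$ on its right side (as in Lemma~\ref{inside2}) and closes the loop. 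I would follow the route of~\cite{mullerIC, SolovejIC}: use Lemma~\ref{inside2} with the already-established \eqref{ite2} and a preliminary density bound to get \eqref{ite5}, then plug back into Lemma~\ref{outside} to get the clean $\int_{A_r}\rho^\mathrm{RHF}\le Cr^{-3}$.

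For \eqref{ite4}, apply the Lieb–Thirring inequality to $\eta_r \gamma^\mathrm{RHF}\eta_r$: $\int (\eta_r^2\rho^\mathrm{RHF})^{5/3} \le C\,\tr(-\Delta\,\eta_r\gamma^\mathrm{RHF}\eta_r) \le C(r^{-7} + \lambda^{-2}r^{-5})$ by \eqref{ite5}; choosing $\lambda$ of order one (say $\lambda=1/2$) makes this $Cr^{-7}$, and since $\eta_r \ge \chi^+_{(1+\lambda)r}$ this controls $\int_{A_{(1+\lambda)r}}(\rho^\mathrm{RHF})^{5/3}$ — then rescale $r$ as in Lemma~\ref{theorem:outside} to get $A_r$. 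Finally \eqref{ite5} itself is Lemma~\ref{inside2} combined with the inputs: its right-hand side is $C(1+(\lambda r)^{-2})\int_{A_{(1-\lambda)r}}\rho^\mathrm{RHF} + C\lambda r^3 \sup_{A_{(1-\lambda)r}}[\Phi^\mathrm{RHF}_{(1-\lambda)r}]_+^{5/2} + C\sup_{A_r}[\phi^{-1}\Phi_r^\mathrm{RHF}]_+^{7/3}$; using \eqref{ite3} (at radius $(1-\lambda)r$), the Sommerfeld bound $[\Phi^\mathrm{RHF}_{(1-\lambda)r}]_+ \le Cr^{-4}$ (from \eqref{initialassumption} plus $\Phi^\mathrm{TF}\le Cr^{-4}$), and \eqref{ite2}, these three terms become $C(\lambda r)^{-2}r^{-3} = C\lambda^{-2}r^{-5}$, $C\lambda r^3 r^{-10} = C\lambda r^{-7}$, and $Cr^{-7}$ respectively, giving the stated $C(r^{-7} + \lambda^{-2}r^{-5})$. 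The main obstacle is the circular dependence between \eqref{ite3}, \eqref{ite4}, \eqref{ite5}: one must be careful about which estimate is bootstrapped first and that the intermediate powers close, and about the harmonic-extension / flux arguments for \eqref{ite1}–\eqref{ite2} in the molecular geometry, where the balls $\{|x-R_j|<r\}$ are disjoint (guaranteed by $r \le R_0 \le R_\mathrm{min}/4$) and the superharmonicity of $\Phi_r^\mathrm{RHF}$ together with convexity of the Voronoi cells is what replaces the radial monotonicity used in the atomic case.
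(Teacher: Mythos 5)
Your overall architecture (get \eqref{ite2} and \eqref{ite1} from harmonicity plus the boundary data of assumption \eqref{initialassumption}, then close \eqref{ite3} and \eqref{ite5} through Lemmas~\ref{outside} and~\ref{inside2}, then get \eqref{ite4} from Lieb--Thirring) matches the paper, and your handling of \eqref{ite4} and \eqref{ite5} is essentially the paper's. But two steps do not hold up as written. First, your argument for \eqref{ite1} rests on the claim that $\Phi_r^{\mathrm{RHF}}$ and $\Phi_r^{\mathrm{TF}}$ are ``harmonic inside each small ball (the charge there has been screened out).'' This is false and inverts the logic: inside $\{|x-R_j|<r\}$ one has $\Delta(\Phi_r^{\mathrm{RHF}}-\Phi_r^{\mathrm{TF}})=4\pi(\rho^{\mathrm{RHF}}-\rho^{\mathrm{TF}})$ (the nuclear deltas cancel), and this interior charge is exactly the quantity \eqref{ite1} measures; if the difference were harmonic there, the flux would vanish and the argument would prove nothing. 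A flux argument can be repaired --- evaluate the flux on the spheres $|x-R_j|=2r$, where the difference \emph{is} harmonic because one is in $A_r$, using $\sup_{A_r}|\Phi^{\mathrm{RHF}}_r-\Phi^{\mathrm{TF}}_r|\le\beta r^{-4}$ (maximum principle plus decay at infinity) and interior gradient estimates at distance $\sim r$ from $\partial A_r$ --- but as stated the step fails. The paper instead reads off the total screened charge from the $|x|^{-1}$ decay at infinity, $\sum_j\int_{|y-R_j|<r}(\rho^{\mathrm{TF}}-\rho^{\mathrm{RHF}})=\lim_{|x|\to\infty}\phi(x)^{-1}\bigl(\Phi^{\mathrm{RHF}}_r-\Phi^{\mathrm{TF}}_r\bigr)(x)$, and bounds the right side by $C\beta r^{-3}$ via the comparison Lemma~\ref{harmonic}. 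That same lemma is the mechanism you are missing in \eqref{ite2}: since $\phi(x)^{-1}\sim|x|$ at infinity, you cannot pass from the boundary bound to $\sup_{A_r}\phi^{-1}[\Phi_r^{\mathrm{RHF}}]_+$ by multiplying two suprema, and ``subharmonic-like enough'' is not an argument. One applies the maximum principle to $[\Phi_r^{\mathrm{RHF}}]_+-C_0r\bigl(\sup_{\partial A_r}[\Phi_r^{\mathrm{RHF}}]_+\bigr)\phi$, using that $\phi$ is harmonic in $A_r$, bounded below by $C^{-1}r^{-1}$ on $\partial A_r$, and that both functions vanish at infinity.

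Second, in closing \eqref{ite3} you list the terms on the right of Lemma~\ref{outside} but omit the annulus term $\sum_j\int_{r\le|x-R_j|<(1+\lambda)r}\rho^{\mathrm{RHF}}$, which none of your listed ingredients controls. The paper bounds $\int_{A_{r/3}\cap A_r^c}\rho^{\mathrm{RHF}}$ by writing it as the difference of the two screened charges controlled by \eqref{ite1} plus $\int\rho^{\mathrm{TF}}$ over the annulus, the latter via the Sommerfeld pointwise bound $\rho^{\mathrm{TF}}\le C|x-R_j|^{-6}$; this is where \eqref{ite1} is actually consumed. Finally, no genuine iteration is needed to break the circularity you worry about: since the kinetic term enters Lemma~\ref{outside} as $\bigl(s^2\tr(-\Delta\eta\gamma^{\mathrm{RHF}}\eta)\bigr)^{3/5}$, substituting Lemma~\ref{inside2} produces an inequality of the form $X\le Cr^{-3}+C(\lambda^{-2}X+\lambda^{-2}r^{-3}+r^{-5})^{3/5}$ for $X=\int_{A_r}\rho^{\mathrm{RHF}}$, which closes in a single step because the exponent $3/5$ is less than one.
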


\begin{proof}
First, we  split
\begin{equation*}
\begin{split}
\Phi^\mathrm{RHF}_{r}(x) &= \Phi^\mathrm{RHF}_{r}(x)- \Phi_{r}^\mathrm{TF}(x) + \Phi^\mathrm{TF}_{r}(x).
\end{split}
\end{equation*}
Moreover, we may write
\begin{align*}
\Phi^\mathrm{TF}_{r}(x) &= \phi^\mathrm{TF}(x) + \int \frac{\rho^\mathrm{TF}(y)}{|x-y|} \, dy - \sum_{j=1}^K\int_{|y-R_j| < r}\frac{\rho^\mathrm{TF}(y)}{|x-y|} \, dy \\
&= \phi^\mathrm{TF}(x) +\int_{A_r}\frac{\rho^\mathrm{TF}(y)}{|x-y|} \, dy.
\end{align*}
Using the Sommerfeld bound $\phi^\mathrm{TF}(x) \le c|x-R_j|^{-4}$ on $A_r\cap \Gamma_j$  and the TF equation $c_\mathrm{TF}\rho^\mathrm{TF}(x)^{2/3} =\phi^\mathrm{TF}(x)$, we have
\begin{equation*}
\begin{split}
\phi^\mathrm{TF}(x) + \int_{A_r}\frac{\rho^\mathrm{TF}(y)}{|x-y|} \, dy 
\le C \sum_{j=1}^K\left(|x-R_j|^{-4} +  \int_{|y| >s}\frac{dy}{|x-R_j-y||y|^6} \right)  \le Cr^{-4},
\end{split}
\end{equation*}
for $x \in A_r$, where we have used Newton's theorem.
Hence, by assumption (\ref{initialassumption}), it holds that $\left| \Phi^\mathrm{RHF}_{r}(x) \right| \le Cr^{-4}$ for any $x \in \partial A_r.$
We note that $-\Delta  \Phi^\mathrm{RHF}_{r}(x) = 4\pi \1_{A_r^c}(x) \rho^\mathrm{HF}(x)$ in the distributional sense, and hence $\Phi^\mathrm{RHF}_{r}$ is harmonic in $A_r$.
As in \cite[Lemma 6.5]{TFDWIC}, we may show the following lemma.
\begin{lemma}
\label{harmonic}
Let $f \colon A_r \to \R$ and $g  \colon A_r \to \R_+$.
We assume that $f, g$ are harmonic and continuous  in $A_r$ and vanishing at infinity.
If $g(x) \ge C_0^{-1}r^{-1}$ on $\partial A_r$, then it holds that
\[
\sup_{x\in A_r}g(x)^{-1} f(x) \le C_0r\sup_{x \in \partial A_r} f(x).
\]
\end{lemma}

\begin{proof}
Let $h(x) \coloneqq f(x) - F_rg(x)$ with $F_r =C_0r  \sup_{z \in \partial A_r}f(z)$.
Since $f, g$ are harmonic in $A_r$, by the maximum principle, we have
\[
\sup_{x \in A_r} h(x) = \max \left\{\sup_{x \in \partial A_r} (f(x) - F_r g(x)), 0\right\} = 0
\]
Therefore, for any $x \in A_r$ we learn
\[
 f(x)g(x)^{-1}  = h(x)g(x)^{-1} + F_r \le F_r,
\]
and thus the lemma follows.
\end{proof}

Now we apply this lemma with $f = [\Phi_r^\mathrm{RHF}]_+$ and $g(x) = \phi(x)$. We note that $\phi(x) \ge C^{-1} r^{-1}$ on $\partial A_r$, where $C$ is independent of $Z$ (recall our assumption of Theorem~\ref{theorem.main}). 
Then we have
\[
\sup_{x \in A_r}\phi(x)^{-1}[\Phi_r^\mathrm{RHF}(x)]_+ \le Cr\sup_{x \in \partial A_r}[\Phi_r^\mathrm{RHF}(x)]_+\le Cr^{-3},
\]
which proves (\ref{ite2}).

Next, we note that
\[
\sum_{j=1}^K\int_{|y-R_j| < r} (\rho^\mathrm{TF}(y) - \rho^\mathrm{RHF}(y)) \, dy = \lim_{ |x|\to \infty} \phi(x)^{-1}\left(\int_{A_r^c}\frac{\rho^\mathrm{TF}(y) - \rho^\mathrm{RHF}(y)}{|x-y|} \, dy \right).
\]
Then  (\ref{ite1}) follows from Lemma~\ref{harmonic} and (\ref{initialassumption}).

Now we prove (\ref{ite3}) and (\ref{ite5}). By (\ref{ite1}), we have
\begin{equation*}
\begin{split}
\int_{A_{r/3}\cap A_r^c} \rho^\mathrm{RHF} (x) \, dx &= \int_{A_r^c} (\rho^\mathrm{RHF}(x) - \rho^\mathrm{TF}(x))\, dx - \int_{A_{r/3}^c} (\rho^\mathrm{RHF}(x) - \rho^\mathrm{TF}(x))\, dx\\
&\quad+ \sum_{j=1}^K\int_{3/r \le |x-R_j| \le r} \rho^\mathrm{TF}(x)  \,dx\\
& \le Cr^{-3},
\end{split}
\end{equation*}
where we have used  the Sommerfeld asymptotics $\rho^\mathrm{TF}(x) \le C|x-R_j|^{-6}$ on $A_r \cap \Gamma_j$.
Inserting this and the bound (\ref{ite2}) into the bound from Lemma \ref{inside2}, we obtain
\begin{equation}
\label{kin}
\begin{split}
\tr \left(-\frac{\Delta}{2} \eta_r \gamma^\mathrm{RHF} \eta_r\right)
 &\le  C\left((\lambda r)^{-2}\int_{A_r}\rho^\mathrm{RHF} +\lambda^{-2} r^{-5} + r^{-7}  \right).
\end{split}
\end{equation}
Replacing $r$ by $r/3$ in the above estimate , we get
\begin{equation}
\tr \left(-\frac{\Delta}{2} \eta_{r/3} \gamma^\mathrm{RHF} \eta_{r/3}\right)
\le C\left((\lambda r)^{-2}\int_{A_r}\rho^\mathrm{RHF} +\lambda^{-2} r^{-5} +r^{-7}  \right).
\end{equation}
From Lemma~\ref{outside}, replacing $r$ by $r/3$ and choosing $s=r$, we find that
\begin{align*}
& \int_{A_{r/3}} \rho^\mathrm{RHF}(x) dx \\
&\le
C\sum_{j=1}^K \int_{r/3 \le|x-R_j|<r} \rho^\mathrm{RHF} (x) \, dx + C\left(r^2 \tr (- \Delta \eta_{r/3} \gamma^\mathrm{RHF}\eta_{r/3}) \right)^{3/5} \\
&\quad+  C\left(\sup_{x \in  A_{r/3}} [\phi(x)^{-1}\Phi_{r/3}^\mathrm{RHF}(x)]_+  + r + (\lambda^2 r)^{-1} +  \frac{1}{R_0^2} + \frac{1}{\lambda}\right) .
\end{align*}
Inserting (\ref{ite2})  and(\ref{kin}) into the latter estimate leads to
\begin{align*}
 \int_{A_r} \rho^\mathrm{RHF}(x) dx
 \le \int_{A_{r/3}} \rho^\mathrm{RHF}(x) dx  
&\le C\left(\frac{1}{r^{3}}+\frac{1}{ \lambda^{2}r}\right) \\
&\quad+C\left(\frac{1}{\lambda^{2}} \int_{A_r} \rho^\mathrm{RHF}(x) dx +\frac{1}{\lambda^{2} r^{3}} + \frac{1}{r^{5}}  \right)^{3/5}.
\end{align*}
This proves (\ref{ite3}) immediately. 
Inserting (\ref{ite3}) into (\ref{kin}), we obtain (\ref{ite5}).

Finally, from (\ref{ite5}) and the kinetic Lieb-Thirring inequality, we have
\[
\int_{A_r} (\rho^\mathrm{RHF})^{5/3} \le \int(\eta^2_{r/3}\rho^\mathrm{RHF})^{5/3}
\le C \tr \left(-\frac{\Delta}{2} \eta_{r/3} \gamma^\mathrm{RHF} \eta_{r/3}\right)
\le C\left(\frac{1}{r^7}+ \frac{1}{r^5}\right),
\]
which implies (\ref{ite4}).
\end{proof}
\textit{Step 2}
We introduce the exterior Thomas-Fermi energy functional
\[
\E_{r}^\mathrm{TF}(\rho) = \frac{3}{10}c_\mathrm{TF}\int \rho^{5/3} - \int V_{r} \rho + D[\rho], \quad V_{r}(x) = \chi_r^+\Phi^\mathrm{RHF}_{r}(x).
\]

\begin{lemma}
\label{step2}
The TF functional $\E_{r}^\mathrm{TF}(\rho)$ has a unique minimizer $\rho^\mathrm{TF}_{r}$ over
\[
0\le \rho \in L^{5/3}(\R^3) \cap L^1(\R^3), \quad \int \rho \le Z -  \int_{A_r^c}\rho^\mathrm{RHF}(y) \, dy.
\]
This minimizer is supported on $A_r$ and satisfies the TF equation
\[
c_\mathrm{TF}\rho^\mathrm{TF}_{r}(x)^{2/3} = [\phi_{r}^\mathrm{TF}(x)-\mu_{r}^\mathrm{TF}]_+
\]
with $\phi^\mathrm{TF}_{r}(x) = V_{r}(x) - \rho_{r}^\mathrm{TF} \star |x|^{-1}$ and a constant $\mu_{r}^\mathrm{TF} \ge 0$. 
Moreover, 
\begin{itemize}
\item[(i)]
If $\mu_{r}^\mathrm{TF} > 0$, then 
\[
\int \rho^\mathrm{TF}_{r} = Z -  \int_{A_r^c}\rho^\mathrm{RHF}(y) \, dy.
\]
\item[(ii)]
If (\ref{initialassumption}) holds true for some $\beta$, $D \in (0, 1]$, then
\begin{equation*}
\label{TFminbound}
\int (\rho_{r}^\mathrm{TF})^{5/3} \le Cr^{-7}, \quad \text{for any } r \in (0, D].
\end{equation*}
\end{itemize}
\end{lemma}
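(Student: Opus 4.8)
The plan is to treat the statement as the Thomas-Fermi variational problem with external potential $V_r=\chi_r^+\Phi^\mathrm{RHF}_r$ and to read (i) and (ii) off the associated Euler--Lagrange equation. Write $\Lambda_r:=Z-\int_{A_r^c}\rho^\mathrm{RHF}$ for the constraint value; if $\Lambda_r\le0$ the admissible set is $\{0\}$ and everything is trivial, so assume $\Lambda_r>0$. Uniqueness of the minimizer is immediate from the strict convexity of $\rho\mapsto\frac{3}{10}c_\mathrm{TF}\int\rho^{5/3}+D[\rho]$ on the convex admissible set. For existence I would run the direct method: since $[V_r]_+\le V_Z\in L^{5/2}_{\mathrm{loc}}$ with $V_Z\to0$ at infinity, a standard splitting of space at a radius $L$ (H\"older's inequality on $\{|x|\le L\}$, an $L^\infty$ bound times the finite mass $\int\rho\le\Lambda_r$ on $\{|x|>L\}$) controls $\int V_r\rho$ by $\varepsilon\|\rho\|_{L^{5/3}}^{5/3}$ plus a constant, so the term $\frac{3}{10}c_\mathrm{TF}\int\rho^{5/3}$ together with $D[\rho]\ge0$ bounds $\E_r^\mathrm{TF}$ below on the admissible set and keeps minimizing sequences bounded in $L^{5/3}\cap L^1$; weak lower semicontinuity (convexity of the kinetic and direct terms, $V_r\to0$ at infinity for the linear term) then produces a minimizer. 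Alternatively this is a special case of the standard Thomas-Fermi existence theory.

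For the support property I would argue by truncation: if $\rho$ is admissible then so is $\1_{A_r}\rho$, and since $V_r$ vanishes off $A_r$ one has $\int V_r(\1_{A_r}\rho)=\int V_r\rho$, $\int(\1_{A_r}\rho)^{5/3}\le\int\rho^{5/3}$, and $D[\1_{A_r}\rho]\le D[\rho]$ by positivity of the Coulomb kernel, whence $\E_r^\mathrm{TF}(\1_{A_r}\rho)\le\E_r^\mathrm{TF}(\rho)$; applied to $\rho=\rho^\mathrm{TF}_r$ and combined with uniqueness this forces $\rho^\mathrm{TF}_r=\1_{A_r}\rho^\mathrm{TF}_r$. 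The TF equation is then the first-order optimality condition: differentiating $t\mapsto\E_r^\mathrm{TF}((1-t)\rho^\mathrm{TF}_r+t\sigma)$ at $t=0^+$ over admissible competitors $\sigma$ produces a Lagrange multiplier $\mu^\mathrm{TF}_r$ for the constraint $\int\rho\le\Lambda_r$, nonnegative because mass may always be removed, and the stationarity relation rewrites as $c_\mathrm{TF}(\rho^\mathrm{TF}_r)^{2/3}=[\phi^\mathrm{TF}_r-\mu^\mathrm{TF}_r]_+$ with $\phi^\mathrm{TF}_r=V_r-\rho^\mathrm{TF}_r\star|x|^{-1}$, a genuine pointwise relation since $\phi^\mathrm{TF}_r$ is continuous. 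Claim (i) is the complementary slackness identity $\mu^\mathrm{TF}_r\bigl(\int\rho^\mathrm{TF}_r-\Lambda_r\bigr)=0$.

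For (ii) I would invoke the consequences of (\ref{initialassumption}) collected in Lemma~\ref{lem.ite}. From the TF equation together with $\rho^\mathrm{TF}_r\star|x|^{-1}\ge0$ and $\mu^\mathrm{TF}_r\ge0$ one gets $c_\mathrm{TF}(\rho^\mathrm{TF}_r)^{2/3}\le[V_r]_+=\chi_r^+[\Phi^\mathrm{RHF}_r]_+$; and since $\phi=\sum_j\mu_j|x-R_j|^{-1}<r^{-1}$ on $A_r$ (because $\sum_j\mu_j=1$ and $|x-R_j|>r$ there), (\ref{ite2}) gives $[V_r]_+\le Cr^{-4}$ on $A_r$, hence $(\rho^\mathrm{TF}_r)^{2/3}\le Cr^{-4}$ everywhere. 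For the total mass, the balls $\{|x-R_j|\le r\}$ are disjoint for $r\le R_0$, so $\Lambda_r=\int_{A_r}\rho^\mathrm{TF}-\sum_{j=1}^K\int_{|x-R_j|\le r}(\rho^\mathrm{RHF}-\rho^\mathrm{TF})$; the first term is $\le Cr^{-3}$ by the Sommerfeld bound $\rho^\mathrm{TF}(x)\le C|x-R_j|^{-6}$ on $A_r\cap\Gamma_j$ and the second is $\le C\beta r^{-3}$ by (\ref{ite1}), so $\int\rho^\mathrm{TF}_r\le\Lambda_r\le Cr^{-3}$. Combining, $\int(\rho^\mathrm{TF}_r)^{5/3}=\int\rho^\mathrm{TF}_r\,(\rho^\mathrm{TF}_r)^{2/3}\le Cr^{-4}\int\rho^\mathrm{TF}_r\le Cr^{-7}$ for all $r\in(0,D]$.

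The existence, uniqueness and TF-equation part is routine Thomas-Fermi theory, so the genuine content lies in (ii), where the point is that (\ref{initialassumption}) simultaneously forces $[V_r]_+=O(r^{-4})$ in $L^\infty(A_r)$ and, via the Sommerfeld decay of $\rho^\mathrm{TF}$ and (\ref{ite1}), the admissible mass $\Lambda_r=O(r^{-3})$. The one technical nuisance is the coercivity of $\int V_r\rho$ at infinity, where $V_r$ is only Coulomb-decaying and not in $L^{5/2}$; this is handled using the $L^1$ constraint, exactly as in the classical Thomas-Fermi existence proof.
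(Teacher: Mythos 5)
Your proposal is correct, and for the two substantive parts it takes a genuinely different route from the paper. For the support property, you truncate: $\E_r^{\mathrm{TF}}(\1_{A_r}\rho)\le\E_r^{\mathrm{TF}}(\rho)$ since $V_r$ is supported on $A_r$ and both $\int\rho^{5/3}$ and $D[\rho]$ decrease under truncation, so uniqueness forces $\rho_r^{\mathrm{TF}}=\1_{A_r}\rho_r^{\mathrm{TF}}$; the paper instead reads the support off the TF equation, noting that $\phi_r^{\mathrm{TF}}\le V_r=0$ off $A_r$, hence $[\phi_r^{\mathrm{TF}}-\mu_r^{\mathrm{TF}}]_+=0$ there. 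Both are sound. For (ii), the paper argues energetically: $0=\E_r^{\mathrm{TF}}(0)\ge\E_r^{\mathrm{TF}}(\rho_r^{\mathrm{TF}})$, then bounds $\int V_r\rho_r^{\mathrm{TF}}\le Cr^{-3}\int\phi\,\rho_r^{\mathrm{TF}}$ via (\ref{ite2}) and absorbs this attractive term using the molecular TF ground-state lower bound $\inf\E^{\mathrm{TF}}\ge-C\sum_j z_j^{7/3}$ with effective charges $Cr^{-3}\mu_j$, yielding $\tfrac{3}{20}c_{\mathrm{TF}}\int(\rho_r^{\mathrm{TF}})^{5/3}\le C(r^{-3})^{7/3}$. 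You instead combine the pointwise bound $c_{\mathrm{TF}}(\rho_r^{\mathrm{TF}})^{2/3}\le[V_r]_+\le Cr^{-4}$ (from the TF equation, (\ref{ite2}) and $\phi<r^{-1}$ on $A_r$) with the mass bound $\int\rho_r^{\mathrm{TF}}\le\Lambda_r\le Cr^{-3}$ (from (\ref{ite1}) and the Sommerfeld bound $\rho^{\mathrm{TF}}\le C|x-R_j|^{-6}$ on $A_r\cap\Gamma_j$), and then write $\int(\rho_r^{\mathrm{TF}})^{5/3}\le\|(\rho_r^{\mathrm{TF}})^{2/3}\|_\infty\int\rho_r^{\mathrm{TF}}\le Cr^{-7}$. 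Your version is more elementary (no TF energy lower bound, no convexity splitting of the kinetic term) but leans on one more consequence of (\ref{initialassumption}), namely (\ref{ite1}) together with the Sommerfeld density decay, whereas the paper's proof of (ii) needs only (\ref{ite2}); both sets of prerequisites are available at this point in the argument, so either route closes. The existence/uniqueness/TF-equation/complementary-slackness part is standard in both treatments (the paper simply cites the analogous result in the Thomas--Fermi--Dirac--von Weizs\"acker reference).
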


\begin{proof}
The existence of $\rho_{r}^\mathrm{TF}$, the TF equation, and (i) follow from~\cite[Theorem 4.1 (i)]{TFDWIC}.
From the TF equation and the fact that $\phi^\mathrm{TF}_{r} \le V_{r} = 0$ on $A_r$, we learn $\mathrm{supp} \, \rho_{r}^\mathrm{TF} \subset A_r$.
Moreover, by the minimizing property of $\rho^\mathrm{TF}_{r}$ and (\ref{ite2}), we obtain
\begin{align*}
0 \ge \E_{r}^\mathrm{TF}(\rho_{r}^\mathrm{TF}) &\ge  \frac{3}{10}c_\mathrm{TF}\int(\rho_{r}^\mathrm{TF})^{5/3} - Cr^{-3}\sum_{j=1}^K \frac{z_j}{Z}\int \frac{\rho_{r}^\mathrm{TF}(x)}{|x-R_j|} \, dx + D[\rho^\mathrm{TF}_{r}] \\
&\ge \frac{3c_\mathrm{TF}}{20} \int (\rho_{r}^\mathrm{TF})^{5/3} -C(r^{-3})^{7/3},
\end{align*}
where we have used $\inf_{\rho \ge 0}\E^\mathrm{TF}(\rho) \ge -C\sum_{j=1}^Kz_j^{7/3}$.
This finishes the proof.
\end{proof}

We will use the next lemma.
\begin{lemma}[Chemical potential estimate]
\label{chemical}
If $\mu_{r}^\mathrm{TF} < \inf_{x \in A_r} \phi_{r}^\mathrm{TF}$, then we have 
$\mu_{r}^\mathrm{TF} = 0$.
\end{lemma}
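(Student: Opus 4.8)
The plan is to argue by contradiction: assume that $\mu_{r}^\mathrm{TF} > 0$ while also $\mu_{r}^\mathrm{TF} < \inf_{x \in A_r} \phi_{r}^\mathrm{TF}$, and derive a violation of the charge constraint. The key observation is that if $\mu_{r}^\mathrm{TF} < \inf_{A_r}\phi_{r}^\mathrm{TF}$, then by the TF equation $c_\mathrm{TF}\rho^\mathrm{TF}_{r}(x)^{2/3} = [\phi_{r}^\mathrm{TF}(x)-\mu_{r}^\mathrm{TF}]_+ = \phi_{r}^\mathrm{TF}(x)-\mu_{r}^\mathrm{TF}$ throughout $A_r$ (the positive part is superfluous there), so that $\phi_{r}^\mathrm{TF} - \mu_{r}^\mathrm{TF}$ is strictly positive and bounded below on $A_r$. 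First I would record that $\phi_{r}^\mathrm{TF}$ is continuous on $A_r$, vanishes at infinity (since $V_r$ has compact-in-$A_r$-bounded behavior and $\rho_r^\mathrm{TF}\star|x|^{-1}\to 0$), and satisfies $\Delta\phi^\mathrm{TF}_{r} = 4\pi c_\mathrm{TF}^{-3/2}[\phi_{r}^\mathrm{TF}-\mu_{r}^\mathrm{TF}]_+^{3/2}$ in $A_r$ in the distributional sense.

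The contradiction comes from Newton's theorem applied at infinity. Since $\mathrm{supp}\,\rho_r^\mathrm{TF}\subset A_r$ and $\int\rho_r^\mathrm{TF} = Z - \int_{A_r^c}\rho^\mathrm{RHF}$ (this is case (i) of Lemma~\ref{step2}, valid because we are assuming $\mu_r^\mathrm{TF}>0$), we can read off the asymptotics of $\phi_r^\mathrm{TF}$ at infinity: $\phi_r^\mathrm{TF}(x) = \left(Z - \int_{A_r^c}\rho^\mathrm{RHF} - \int\rho_r^\mathrm{TF}\right)/|x| + o(1/|x|)$. But the two charges cancel, so $\phi_r^\mathrm{TF}(x) = o(1/|x|)$, i.e. $|x|\phi_r^\mathrm{TF}(x)\to 0$ as $|x|\to\infty$. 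On the other hand, if $\mu_r^\mathrm{TF}>0$ then $\phi_r^\mathrm{TF}(x) \ge \mu_r^\mathrm{TF} > 0$ on all of $A_r$ (from $\phi_r^\mathrm{TF} = c_\mathrm{TF}(\rho_r^\mathrm{TF})^{2/3} + \mu_r^\mathrm{TF} \ge \mu_r^\mathrm{TF}$, using that $\phi_r^\mathrm{TF}-\mu_r^\mathrm{TF}\ge 0$ there under our extra hypothesis), whence $|x|\phi_r^\mathrm{TF}(x)\ge \mu_r^\mathrm{TF}|x|\to\infty$. This is the contradiction, so $\mu_r^\mathrm{TF} = 0$.

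The point where care is needed is justifying the sharp asymptotic expansion $\phi_r^\mathrm{TF}(x) = q/|x| + o(1/|x|)$ with $q$ the net charge, and in particular that the $o(1/|x|)$ error really is $o(1/|x|)$ rather than merely $O(1/|x|)$. The cleanest route is to avoid the expansion altogether and instead work with $\psi(x) := |x|\phi_r^\mathrm{TF}(x)$ directly, or equivalently to use the comparison function: since $\phi_r^\mathrm{TF}$ is nonnegative and superharmonic outside a large ball (indeed $-\Delta\phi_r^\mathrm{TF}\le 0$ there once the potential $V_r$ is supported inside), one has $\phi_r^\mathrm{TF}(x) \le \tilde q/|x|$ for $|x|$ large where $\tilde q$ is the total charge $\int(\1_{A_r^c}\rho^\mathrm{RHF} + \rho_r^\mathrm{TF})$ minus $Z$ — but this is nonpositive, forcing $\phi_r^\mathrm{TF}\to 0$ faster than $1/|x|$, contradicting $\phi_r^\mathrm{TF}\ge\mu_r^\mathrm{TF}$. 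I expect this sign-of-net-charge bookkeeping, together with checking that all the hypotheses of Newton's theorem genuinely hold for $\rho_r^\mathrm{TF}$ (integrability, compact-enough support relative to the asymptotic regime), to be the main obstacle; the maximum-principle manipulations themselves are routine and parallel to those already used in the proof of Theorem~\ref{Sommerfeld1}.
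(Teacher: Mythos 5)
There is a genuine gap, and it sits exactly at the step you flag as ``routine.'' Your argument hinges on the claim that $\phi_r^{\mathrm{TF}}(x)\ge \mu_r^{\mathrm{TF}}>0$ on \emph{all} of $A_r$, which you extract from reading the hypothesis as an infimum over the whole unbounded region $A_r$. But $\phi_r^{\mathrm{TF}}$ vanishes at infinity (as you yourself note), so $\inf_{A_r}\phi_r^{\mathrm{TF}}\le 0\le\mu_r^{\mathrm{TF}}$ and the hypothesis, read that way, is never satisfiable: your proof is then a proof of a vacuous statement. The version of the lemma the paper actually uses (see Step~3 of the iterative step) has the infimum over $\partial A_r$ only, and there your key claim fails: when $\mu_r^{\mathrm{TF}}>0$ the set $\{\phi_r^{\mathrm{TF}}<\mu_r^{\mathrm{TF}}\}$ necessarily contains a neighborhood of infinity (on that set the TF equation makes $\phi_r^{\mathrm{TF}}$ harmonic, which is perfectly consistent with boundary values above $\mu_r^{\mathrm{TF}}$ on $\partial A_r$ and decay to $0$ at infinity), so no naive maximum principle propagates the boundary inequality inward.

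The missing ingredient is the Sommerfeld \emph{lower} bound of Theorem~\ref{Sommerfeld1}: under the boundary condition $\lim_{s\to+r}\inf_{\partial A_s}\phi_r^{\mathrm{TF}}>\mu_r^{\mathrm{TF}}$ one gets $\phi_r^{\mathrm{TF}}(x)\ge \nu(\mu_r^{\mathrm{TF}},r)\max_j|x-R_j|^{-1}$ on $A_r$, with the explicit estimate $\nu(\mu_r^{\mathrm{TF}},r)\ge c_{\mathrm{S}}^{1/4}(1+a(r))^{-1/2}(\mu_r^{\mathrm{TF}})^{3/4}>0$. This converts the boundary information into a strictly positive lower bound on $\liminf_{|x|\to\infty}|x|\,\phi_r^{\mathrm{TF}}(x)$, i.e.\ on the net charge seen at infinity; your endgame (if $\mu_r^{\mathrm{TF}}>0$ the constraint is saturated, so that net charge is $Z-\int_{A_r^c}\rho^{\mathrm{RHF}}-\int\rho_r^{\mathrm{TF}}=0$, contradiction) then matches the paper's and is fine. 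Two smaller points: your Newton's-theorem detour is unnecessary once the Sommerfeld bound is in hand, since comparing $\nu(\mu_r^{\mathrm{TF}},r)>0$ with the net charge already closes the argument; and your parenthetical ``superharmonic \dots indeed $-\Delta\phi_r^{\mathrm{TF}}\le 0$'' has the sign backwards --- $-\Delta\phi_r^{\mathrm{TF}}\le 0$ means $\phi_r^{\mathrm{TF}}$ is \emph{sub}harmonic in $A_r$.
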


\begin{proof}
We suppose contrary $\mu_{r}^\mathrm{TF} > 0$.
Then it holds that
\begin{equation}
\label{assumptionLem5}
 \int_{\R^3} \rho_{r}^\mathrm{TF}(y) \, dy = Z -  \int_{A_r^c}\rho^\mathrm{RHF}(y) \, dy.
\end{equation}
By Theorem \ref{Sommerfeld1}, for any $|x-R_j|\ge r$, we have
\[
\nu(\mu_{r}^\mathrm{TF} , r) \le |x-R_j| \phi_{r}^\mathrm{TF}(x).
\]
By definition, we see
\begin{align*}
\nu(\mu_{r}^\mathrm{TF} , r) &\ge \mu_{r}^\mathrm{TF}  \inf_{|x|\ge r} \max \left\{|x|, \frac{c_\mathrm{S}|x|^{-3}}{\mu_{r}^\mathrm{TF} (1+a(r))^2} \right\}\\
&\ge (\mu_r^\mathrm{TF})^{3/4}c_\mathrm{S}^{1/4}(1+a(r))^{-1/2}.
\end{align*}
Moreover, we can estimate that, on some $x \in \Gamma_j$,
\begin{align*}
 \lim_{x \in \Gamma_j, |x-R_j| \to \infty}&|x-R_j| \phi_{r}^\mathrm{TF}(x) \le Z -  \int_{A_r^c}\rho^\mathrm{RHF}(y) \, dy - \int_{\R^3} \rho_{r}^\mathrm{TF}(y) \, dy.
\end{align*}
Hence, we find that
\[
0 < (\mu_r^\mathrm{TF})^{3/4} \le C\left( Z -  \int_{A_r^c}\rho^\mathrm{RHF}(y) \, dy- \int_{\R^3} \rho_{r}^\mathrm{TF}(y) \, dy\right).
\]
Thus, it follows that
\[
 \int_{\R^3} \rho_{r}^\mathrm{TF}(y) \, dy <  Z -  \int_{A_r^c}\rho^\mathrm{RHF}(y) \, dy.
\]
This contradicts the equation (\ref{assumptionLem5}).
\end{proof}
\textit{Step 3}
Now we compare $\rho_{r}^\mathrm{TF}$ with $\1_{A_r}\rho^\mathrm{TF}$.
\begin{lemma}
\label{step3}
Let $\tilde r = R_0^{-1}r^{\frac{\xi}{\xi + \eta}}R_\mathrm{min}^{\frac{\eta}{\xi+\eta}}$.
We can choose a universal constant $\beta > 0$ small enough such that, if (\ref{initialassumption}) holds for some $D \in [Z^{-1/3}, R_0]$ and if $r \in [Z^{-1/3}, D]$, then $\mu_{r}^\mathrm{TF} = 0$ and for any $s \in [r, \tilde r]$
\begin{align}
\label{step3a}
\sup_{x\in \partial A_s}|\phi_{r}^\mathrm{TF}(x) - \phi^\mathrm{TF}(x)|&\le C(r/s)^\xi s^{-4}, \\
\label{step3b}
\sup_{x\in \partial A_s}|\rho_{r}^\mathrm{TF}(x) - \rho^\mathrm{TF}(x)| &\le C(r/s)^\xi s^{-6}.
\end{align}
 Here $\xi = (\sqrt 73 - 7)/2 \sim 0.77$.
\end{lemma}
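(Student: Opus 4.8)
\emph{Strategy.} The plan is first to show $\mu_r^{\mathrm{TF}}=0$, then to obtain (\ref{step3a}) by sandwiching both the neutral TF potential $\phi^{\mathrm{TF}}$ and the screened exterior potential $\phi_r^{\mathrm{TF}}$ between the Sommerfeld lower bound of Theorem~\ref{Sommerfeld1} and the refined upper bound of Theorem~\ref{Sommerfeld2}, applied cell by cell on the Voronoi cells $\Gamma_j$, and finally to deduce (\ref{step3b}) from (\ref{step3a}) via the two TF equations. The smallness of $\beta$ will be used only to keep all the Sommerfeld constants (the $a$-, $A$- and $A_i^j$-type ones) bounded, and the $a$-type ones bounded away from $-1$, uniformly in $Z$.

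\emph{Step 1: a priori bounds and $\mu_r^{\mathrm{TF}}=0$.} For $x\in A_r$ write $\phi_r^{\mathrm{TF}}(x)=\Phi_r^{\mathrm{RHF}}(x)-\rho_r^{\mathrm{TF}}\star|x|^{-1}$ and $\phi^{\mathrm{TF}}(x)=\Phi_r^{\mathrm{TF}}(x)-\int_{A_r}\rho^{\mathrm{TF}}(y)|x-y|^{-1}\,dy$. The assumption (\ref{initialassumption}) controls $\Phi_r^{\mathrm{RHF}}-\Phi_r^{\mathrm{TF}}$ on $\partial A_s$ for $s\le D$, the bound (\ref{ite2}) gives $\Phi_r^{\mathrm{RHF}}(x)\le C|x-R_j|^{-4}$ on $A_r\cap\Gamma_j$, and the Sommerfeld asymptotics of the neutral molecular TF potential give $c^{-1}|x-R_j|^{-4}\le\phi^{\mathrm{TF}}(x)\le C|x-R_j|^{-4}$ there with $Z$-independent constants. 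To control the self-consistent tail $\rho_r^{\mathrm{TF}}\star|x|^{-1}$ I would first compare $\phi_r^{\mathrm{TF}}$ with a supersolution (using only $\Phi_r^{\mathrm{RHF}}\le Cr^{-4}$ on $\partial A_r$) to get $\phi_r^{\mathrm{TF}}(x)\le C|x-R_j|^{-4}$, hence $\rho_r^{\mathrm{TF}}(x)\le C|x-R_j|^{-6}$ by the TF equation, hence $\rho_r^{\mathrm{TF}}\star|x|^{-1}(x)\le Cs^{-4}$ on $\partial A_s$ by Newton's theorem; a small-$\beta$ bootstrap then pins $c^{-1}s^{-4}\le\phi_r^{\mathrm{TF}}(x)\le Cs^{-4}$ near $\partial A_r$, so all Sommerfeld constants for $\phi_r^{\mathrm{TF}}$ are $O(1)$. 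Since then $\lim_{s\to r^+}\inf_{\partial A_s}\phi_r^{\mathrm{TF}}\ge c^{-1}r^{-4}>\mu_r^{\mathrm{TF}}$, Theorem~\ref{Sommerfeld1} applies to $\phi_r^{\mathrm{TF}}$ and the argument of Lemma~\ref{chemical} goes through: were $\mu_r^{\mathrm{TF}}>0$, the constraint would be active by Lemma~\ref{step2}(i) and the Sommerfeld lower bound would force $|x-R_j|\phi_r^{\mathrm{TF}}(x)\ge\nu(\mu_r^{\mathrm{TF}},r)>0$, contradicting $|x-R_j|\phi_r^{\mathrm{TF}}(x)\to Z-\int_{A_r^c}\rho^{\mathrm{RHF}}-\int\rho_r^{\mathrm{TF}}=0$ as $|x-R_j|\to\infty$ in $\Gamma_j$. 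Hence $\mu_r^{\mathrm{TF}}=0$; also $\mu^{\mathrm{TF}}=0$ for the neutral TF molecule.

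\emph{Step 2: the comparison.} With both chemical potentials zero, $\phi^{\mathrm{TF}}$ and $\phi_r^{\mathrm{TF}}$ are positive in $A_r$ and solve $\Delta\phi=4\pi c_{\mathrm{TF}}^{-3/2}\phi^{3/2}$ there, so Theorems~\ref{Sommerfeld1} and \ref{Sommerfeld2} give, on $A_r\cap\Gamma_j$, $\omega_a^-(x-R_j)\le\phi^{\mathrm{TF}}(x)\le\omega_{A_1,A_2}^j(x-R_j)$ together with the analogous two-sided bounds for $\phi_r^{\mathrm{TF}}$, all with $O(1)$ constants by Step 1. For $x\in\partial A_s$ with $s\in[r,\tilde r]$ and $j$ the index with $|x-R_j|=s$, this yields both $\phi_r^{\mathrm{TF}}(x)-\phi^{\mathrm{TF}}(x)\le\omega_{A_1,A_2}^j(x-R_j)-\omega_a^-(x-R_j)$ (with the screened data in the upper bound) and the reverse inequality with the roles exchanged. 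Expanding $(1+a\rho)^{-2}=1+O(\rho)$ with $\rho=(r/s)^\xi\le1$, and using that for $s\le\tilde r=R_0^{-1}r^{\xi/(\xi+\eta)}R_{\mathrm{min}}^{\eta/(\xi+\eta)}$ the growing mode obeys $(s/R_j)^\eta\le C(r/s)^\xi$ — exactly the crossover $(s/R_j)^\eta\sim(r/s)^\xi$ defining $\tilde r$, since $R_j\ge\tfrac12 R_{\mathrm{min}}$ — each such difference is at most $C c_{\mathrm{S}}s^{-4}(r/s)^\xi$, which is (\ref{step3a}). Finally, on $\partial A_s$ the TF equations read $c_{\mathrm{TF}}\rho_r^{\mathrm{TF}}(x)^{2/3}=\phi_r^{\mathrm{TF}}(x)$ and $c_{\mathrm{TF}}\rho^{\mathrm{TF}}(x)^{2/3}=\phi^{\mathrm{TF}}(x)$, so by the mean value theorem and the two-sided bounds (both potentials comparable to $s^{-4}$) $|\rho_r^{\mathrm{TF}}(x)-\rho^{\mathrm{TF}}(x)|\le Cs^{-2}|\phi_r^{\mathrm{TF}}(x)-\phi^{\mathrm{TF}}(x)|\le C(r/s)^\xi s^{-6}$, which is (\ref{step3b}).

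\emph{Main obstacle.} The delicate point is Step~1: showing that the Sommerfeld constants are genuinely $O(1)$ uniformly in $Z$, in particular the lower bound $\phi_r^{\mathrm{TF}}\gtrsim s^{-4}$ near $\partial A_r$. This is subtle because $\phi_r^{\mathrm{TF}}$ is defined self-consistently ($\rho_r^{\mathrm{TF}}$ is a function of $\phi_r^{\mathrm{TF}}$) and its Coulomb tail $\rho_r^{\mathrm{TF}}\star|x|^{-1}$ is a priori of the same order $s^{-4}$ as the driving potential $\Phi_r^{\mathrm{RHF}}$, so one must either run a careful smallness/bootstrap argument in $\beta$ or use the known Sommerfeld asymptotics of $\phi^{\mathrm{TF}}$ as a fixed reference profile. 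Keeping $\beta$ uniformly small and the constants $Z$-independent while tracking the (allowed) $K$-dependence, and matching the geometric constants so that the admissible range is exactly $[r,\tilde r]$, is where most of the care is needed.
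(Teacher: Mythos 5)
Your overall architecture (show $\mu_r^{\mathrm{TF}}=0$, then sandwich both potentials between the Sommerfeld bounds of Theorems~\ref{Sommerfeld1} and~\ref{Sommerfeld2} cell by cell, then pass to the densities via the TF equation) matches the paper's, and your Step~2 is essentially the paper's final step. But Step~1 has a genuine gap, and it is exactly at the point you yourself flag as delicate. You need two quantitative facts: (a) $|\phi_r^{\mathrm{TF}}-\phi^{\mathrm{TF}}|\le C(\beta+\beta^{1/7})r^{-4}$ on $A_r$, which is what makes all the Sommerfeld constants $O(1)$ and gives the lower bound $\phi_r^{\mathrm{TF}}\ge C^{-1}r^{-4}$; and (b) the hypothesis of Lemma~\ref{chemical}, namely $\mu_r^{\mathrm{TF}}<\inf_{A_r}\phi_r^{\mathrm{TF}}$. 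You assert both ("a small-$\beta$ bootstrap then pins \dots", "$c^{-1}r^{-4}>\mu_r^{\mathrm{TF}}$") without an argument. A direct supersolution/subsolution bootstrap on $\phi_r^{\mathrm{TF}}$ alone does not produce the lower bound, because the self-consistent tail $\rho_r^{\mathrm{TF}}\star|x|^{-1}$ is generically of the same order $r^{-4}$ as $\Phi_r^{\mathrm{RHF}}$ on $\partial A_r$, so nothing prevents near-cancellation; and $\mu_r^{\mathrm{TF}}$ is a Lagrange multiplier with no a priori upper bound, so "$\mu_r^{\mathrm{TF}}<c^{-1}r^{-4}$" is not free.

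The mechanism the paper uses (and which your proposal is missing) is an energy--convexity argument at the level of densities, not a PDE bootstrap at the level of potentials. One first shows that $\chi_r^+\rho^{\mathrm{TF}}$ minimizes the exterior functional $\widetilde\E_r$ built from the \emph{exact} TF screened potential $\Phi_r^{\mathrm{TF}}$; assumption (\ref{initialassumption}) then yields $\E_r^{\mathrm{TF}}(\chi_r^+\rho^{\mathrm{TF}})\le\E_r^{\mathrm{TF}}(\rho_r^{\mathrm{TF}})+C\beta r^{-7}$ (after rescaling $\chi_r^+\rho^{\mathrm{TF}}$ by $1-C\beta$ so it satisfies the mass constraint, which uses (\ref{ite1})). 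Strict convexity of $\rho\mapsto\int\rho^{5/3}$ and of $D[\cdot]$ converts this near-minimality into $D[\chi_r^+\rho^{\mathrm{TF}}-\rho_r^{\mathrm{TF}}]\le C\beta r^{-7}$ plus the companion $L^{5/3}$-type estimate (\ref{6rec}), and the pointwise Coulomb bound $f\star|x|^{-1}\le C\|f\|_{L^{5/3}}^{5/7}D[f]^{1/7}$ then gives (a). Fact (b) is proved by contradiction: if $\mu_r^{\mathrm{TF}}\ge\inf_{A_r}\phi_r^{\mathrm{TF}}\ge C^{-1}r^{-4}$, the TF equation together with $\phi_r^{\mathrm{TF}}\le Cr^{-3}\phi(x)$ from (\ref{ite2}) forces $\rho_r^{\mathrm{TF}}\equiv0$ on $A_{C^2r}$, which contradicts the $L^{5/3}$-convexity estimate because $\int_{A_{C^2r}}(\rho^{\mathrm{TF}})^{5/3}\gtrsim r^{-7}$ while the convexity defect is only $C\beta r^{-7}$. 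Without these two steps your proof does not close; with them, the rest of your argument goes through as written.
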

\begin{proof}
We recall Theorem \ref{Sommerfeld1}, that is, in $A_r \cap \Gamma_j$
\begin{equation}
\begin{split}
\label{som1}
K\left(1 + A(r)\left(\frac{r}{|x-R_j|} \right)^\xi \right)\ge \frac{\phi^\mathrm{TF}(x)}{c_\mathrm{s}|x-R_j|^{-4}} 
\ge \left(1+a(r)\left(\frac{r}{|x-R_j|} \right)^\xi \right)^{-2},
\end{split}
\end{equation}
\begin{align*}
K^{3/2}\left(1 + A(r)\left(\frac{r}{|x-R_j|} \right)^\xi \right)^{3/2} \ge \frac{\rho^\mathrm{TF}(x)}{\left(\frac{c_\mathrm{s}}{c_\mathrm{TF}} \right)^{3/2}|x-R_j|^{-6}}
\ge \left(1+a(r)\left(\frac{r}{|x-R_j|} \right)^\xi \right)^{-3}.
\end{align*}
From this, we have $C|x-R_j|^{-6} \ge \rho^\mathrm{TF}(x) \ge C^{-1}|x-R_j|^{-6}$ for $x \in A_r \cap \Gamma_j$, and hence
\begin{equation}
\label{densitybd}
Cr^{-3} \ge \int_{A_r } \rho^\mathrm{TF} (x) \ge C^{-1} r^{-3}
\end{equation}
for any $r \in [Z^{-1/3}, R_0]$.

\begin{lemma}
For every $r \in (0, R_0]$, we have
\[
\widetilde \E_{r}(\chi_r^+\rho^\mathrm{TF}) \le \widetilde \E_{r}(\rho)
\]
for all $0 \le \rho \in L^{5/3}(\R^3)\cap L^1(\R^3)$ with $\mathrm{supp} \, \rho \subset A_r$, where
\[
\widetilde \E_{r}(\rho) =  \frac{3}{10}c_\mathrm{TF} \int \rho^{5/3} - \int \Phi^\mathrm{TF}_{r}\rho + D[\rho].
\]
\end{lemma}

\begin{proof}
For all $0 \le \rho \in L^{5/3}(\R^3)\cap L^1(\R^3)$ with $\mathrm{supp} \, \rho \subset A_r$, by the minimality of $\rho^\mathrm{TF}$ we have
\[
\E^\mathrm{TF}(\rho^\mathrm{TF}) \le \E^\mathrm{TF}(\1_{A_r^c}\rho^\mathrm{TF} + \rho).
\]
Since $\1_{A_r^c}\rho^\mathrm{TF}$ and $\rho$ have disjoint supports, we can write
\begin{align*}
\E^\mathrm{TF}(\1_{A_r^c}\rho^\mathrm{TF} + \rho)
&= \E^\mathrm{TF}(\1_{A_r^c}\rho^\mathrm{TF}) + \E^\mathrm{TF}(\rho)
+\iint_{A_r^c} \frac{\rho(x) \rho^\mathrm{TF}(y)}{|x-y|} \, dx \, dy\\
&=\E^\mathrm{TF}(\1_{A_r^c}\rho^\mathrm{TF}) + \widetilde \E_{r}(\rho).
\end{align*}
In particular, we can apply the latter equality with $\rho=\chi_r^+\rho^\mathrm{TF}$ and obtain
\begin{align*}
\E^\mathrm{TF}(\rho^\mathrm{TF}) &= \E^\mathrm{TF}(\1_{A_r^c}\rho^\mathrm{TF}+\chi_r^+\rho^\mathrm{TF})\\
&=\E^\mathrm{TF}(\1_{A_r^c}\rho^\mathrm{TF}) + \widetilde \E_{r}(\chi_r^+\rho^\mathrm{TF}).
\end{align*}
Thus
\[
0 \le \E^\mathrm{TF}(\1_{A_r^c}\rho^\mathrm{TF} + \rho) -\E^\mathrm{TF}(\rho^\mathrm{TF}) = \widetilde \E_{r}(\rho) - \widetilde \E_{r}(\chi_r^+\rho^\mathrm{TF}).
\]
This completes the proof.
\end{proof}
Now using this lemma with $\rho = \rho_{r}^\mathrm{TF}$ and the identity
\[
\widetilde \E_{r}(\rho) = \E_{r}^\mathrm{TF}(\rho) + \int (\Phi_{r}^\mathrm{RHF} - \Phi_{r}^\mathrm{TF})\rho,
\]
we find that
\begin{equation}
\label{6red}
\E^\mathrm{TF}_{r}(\chi_r^+\rho^\mathrm{TF}) \le \E_{r}^\mathrm{TF}(\rho_{r}^\mathrm{TF}) - \int (\Phi_{r}^\mathrm{RHF} - \Phi_{r}^\mathrm{TF})(\chi_r^+\rho^\mathrm{TF} - \rho_{r}^\mathrm{TF}). 
\end{equation}
Since $\Phi_{r}^\mathrm{RHF}(x) - \Phi_{r}^\mathrm{TF}(x)$ is harmonic in $A_r$, we deduce from (\ref{initialassumption}) that
\[
\sup_{x \in A_r}|\Phi_{r}^\mathrm{RHF}(x) - \Phi_{r}^\mathrm{TF}(x)| = \sup_{x \in \partial A_r}|\Phi_{r}^\mathrm{RHF}(x) - \Phi_{r}^\mathrm{TF}(x)| \le \beta r^{-4}.
\]
Therefore, we get
\begin{align*}
\left|\int (\Phi_{r}^\mathrm{RHF} - \Phi_{r}^\mathrm{TF})(\chi_r^+\rho^\mathrm{TF} - \rho^\mathrm{TF}) \right| &\le \beta r^{-4} \int(\chi_r^+\rho^\mathrm{TF} + \rho_{r}^\mathrm{TF}) \\
&\le C \beta r^{-7},
\end{align*}
where we have used the upper bound in (\ref{densitybd}). Moreover, by (\ref{ite2}) and the assumption $N \ge Z$, we see
\[
\int \rho_{r}^\mathrm{TF} \le Z - \int_{A_r^c} \rho^\mathrm{RHF}(x) \, dx \le \int_{A_r} \rho^\mathrm{RHF} \le Cr^{-3}.
\]
Hence (\ref{6red}) reduces to
\begin{equation}
\label{6ded}
\E^\mathrm{TF}_{r}(\chi_r^+\rho^\mathrm{TF}) \le \E_{r}^\mathrm{TF}(\rho_r^\mathrm{TF}) +C \beta r^{-7}.
\end{equation}
We want to compare $\chi_r^+\rho^\mathrm{TF}$ with $\rho_{r}^\mathrm{TF}$ using the minimality property of the latter as in~\cite[Proof of Lemma 6.8]{TFDWIC}.
Using  (\ref{ite1}), (\ref{densitybd}), we have
\begin{align*}
\int_{A_r} \rho^\mathrm{TF}(x) \, dx -  \left( Z - \int_{A_r^c} \rho^\mathrm{RHF}(y) \, dy \right)
 &\le \int_{A_r^c} (\rho^\mathrm{RHF} - \rho^\mathrm{TF} )
 \le C\beta \int_{A_r} \rho^\mathrm{TF}. 
 \end{align*}
 This can be rewritten as
 \begin{equation}
 \label{rewr}
 \int_{A_r}(1-C\beta) \rho^\mathrm{TF} \le  \left( Z - \int_{A_r^c} \rho^\mathrm{RHF}(y) \, dy \right).
 \end{equation}
 In the following, we choose $\beta > 0$ small enough so that $C \beta \le 1/2$.
 Since $\int (C\rho)^{5/3} +D[C\rho] \le \int \rho^{5/3} +D[\rho]$ for $C \le 1$, using (\ref{ite2}) and (\ref{densitybd}), we may estimate
 \[
 \E_{r}^\mathrm{TF}((1-C \beta)\chi_r^+\rho^\mathrm{TF}) - \E_r^\mathrm{TF}(\chi_r^+ \rho^\mathrm{TF}) \le  C\beta \int_{A_r} \Phi^\mathrm{RHF}_{r} \rho^\mathrm{TF}
 \le C \beta r^{-7}.
 \]
 Therefore, from (\ref{6ded}) we derive that
 \[
  \E_{r}^\mathrm{TF}((1-C \beta)\chi_r^+\rho^\mathrm{TF}) \le  \E_{r}^\mathrm{TF}(\rho_{r}^\mathrm{TF}) + C \beta r^{-7}.
 \]
 Combining with (\ref{rewr}) and the minimality of $\rho_{r}^\mathrm{TF}$, we obtain
 \begin{align*}
   \E_{r}^\mathrm{TF}((1-C \beta)\chi_r^+ \rho^\mathrm{TF}) +   \E_{r}^\mathrm{TF}(\rho_{r}^\mathrm{TF}) -2\E_{r}^\mathrm{TF}\left(\frac{(1- C\beta)\chi_r^+ \rho^\mathrm{TF} + \rho_{r}^\mathrm{TF}}{2}\right)
    \le C\beta r^{-7}.
 \end{align*}
 By the convexity of $\rho^{5/3}$ and $D[\rho]$, we have
 \begin{equation}
 \label{6direct}
  D[(1-C \beta)\chi_r^+\rho^\mathrm{TF} - \rho_{r}^\mathrm{TF}] \le C\beta r^{-7}.
 \end{equation}
 We also derive that
 \begin{equation}
 \label{6rec}
 \begin{split}
 \int &\bigg[ \left( (1-C \beta)\chi_r^+\rho^\mathrm{TF}(x)\right)^{5/3} +\rho_{r}^\mathrm{TF}(x)^{5/3} \\
 &-2\left(\frac{(1-C \beta)\chi_r^+\rho^\mathrm{TF}(x)) + \rho_{r}^\mathrm{TF}(x)}{2} \right)^{5/3} \bigg] \, dx \le C \beta r^{-7}.
 \end{split}
 \end{equation}
 From (\ref{6direct}) and the convexity of Coulomb term $D[\cdot]$, we learn that
  \begin{equation}
  \label{6direct2}
 \begin{split}
  D[\chi_r^+ \rho^\mathrm{TF} - \rho_{r}^\mathrm{TF}]
  &\le 2D[\chi_r^+\rho^\mathrm{TF} - (1-C \beta)\chi_r^+ \rho^\mathrm{TF}]  +2D[(1-C \beta )\chi_r^+\rho^\mathrm{TF} -\rho_{r}^\mathrm{TF}] \\
     &\le ( C \beta)^2D[\chi_r^+ \rho^\mathrm{TF}] +C \beta r^{-7}\\
   &\le C\beta r^{-7},
 \end{split}
 \end{equation}
  where the last inequality follows from choosing $C\beta \le 1$.
  
  Now we apply the fact that $f\star |x|^{-1} \le C\|f\|^{5/7}_{L^{5/3}}D[f]^{1/7}$ (see~\cite[Eq. (6.3)]{TFDWIC}) with $f = \pm (\chi_r^+\rho^\mathrm{TF} - \rho_{r}^\mathrm{TF})$. 
  Then, using  (\ref{TFminbound}) and $\int_{A_r} (\rho^\mathrm{TF})^{5/3} \le Cr^{-7}$,we have
  \begin{align*}
  |(\chi_r^+ \rho^\mathrm{TF} - \rho_{r}^\mathrm{TF}) \star|x|^{-1}|
  \le C \beta^{1/7} r^{-4}.
  \end{align*}
  Combining this with the assumption (\ref{initialassumption}), we get
   \begin{align*}
  |\phi_{r}^\mathrm{TF}(x) - \phi^\mathrm{TF}(x)|
  &=
  |\Phi_{r}^\mathrm{RHF}(x) - \Phi_{r}^\mathrm{TF}(x) + (\chi_r^+\rho^\mathrm{TF} - \rho_{r}^\mathrm{TF})\star |x|^{-1}| \\
 &\le  C(\beta + \beta^{1/7}) r^{-4}, \quad \text{for any } x \in A_r.
  \end{align*}
  
We note that $Cr^{-4} \ge \phi^\mathrm{TF}(x) \ge C^{-1}r^{-4}$ for $x \in A_r$ by the Sommerfeld bound. Therefore, if $\beta > 0$ is sufficiently small, we see
\begin{equation}
\label{6TF1}
Cr^{-4} \ge \phi_{r}^\mathrm{TF}(x) \ge C^{-1}r^{-4}, \quad \text{for all } x \in A_r.
\end{equation}
To improve this bound, we need to show that $\mu_{r}^\mathrm{TF} = 0$.
This follows from Lemma~\ref{chemical} if 
\begin{equation}
\label{assumption}
\mu_{r}^\mathrm{TF} < \inf_{ x \in \partial A_r } \phi_{r}^\mathrm{TF}(x).
\end{equation}
We now suppose that (\ref{assumption}) fails. Then from (\ref{6TF1}) we find that
\[
\mu_{r}^\mathrm{TF} \ge \inf_{ x \in A_r}\phi_{r}^\mathrm{TF}(x) \ge C^{-1}r^{-4}.
\]
On the other hand, $\phi_{r}^\mathrm{TF}(x) \le \Phi_{r}^\mathrm{RHF}(x) \le Cr^{-3}\phi(x)$ by (\ref{ite2}).
Therefore, from the TF equation
\[
c_\mathrm{TF} \rho_{r}^\mathrm{TF}(x)^{2/3} =[\phi_{r}^\mathrm{TF}(x) - \mu_{r}^\mathrm{TF}]_+ \le  \left[Cr^{-3}\phi(x) - C^{-1}r^{-4}\right]_+,
\]
we see $\rho_{r}^\mathrm{TF}(x) = 0$ on $A_{C^2r}$.
Since the integrand in  (\ref{6rec}) is pointwise nonnegative, we can restrict the integral on $A_{C^2r}$. Then, using $\rho_{r}^\mathrm{TF}(x) = 0$ on $A_{C^2r}$, we derive from (\ref{6rec}) that
\begin{align*}
C\beta r^{-7} \ge \int_{A_{C^2r}} \left((1-C \beta)\ \rho^\mathrm{TF}(x)\right)^{5/3} \, dx
\ge  C^{-1} (1-C \beta)^{5/3}r^{-7}.
\end{align*}
Thus we get $C^{-1} (1-C \beta)^{5/3}r^{-7} \le C\beta r^{-7}$ and a contradiction if $\beta > 0$ is sufficiently small. Then we can choose $\beta >0$ small enough such that $\mu_{r}^\mathrm{TF} = 0$.
Hence we can use Theorem~\ref{Sommerfeld1} and  Theorem~\ref{Sommerfeld2} for $\phi^\mathrm{TF}$ and $\phi_r^\mathrm{TF}$, and therefore we arrive at, for $x \in A_r\cap \Gamma_j$,
\[
|\phi_r^\mathrm{TF}(x) - \phi^\mathrm{TF}(x)|
\le c_s|x-R_j|^{-4}\left(A_1^j(r)\left(\frac{|x-R_j|}{R_j}\right)^\eta + (A_2^j(r) + 2a(r)\left(\frac{r}{|x-R_j|}\right)^\xi\right),
\]
where we have used the fact that $(1+t)^{-2} \ge 1-2t$ for $t \in (-1, \infty)$.
Since $s \le \tilde r$ it holds that $(s/R_j)^\eta \le 2^\eta R_0^{-(\xi + \eta)}(r/s)^\xi$.
If we note that $A_i^j(r) \le C$ and $a(r) \le C$ by (\ref{6TF1}), then (\ref{step3a}) follows.
Proceeding this way, one can arrive at (\ref{step3b}) from the fact that, for any $t\in(0, T]$, $(1+t)^{3/2} \le 1 + t((1+T)^{3/2} - 1)T^{-1}$.
Then the proof is complete.
\end{proof}
\textit{Step 4}
In this step, we compare $\rho_{r}^\mathrm{TF}$ with $\1_{A_r} \rho^\mathrm{RHF}$.
\begin{lemma}
\label{step4}
Let $\beta > 0$ be as in Lemma \ref{step3}. We assume that (\ref{initialassumption}) holds for some $D \in [Z^{-1/3}, R_0]$. Then, if $r \in [Z^{-1/3}, D]$, we have
\[
D[\rho_{r}^\mathrm{TF} - \1_{A_r}\rho^\mathrm{RHF}] \le Cr^{-7+1/3}.
\]
\end{lemma}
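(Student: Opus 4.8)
The plan is the standard Solovej sandwich: show that the exterior piece $\eta_r^2\rho^{\mathrm{RHF}}$ of the RHF density nearly minimizes the exterior screened Thomas--Fermi functional $\E_r^{\mathrm{TF}}$, whose true minimizer is $\rho_r^{\mathrm{TF}}$, and read off the Coulomb distance from the resulting energy gap. Throughout we use Lemma~\ref{step3} (so $\mu_r^{\mathrm{TF}}=0$ and $c_\mathrm{TF}(\rho_r^{\mathrm{TF}})^{2/3}=\phi_r^{\mathrm{TF}}=V_r-\rho_r^{\mathrm{TF}}\star|x|^{-1}$ with $\mathrm{supp}\,\rho_r^{\mathrm{TF}}\subset A_r$), Lemma~\ref{step2}(ii) ($\int(\rho_r^{\mathrm{TF}})^{5/3}\le Cr^{-7}$, hence $\int[\phi_r^{\mathrm{TF}}]_+^{5/2}\le Cr^{-7}$), and Lemma~\ref{lem.ite}, which supplies $\int_{A_r}\rho^{\mathrm{RHF}}\le Cr^{-3}$, $\int_{A_r}(\rho^{\mathrm{RHF}})^{5/3}\le Cr^{-7}$, $\tr(-\Delta\,\eta_r\gamma^{\mathrm{RHF}}\eta_r)\le C(r^{-7}+\lambda^{-2}r^{-5})$, and $\sup_{A_r}\phi^{-1}[\Phi_r^{\mathrm{RHF}}]_+\le Cr^{-3}$; together with the harmonicity of $\Phi_r^{\mathrm{RHF}}$ in $A_r$ the last gives $|\Phi_r^{\mathrm{RHF}}|\le Cr^{-4}$, hence $|V_r|\le Cr^{-4}$, and $[\phi_r^{\mathrm{TF}}]_+\le Cr^{-4}$ on $A_r$.

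\textbf{Upper bound.} Apply Lemma~\ref{semicl}(2) with $V=\phi_r^{\mathrm{TF}}$ and a spherically symmetric $g$ supported in a ball of radius $\ell$; by the TF equation the resulting density matrix $\gamma'$ has $\rho_{\gamma'}=\rho_r^{\mathrm{TF}}\star g^2$, and $\tr\gamma'=\int\rho_r^{\mathrm{TF}}\le Z-\int_{A_r^c}\rho^{\mathrm{RHF}}\le\int_{A_r}\rho^{\mathrm{RHF}}$ (using $N\ge Z$). A routine modification near $\partial A_r$ (shrinking the cutoff radius by $\ell$, at a cost controlled by (\ref{step3a})) ensures $\mathrm{supp}\,\rho_{\gamma'}\subset A_r$. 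Then Lemma~\ref{frominside} with $\gamma=\gamma'$ gives $\E_r^{\mathrm{RHF}}(\eta_r\gamma^{\mathrm{RHF}}\eta_r)\le\E_r^{\mathrm{RHF}}(\gamma')+\mathcal R$, while the semiclassical bound of Lemma~\ref{semicl}(2) and Newton's theorem for $V_r-V_r\star g^2$ (as in the proof of Lemma~\ref{initial}) give $\E_r^{\mathrm{RHF}}(\gamma')\le\E_r^{\mathrm{TF}}(\rho_r^{\mathrm{TF}})+C\ell^{-2}\int\rho_r^{\mathrm{TF}}+\int(V_r-V_r\star g^2)\rho_r^{\mathrm{TF}}$.

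\textbf{Lower bound and combination.} On $\mathrm{supp}\,\eta_r\subset A_r$ one has $\Phi_r^{\mathrm{RHF}}=V_r=\phi_r^{\mathrm{TF}}+\rho_r^{\mathrm{TF}}\star|x|^{-1}$, hence the algebraic identity
\[
\E_r^{\mathrm{RHF}}(\eta_r\gamma^{\mathrm{RHF}}\eta_r)=\tr\Big[\Big(-\tfrac{\Delta}{2}-\phi_r^{\mathrm{TF}}\Big)\eta_r\gamma^{\mathrm{RHF}}\eta_r\Big]+D[\eta_r^2\rho^{\mathrm{RHF}}-\rho_r^{\mathrm{TF}}]-D[\rho_r^{\mathrm{TF}}].
\]
Bounding the kinetic term from below by Lemma~\ref{semicl}(1) with $V=\phi_r^{\mathrm{TF}}$, using the TF identity (as in the proof of (\ref{lower})) $-\mathrm{const}\!\cdot\!\int[\phi_r^{\mathrm{TF}}]_+^{5/2}-D[\rho_r^{\mathrm{TF}}]=\E_r^{\mathrm{TF}}(\rho_r^{\mathrm{TF}})$, and subtracting the upper bound, the energies cancel and we obtain
\[
D[\eta_r^2\rho^{\mathrm{RHF}}-\rho_r^{\mathrm{TF}}]\le\mathcal R+C\ell^{-2}\!\!\int_{A_r}\!\!\rho^{\mathrm{RHF}}+\Big|\!\int(V_r-V_r\star g^2)\rho_r^{\mathrm{TF}}\Big|+C\Big(\!\int[\phi_r^{\mathrm{TF}}]_+^{5/2}\Big)^{3/5}\!\Big(\!\int[\phi_r^{\mathrm{TF}}-\phi_r^{\mathrm{TF}}\star g^2]_+^{5/2}\Big)^{2/5}.
\]
Here $[\phi_r^{\mathrm{TF}}-\phi_r^{\mathrm{TF}}\star g^2]_+\le[V_r-V_r\star g^2]_+$ by superharmonicity of $\rho_r^{\mathrm{TF}}\star|x|^{-1}$, and since $\Phi_r^{\mathrm{RHF}}$ is harmonic in $A_r$ the function $V_r-V_r\star g^2$ vanishes outside the $\ell$-layer around $\partial A_r$, on which $|V_r|\le Cr^{-4}$ and (Sommerfeld for $\phi_r^{\mathrm{TF}}$) $\rho_r^{\mathrm{TF}}\le Cr^{-6}$; this bounds the last two terms by $Cr^{-37/5}\ell^{2/5}$, while $C\ell^{-2}\int_{A_r}\rho^{\mathrm{RHF}}\le C\ell^{-2}r^{-3}$ and $\mathcal R\le C(\lambda^{-2}r^{-5}+\lambda r^{-7})$ by (\ref{ite2}) and (\ref{ite3}). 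Taking $\ell=r^{11/6}$ and $\lambda$ a power of $r$ with $r^{5/6}\lesssim\lambda\lesssim r^{1/3}$ makes each term $\le Cr^{-7+1/3}$, so $D[\eta_r^2\rho^{\mathrm{RHF}}-\rho_r^{\mathrm{TF}}]\le Cr^{-7+1/3}$.

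\textbf{Passage to $\1_{A_r}\rho^{\mathrm{RHF}}$ and main obstacle.} Write $\1_{A_r}\rho^{\mathrm{RHF}}-\rho_r^{\mathrm{TF}}=(\eta_r^2\rho^{\mathrm{RHF}}-\rho_r^{\mathrm{TF}})+(\1_{A_r}-\eta_r^2)\rho^{\mathrm{RHF}}$ and use $D[f+g]\le2D[f]+2D[g]$; the correction $(\1_{A_r}-\eta_r^2)\rho^{\mathrm{RHF}}\ge0$ is supported in the thin shell $\{r\le|x-R_j|\le(1+\lambda)r\}$, and the Hardy--Littlewood--Sobolev inequality together with the $L^1$ and $L^{5/3}$ density bounds of Lemma~\ref{lem.ite} on that shell (whose smallness is governed by the already fixed power $\lambda=\lambda(r)$) gives $D[(\1_{A_r}-\eta_r^2)\rho^{\mathrm{RHF}}]\le Cr^{-7+1/3}$, whence $D[\rho_r^{\mathrm{TF}}-\1_{A_r}\rho^{\mathrm{RHF}}]\le Cr^{-7+1/3}$. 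The real work — and the main obstacle — is the joint bookkeeping in the combined inequality: the localization error $\ell^{-2}r^{-3}$ pushes $\ell$ up, the Newton and semiclassical boundary-layer terms push $\ell$ down, and within $\mathcal R$ the term $(\lambda r)^{-2}r^{-3}$ wants $\lambda$ not too small while $\lambda r^{-7}$ and the shell contribution want $\lambda$ small; it is this balancing that pins down the exponent $-7+1/3$ and forces both $\ell$ and $\lambda$ to be fractional powers of $r$.
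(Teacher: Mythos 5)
Your proposal is correct and follows essentially the same route as the paper: the same sandwich of $\E_r^{\mathrm{RHF}}(\eta_r\gamma^{\mathrm{RHF}}\eta_r)$ around $\E_r^{\mathrm{TF}}(\rho_r^{\mathrm{TF}})$ via Lemma~\ref{frominside} and the two halves of Lemma~\ref{semicl}, the same boundary-layer/Newton error terms with the same mollification scale $s=r^{11/6}$, and the same Hardy--Littlewood--Sobolev shell estimate with $\lambda$ a fractional power of $r$ to pass from $\eta_r^2\rho^{\mathrm{RHF}}$ to $\1_{A_r}\rho^{\mathrm{RHF}}$. The only cosmetic difference is that the paper makes your ``routine modification'' explicit by building the trial state from $\1_{A_{r+s}}\phi_r^{\mathrm{TF}}$ (and uses $s=r^{5/3}$ in the upper bound, which is immaterial), and the cost of that truncation is controlled by the Sommerfeld bound $\rho_r^{\mathrm{TF}}\le C r^{-6}$ on the layer rather than by (\ref{step3a}).
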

\begin{proof}
\underline{\textit{Upper Bound.}}
We will prove that
\begin{equation}
\label{6upper}
\E_r^\mathrm{RHF}(\eta_r\gamma^\mathrm{RHF} \eta_r) \le \E_{r}^\mathrm{TF}(\rho_{r}^\mathrm{TF}) + Cr^{-7}(r^{2/3} + \lambda^{-2} r^2 + \lambda).
\end{equation}
We use Lemma \ref{semicl} (2) with $V_{r}' \coloneqq \1_{A_{r+s}} \phi_{r}^\mathrm{TF}$, $s \le r$ to be chosen later, and $g$ spherically symmetric to obtain a density matrix $\widetilde \gamma$ as in the statement. Since $\mu_{r}^\mathrm{TF} = 0$ by Lemma \ref{step3}, we deduce from the TF equation in Lemma \ref{step2} that
\[
\rho_{\widetilde \gamma} =2^{5/2}(6\pi^2)^{-1}\left(\1_{A_{r+s}}(\phi_{r}^\mathrm{TF})^{3/2} \right)\star g^2 = (\1_{A_{r+s}} \rho_{r}^\mathrm{TF})\star g^2.
\]
Since $\rho_{\widetilde \gamma}$ is supported in $A_r$ and
\[
\tr \widetilde \gamma = \int \rho_{\widetilde \gamma} = \int_{A_{r+s}}  \rho_{r}^\mathrm{TF} 
\le
\int \rho_{r}^\mathrm{TF}  \le \int_{A_{r}}  \rho^\mathrm{RHF},
\]
we may apply Lemma~\ref{frominside} and obtain
$\E_r^\mathrm{RHF}(\eta_r\gamma^\mathrm{RHF}\eta_r)
\le \E_r^\mathrm{RHF}(\widetilde \gamma) + \mathcal{R}$.
Next, we bound $\E^\mathrm{RHF}(\widetilde \gamma)$.
By the semiclassical estimate from Lemma~\ref{semicl} (2), we have
\begin{equation}
\begin{split}
\E^\mathrm{RHF}(\widetilde \gamma)
&\le2^{3/2}(5\pi^2)^{-1} \int [V_r]^{5/2}_+ + Cs^{-2} \int [V_{r}]^{3/2}_+
+ D[\rho_{r}^\mathrm{TF}\star g^2]  -\int \Phi^\mathrm{RHF}_{r} (\1_{A_{r+s}}  \rho_{r}^\mathrm{TF})\star g^2 \\
&\le
 2^{3/2}(5\pi^2)^{-1} \int [\phi_{r}^\mathrm{TF}]^{5/2}_+ + Cs^{-2} \int \rho_{r}^\mathrm{TF} - \int_{A_r} \Phi^\mathrm{RHF}_{r} \rho_{r}^\mathrm{TF} \\
&\quad+ D[\rho_r^\mathrm{TF}]
+ \int_{A_{r+s} } (\Phi^\mathrm{RHF}_{r} -\Phi^\mathrm{RHF}_{r} \star g^2)\rho_{r}^\mathrm{TF}
+\int_{A_r \cap A_{r+s}^c} \Phi_{r}^\mathrm{RHF} \rho_r^\mathrm{TF} \\
&\le \E_r^\mathrm{TF}(\rho_r^\mathrm{TF}) +Cs^{-2} \int \rho_{r}^\mathrm{TF} +\int_{A_r \cap A_{r+s}^c} \Phi^\mathrm{RHF}_{r} \rho_{r}^\mathrm{TF},
\end{split}
\end{equation}
where we have used $\Phi^\mathrm{RHF}_{r} \star g^2 \ge \Phi^\mathrm{RHF}_{r}$ on $A_r$ in the second inequality.
This fact follows from Newton's theorem and the assumption $s \le r$.
According to (\ref{ite3}), we get
\[
 \int \rho_{r}^\mathrm{TF} \le  \int_{A_r} \rho^\mathrm{RHF} \le Cr^{-3}.
\]
We note that $\rho_{r}^\mathrm{TF}(x) \le C|x-R_j|^{-6}$ on $A_r \cap \Gamma_j$ and $x \in \Gamma_j$ if $r \le |x-R_j| < r+s$. Then
\[
\int_{A_r \cap A_{r+s}^c} \Phi^\mathrm{RHF}_{r} \rho_{r}^\mathrm{TF} 
\le Cr^{-3} \sum_{j=1}^K\int_{r \le |x-R_j| \le r+s} |x-R_j|^{-7}  \, dx \le Csr^{-8}.
\]
We choose $s=r^{5/3}$ and get
\[
\E_r^\mathrm{RHF}(\widetilde \gamma) \le
\E_{r}^\mathrm{TF}(\rho_{r}^\mathrm{TF}) + Cr^{-7+2/3}.
\]
Finally, since $\lambda \le 1/2$, we have
\begin{align*}
\mathcal{R} \le C(\lambda^{-2}r^{-5} + \lambda r^{-7}).
\end{align*}
Hence we obtain the desired upper bound.

\underline{\textit{Lower bound}}
We will prove
\[
\E_r^\mathrm{RHF}(\eta_r \gamma^\mathrm{RHF} \eta_r) \ge \E_{r}^\mathrm{TF}(\rho_{r}^\mathrm{TF}) + D[\eta_r^2 \rho^\mathrm{RHF} - \rho_{r}^\mathrm{TF}] -Cr^{-7 +1/3}.
\]
We can estimate
\begin{align*}
\E_r^\mathrm{RHF}(\eta_r \gamma^\mathrm{RHF} \eta_r) 
&= \tr \left[ \left(-\frac{\Delta}{2} - \phi_{r}^\mathrm{TF}\right) \eta_r \gamma^\mathrm{RHF} \eta_r \right]
+D[\eta_r^2\rho^\mathrm{RHF} - \rho_{r}^\mathrm{TF}] + D[ \rho_{r}^\mathrm{TF}]\\
&\ge -2^{5/2}(15\pi^2)^{-1}\int[\phi_{r}^\mathrm{TF}]_+^{5/2} -Cs^{-2}\int \eta_r^2 \rho^\mathrm{RHF} \\
&\quad -C\left(\int[\phi_{r}^\mathrm{TF}]_+^{5/2} \right)^{3/5} \left( \int[\phi_{r}^\mathrm{TF}- \phi_{r}^\mathrm{TF}\star g^2]_+^{5/2}\right)^{2/5} \\
&\quad + D[\eta_r^2\rho^\mathrm{RHF} - \rho_{r}^\mathrm{TF}]  -D[\rho_{r}^\mathrm{TF}] \\
&=  \E_{r}^\mathrm{TF}(\rho_{r}^\mathrm{TF}) + D[\eta_r^2 \rho^\mathrm{RHF} - \rho_{r}^\mathrm{TF}] -Cs^{-2}\int \eta_r^2 \rho^\mathrm{RHF} \\
&\quad -C\left(\int[\phi_{r}^\mathrm{TF}]_+^{5/2} \right)^{3/5} \left( \int[\phi_{r}^\mathrm{TF}- \phi_{r}^\mathrm{TF}\star g^2]_+^{5/2}\right)^{2/5}.
\end{align*}
We note that
\begin{align*}
\int \eta_r^2 \rho^\mathrm{RHF}  &\le Cr^{-3}, \\
\int[\phi_{r}^\mathrm{TF}]_+^{5/2} =C\int (\rho_{r}^\mathrm{TF})^{5/3} &\le Cr^{-7}.
\end{align*}
We know $|x|^{-1} - |x|^{-1}\star g^2 \ge 0 $ and thus $\rho_{r}^\mathrm{TF} \star(|x|^{-1} - |x|^{-1}\star g^2) \ge0$.
Since the TF equation $\phi_{r}^\mathrm{TF} =\chi_r^+\Phi_{r}^\mathrm{RHF} -\rho_{r}^\mathrm{TF}\star |x|^{-1}$, we have 
\[
\phi_{r}^\mathrm{TF} - \phi_{r}^\mathrm{TF} \star g^2 \le \chi_r^+\Phi_{r}^\mathrm{RHF}  -(\chi_r^+\Phi_{r}^\mathrm{RHF} )\star g^2 \eqqcolon f.
\]
By Newton's theorem, we infer that $\mathrm{supp} f \subset \bigcup_{j=1}^K\{x\colon r-s \le |x-R_j| \le r+s\}$.
Hence, by $|f(x)| \le Cr^{-4}$, we have
\[
[\phi_{r}^\mathrm{TF} - \phi_{r}^\mathrm{TF} \star g^2]_+ \le Cr^{-4}\sum_{j=1}^K\1(r-s \le |x-R_j| \le r+s).
\]
Together with these facts, we learn
\[
\int[\phi_{r}^\mathrm{TF} - \phi_{r}^\mathrm{TF} \star g^2]_+^{5/2}\le Cr^{-8}s.
\]
We conclude that
\begin{align*}
\E_r^\mathrm{RHF}(\eta_r \gamma^\mathrm{RHF} \eta_r)  &\ge 
 \E_{r}^\mathrm{TF}(\rho_{r}^\mathrm{TF}) + D[\eta_r^2 \rho^\mathrm{RHF} - \rho_{r}^\mathrm{TF}]  -C(s^{-2}r^{-3} + r^{-37/5}s^{2/5}).
\end{align*}
Then  we choose $s = r^{11/6}$ and arrive at the desired lower bound.

\underline{\it Conclusion}
Combining the upper and lower bound, we learn
\[
D[\eta_r^2 \rho^\mathrm{RHF} - \rho_{r}^\mathrm{TF}] \le Cr^{-7}(r^{1/3} + \lambda^{-2}r^2 +\lambda).
\]
Using the Hardy-Littlewood-Sobolev inequality, we have
\begin{align*}
D[\chi_r^+\rho^\mathrm{RHF} - \eta_r^2 \rho^\mathrm{RHF}]
&\le C\|\1_{A_r \cap A_{(1+\lambda)r}^c}\rho^\mathrm{RHF}\|^2_{L^{6/5}} \\
&\le C\left(\int_{A_r}\rho^\mathrm{RHF}(x)^{5/3} \, dx \right)^{6/5}\left(\sum_{j=1}^K  \int_{r \le |x-R_j| \le (1+\lambda)r} \, dx\right)^{7/15} \\
&=C\lambda^{7/15} r^{-7}.
\end{align*}
By convexity, we see
\begin{align*}
D[\chi_r^+\rho^\mathrm{RHF} - \rho_{r}^\mathrm{TF}] &\le 2D[\chi_r^+\rho^\mathrm{RHF} - \eta_r^2 \rho^\mathrm{RHF}]
+2D[\eta_r^2\rho^\mathrm{RHF} - \rho_{r}^\mathrm{TF}] \\
&\le Cr^{-7}(\lambda^{7/15} + r^{1/3} + \lambda^{-2}r^2),
\end{align*}
for any $\lambda \in (0, 1/2]$.
We choose $\lambda = r^{30/37}$ and get
\[
D[\chi_r^+\rho^\mathrm{RHF} - \rho_{r}^\mathrm{TF}] \le Cr^{-7+1/3}.
\]
This completes the proof.

\textit{Step 5}

We turn to prove Theorem~\ref{Thm.iterative}.
Let $r=D^{1+\delta}$,  $s\in [r^{-(1+\delta)},  \min\{r^{(1-\delta)/(1+\delta)}, \tilde r\}]$, and $x \in \partial A_s$.
Now we choose a constant $\delta \in (0, 1)$ such that
\begin{align*}
\frac{1+\delta}{1-\delta}\left(\frac{49}{36}-a \right) &< \frac{49}{36} \\
\frac{1}{36} - \frac{10\delta}{1-\delta} &>0.
\end{align*}
We consider two cases.

{\it Case 1} $D^{1+\delta} \le Z^{-1/3}$.

By the initial step, for any $s \le r^{(1-\delta)/(1+\delta)} \le (Z^{-1/3})^{(1-\delta)/(1+\delta)}$, we have
\begin{align*}
|\Phi^\mathrm{RHF}_{s}(x) - \Phi^\mathrm{TF}_{s}(x)| &\le CZ^{49/36 - a}s^{1/12}
\\
&\le Cs^{1/12 - 3\frac{1+\delta}{1-\delta}(49/36 - a)}  \\
&=Cs^{-4+\epsilon_1},
\end{align*}
which is the desired conclusion.

{\it Case 2} $D^{1+\delta} \ge Z^{-1/3}$.

We may split
\begin{align*}
\Phi^\mathrm{RHF}_{s}(x) - \Phi^\mathrm{TF}_{s}(x) &= \phi_{r}^\mathrm{TF}(x) - \phi^\mathrm{TF}(x) +\int_{A_s}\frac{\rho_{r}^\mathrm{TF}(y) - \rho^\mathrm{TF}(y)}{|x-y|} \, dy\\
&\quad + \sum_{i=1}^K\int_{|y-R_i| < s}\frac{\rho_{r}^\mathrm{TF}(y) - \chi_r^+\rho^\mathrm{RHF}(y)}{|x-y|} \, dy.
\end{align*}
We know from Lemma~\ref{step3} that
\[
| \phi_{r}^\mathrm{TF}(x) - \phi^\mathrm{TF}(x)| \le C\left(\frac{r}{s} \right)^\xi s^{-4}
\]
and
\[
\int_{A_r}\frac{\rho_{r}^\mathrm{TF}(y) - \rho^\mathrm{TF}(y)}{|x-y|} \, dy
\le C\left(\frac{r}{s} \right)^\xi s^{-4}.
\]

We note that $\1_{(|y-R_i| < s)}(\rho_r^\mathrm{TF} - \chi_r^+ \rho^\mathrm{RHF})\star |x|^{-1}$ is harmonic in $|x-R_i| \ge s$ for any $i = 1, \dots, K$.
Hence we get from Lemma~\ref{coulomb} that
\begin{align*}
\left| \int_{|y-R_i| < s}\frac{\rho_{r}^\mathrm{TF}(y) - \chi_r^+\rho^\mathrm{RHF}(y)}{|x-y|} \, dy\right|
&\le \sup_{|x-R_i| = s}\left| \int_{|y-R_i| < s}\frac{\rho_{r}^\mathrm{TF}(y) - \chi_r^+\rho^\mathrm{RHF}(y)}{|x-y|}\, dy \right| \\
&\le C\|\rho_{r}^\mathrm{TF} - \chi_r^+\rho^\mathrm{RHF}\|^\mathrm{5/6}_{L^{5/3}}\left(sD[\rho_{r}^\mathrm{TF} - \chi_r^+\rho^\mathrm{RHF}] \right)^{1/12} \\
&\le Cs^{-7/2}(r^{-7+1/3}s)^{1/12} \\
&=Cs^{-4+1/36}\left(\frac{s}{r} \right)^{4+1/12 -1/36}.
\end{align*}
In conclusion,
\begin{equation}
\label{step5}
\sup_{x \in \partial A_s}|\Phi^\mathrm{RHF}_{s}(x) - \Phi^\mathrm{TF}_{s}(x)|
\le C\left( \frac{r}{s} \right)^\xi s^{-4} + C\left( \frac{s}{r}\right)^5 s^{-4 + 1/36}.
\end{equation}

 For any $D \le s \le D^{1-\delta}$ we learn
\[
s^{2\delta/(1-\delta)} \le r/s \le s^{\delta}.
\]
Thus we deduce from  (\ref{step5})  that
\[
|\Phi^\mathrm{RHF}_{s}(x) - \Phi^\mathrm{TF}_{s}(x)| \le Cs^{-4 + \xi\delta} + Cs^{-4 + 1/36 - 10\delta/(1-\delta)}\le Cs^{-4 + \epsilon_2} .
\]
Then the proof is complete.
\end{proof}

\section{Screened potential estimate}
Now we can prove the following theorem.
\begin{theorem}[screened potential estimate]
\label{screened potential}
There are universal constants $C, \epsilon, D > 0$ such that
\[
\sup_{x \in \partial A_r} \left|\Phi^\mathrm{RHF}_{r}(x) - \Phi_{r}^\mathrm{TF}(x) \right| \le Cr^{-4+\epsilon} \quad \text{for any } r \le D.
\]
\end{theorem}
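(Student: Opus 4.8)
The plan is to run Solovej's bootstrap, feeding the output of Lemma~\ref{initial} into repeated applications of Theorem~\ref{Thm.iterative}. Lemma~\ref{initial} controls $\Phi^\mathrm{RHF}_r-\Phi^\mathrm{TF}_r$ on $\partial A_r$ by $C_1Z^{49/36-a}r^{1/12}$, a bound that only reaches the target strength $r^{-4+\epsilon}$ on the innermost scale $r\sim Z^{-1/3}$; Theorem~\ref{Thm.iterative} turns a bound $\beta s^{-4}$ valid up to a scale $D$ into a bound $C_2 s^{-4+\epsilon}$ valid on a strictly larger range. Iterating the second input finitely many times should carry the good scale from $Z^{-1/3}$ to a scale $D$ that is independent of $Z$. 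Before starting I would fix the universal constants $\beta,\delta,\epsilon,C_2$ of Theorem~\ref{Thm.iterative}, shrink $\epsilon$ so that also $\epsilon\le 3a$ (which makes the initial and iterative exponents match at $r=Z^{-1/3}$), enlarge $C_2$ so that $s_\ast\coloneqq(\beta/C_2)^{1/\epsilon}<1$, and assume $Z\ge Z_0$ for a suitable $Z_0$ (depending only on the universal constants and on $c_0$); the range $1\le Z\le Z_0$ is handled separately at the end.

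The first step is the base case. Writing $s^{1/12}=s^{-4+\epsilon}s^{49/12-\epsilon}$ and using $s\le Z^{-1/3}$ and $\epsilon\le 3a$, Lemma~\ref{initial} already gives $|\Phi^\mathrm{RHF}_s-\Phi^\mathrm{TF}_s|\le C_1 s^{-4+\epsilon}$ for all $s\le Z^{-1/3}$, i.e. the theorem on those scales. Evaluating the same bound at $s=Z^{-1/3}$ produces $C_1Z^{4/3-a}$, which is $\le\beta Z^{4/3}=\beta s^{-4}$ once $C_1Z^{-a}\le\beta$; since the bound of Lemma~\ref{initial} is increasing in $s$, the hypothesis~(\ref{initialassumption}) of Theorem~\ref{Thm.iterative} holds with $D=D_0\coloneqq Z^{-1/3}$, which is the smallest value that theorem allows.

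Next I would set $D_{k+1}\coloneqq\min\{D_k^{1-\delta},R_0,s_\ast\}$ and prove by induction that (\ref{initialassumption}) holds with $D=D_k$, that $D_k\in[Z^{-1/3},R_0]$, and that $|\Phi^\mathrm{RHF}_s-\Phi^\mathrm{TF}_s|\le C_2 s^{-4+\epsilon}$ on $[D_k,D_{k+1}]$. Given the hypothesis at stage $k$, Theorem~\ref{Thm.iterative} with $D=D_k$ supplies this bound on $[D_k,\min\{D_k^{1-\delta},\tilde r\}]$, where $\tilde r=R_0^{-1}D_k^{(1+\delta)\xi/(\xi+\eta)}R_\mathrm{min}^{\eta/(\xi+\eta)}$, and the key point is that $\tilde r\ge D_k^{1-\delta}$ always: this is equivalent to $R_0^{-1}R_\mathrm{min}^{\eta/(\xi+\eta)}\ge D_k^{\,1-\delta-(1+\delta)\xi/(\xi+\eta)}$, whose right-hand side is $\le 1$ (for $\delta$ small the exponent is positive and $D_k\le1$) while the left-hand side is $\ge 4^{\eta/(\xi+\eta)}>1$ straight from $R_0=\min\{1,R_\mathrm{min}/4\}$. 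Hence the improved bound holds on $[D_k,D_k^{1-\delta}]\supseteq[D_k,D_{k+1}]$, and since $D_{k+1}\le s_\ast$ it satisfies $C_2 s^{-4+\epsilon}\le\beta s^{-4}$ there; combined with (\ref{initialassumption}) for $D_k$ on $[0,D_k]$, this re-establishes (\ref{initialassumption}) with $D=D_{k+1}$, and $D_k\le D_{k+1}\le R_0$ so the induction continues. Because $D_k^{1-\delta}>D_k$ while $D_k<1$ and the recursion is capped at the constant $\min\{R_0,s_\ast\}=2D$, the nondecreasing sequence $(D_k)$ reaches $2D$ after finitely many steps (their number grows only like $\log\log Z$, while $\beta,\delta,\epsilon,C_2$ stay fixed); the successive intervals being adjacent, their union is $[Z^{-1/3},2D]$. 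With the base case this yields $|\Phi^\mathrm{RHF}_r-\Phi^\mathrm{TF}_r|\le\max\{C_1,C_2\}\,r^{-4+\epsilon}$ for $r\le D\coloneqq\tfrac12\min\{R_0,s_\ast\}$, a positive constant since $R_0\ge\min\{1,c_0/4\}$.

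I do not expect a deep obstacle, because all the analytic work already lives in Lemma~\ref{initial} and Theorem~\ref{Thm.iterative}; what remains is bookkeeping, and two points deserve care. The first is the inequality $\tilde r\ge D_k^{1-\delta}$, i.e. that the Sommerfeld truncation scale is never the binding constraint in the iteration range — this is the only place where the molecular geometry ($R_0$, $R_\mathrm{min}$, and the exponents $\xi,\eta$) enters, and it is exactly what has to be checked when passing from $K=1$ to $K\ge2$. The second is that Theorem~\ref{Thm.iterative} improves the bound only on an annular range $[D,D^{1-\delta}]$, never down to $r=0$, so the improved estimate at small scales is recovered only because the intervals $[D_k,D_{k+1}]$ tile $[Z^{-1/3},D]$ without gaps. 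Finally, for $1\le Z\le Z_0$ I would argue by hand: the variational characterization of $\gamma^\mathrm{RHF}$ and the bound $E^\mathrm{RHF}(N,Z)\ge -CZ^{7/3}$ control $\tr(-\Delta\gamma^\mathrm{RHF})$, hence $\|\rho^\mathrm{RHF}\|_{L^{5/3}}$ by the Lieb--Thirring inequality, by a constant depending only on $Z_0$; then splitting $A_r^c$ into the disjoint balls $B_r(R_j)$ and applying H\"older's inequality as in Lemma~\ref{coulomb} gives $\sup_{\partial A_r}|\Phi^\mathrm{RHF}_r-\Phi^\mathrm{TF}_r|\le C(Z_0)\,r^{1/5}\le C(Z_0)\,r^{-4+\epsilon}$ for $r\le1$, which completes the proof after enlarging $C$.
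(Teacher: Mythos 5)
Your proposal is correct and follows essentially the same route as the paper: feed the initial estimate of Lemma~\ref{initial} into the iterative step of Theorem~\ref{Thm.iterative} and bootstrap from the scale $Z^{-1/3}$ up to a $Z$-independent scale. The only differences are organizational --- you run an explicit induction on the scales $D_k$ whereas the paper argues by contradiction from the supremum $M$ of admissible scales --- and your two points of extra care (that the base case needs $\epsilon\le 3a$ or $Z\ge Z_0$, and that $\tilde r\ge D_k^{1-\delta}$ always holds because $R_0^{-1}R_\mathrm{min}^{\eta/(\xi+\eta)}\ge 4^{\eta/(\xi+\eta)}>1$) are both valid and in fact tidy up details that the paper's version, which still carries the $\tilde M$ alternative through its case analysis, leaves implicit.
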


\begin{proof}
The proof is essentially the same as that of \cite[Theorem 5.1]{Samojlow}.
Let $\sigma = \max\{C_1, C_2\}$. We may assume $\beta < \sigma$.
We put $D_0 = Z^{-1/3}$.
From Lemma~\ref{initial} we learn
\begin{equation}
\label{eq.initial}
\sup_{x \in \partial A_r} \left|\Phi^\mathrm{RHF}_{r}(x) - \Phi_{r}^\mathrm{TF}(x)\right| \le \sigma r^{-4+\epsilon} \quad \text{for any } r \le D_0 = Z^{-1/3}.
\end{equation}
Now we define
\[
M \coloneqq \sup\left\{ r \in \R \colon \sup_{x \in \partial A_s} \left|\Phi^\mathrm{RHF}_{s}(x) - \Phi_{s}^\mathrm{TF}(x) \right| \le \sigma s^{-4+\epsilon}, \text{ for any } s \le r^{\frac{1}{1+\delta}} \right\}.
\]
Next, we suppose that
\begin{enumerate}
\item
$M < R_0$

and

\item
$(M^{\frac{1}{1+\delta}}, \min\{ M^{\frac{1-\delta}{1+\delta}}, \tilde M\}) \neq \emptyset$,
\end{enumerate}
where $\tilde M \coloneqq R_0^{-1}M^{\xi/(\xi + \eta)} R_\mathrm{min}^{\eta/(\xi + \eta)}$.
If $D_0 <M$, then there is a sequence  such that $D_{n} \to M$ and $D_{0}  \le D_n \le M$ for large $n$.
From this and Theorem~\ref{Thm.iterative}, we see
\[
\sup_{x \in \partial A_r} \left|\Phi^\mathrm{RHF}_{r}(x) - \Phi_{r}^\mathrm{TF}(x)\right| \le \sigma r^{-4+\epsilon}, \quad \text{for any } r \in \left[D_{n}^{\frac{1}{1+\delta}}, \min \left\{D_{n}^{\frac{1-\delta}{1+\delta}}, \tilde D_{n}\right\}\right],
\]
where $\tilde D_n \coloneqq R_0^{-1}D_n^{\xi/(\xi + \eta)} R_\mathrm{min}^{\eta/(\xi + \eta)}$.
From (2), we have
\[
M^{\frac{1}{1+\delta}} \in \left(D_{n}^{\frac{1}{1+\delta}}, \min \left\{D_{n}^{\frac{1-\delta}{1+\delta}}, \tilde D_{n}\right\}\right) \neq \emptyset
\]
for large $n$. This contradicts the definition of $M$.
If $D_0=M$, then $D_0 \le R_0$ and
\[
\sup_{x \in \partial A_r} \left|\Phi^\mathrm{RHF}_{r}(x) - \Phi_{r}^\mathrm{TF}(x) \right| \le \sigma r^{-4+\epsilon}, \quad \text{for any } r \le \min\{M^{\frac{1-\delta}{1+\delta}}, \tilde M\},
\]
which also contradicts the definition of $M$.
Finally, if $D_0 > M$ then we can choose $M' \in (M, \min\{1, D_0\})$.
Then (\ref{eq.initial}) leads to a contradiction.
Hence at least one of (1) and (2) cannot hold.
If (1) is true, then $M \ge cR_\mathrm{min}^{\frac{\eta(1+\delta)}{\eta-\delta\xi}}$.
Therefore we arrive at
\[
M \ge \min\left\{R_0, cR_\mathrm{min}^{\frac{\eta(1+\delta)}{\eta-\delta\xi}}\right\} \ge D^{1+\delta},
\]
where $D$ is the desired universal constant.
Then the theorem follows.
\end{proof}

\section*{Proof of Theorem \ref{theorem.main}}
Since $N \le 2Z + K$~\cite{Lieb1984}, we need only consider the case $N \ge Z \ge 1$.
By Theorem~\ref{screened potential}, there are universal constants $C, \epsilon, D >0$ such that
\[
\sup_{x \in \partial A_r} \left|\Phi^\mathrm{RHF}_{r}(x) - \Phi_{r}^\mathrm{TF}(x) \right|  \le Cr^{-4+\epsilon}, \quad
\text{for any }  r \le D.
\]
Hence we can use (\ref{initialassumption}) with a universal constant $\beta = CD^{\epsilon}$.
Now we choose $D$ sufficiently small so that $D \le 1$ and $\beta \le 1$.
By applying Lemma~\ref{lem.ite} and using (\ref{ite1}) and (\ref{ite3}) with $r = D$, we infer that
\[
\int_{A_D} \rho^\mathrm{RHF} + \left|\sum_{j=1}^K \int_{|x-R_j|<D} (\rho^\mathrm{RHF} - \rho^\mathrm{TF}) \right| \le C.
\]
By $\int \rho^\mathrm{TF} = Z$, we have
\[
N=\int\rho^\mathrm{RHF} = \int_{A_D} \rho^\mathrm{RHF} +\sum_{j=1}^K \int_{|x-R_j|<D} (\rho^\mathrm{RHF} - \rho^\mathrm{TF}) + \sum_{j=1}^K \int_{|x-R_j|<D} \rho^\mathrm{TF}
\le C+Z,
\]
which proves the theorem.\qed


\section*{Acknowledgments}

The author would like to thank  Shu Nakamura for helpful comments. She also thanks  Heinz Siedentop for many fruitful discussions.
This work was supported by Research Fellow of the JSPS KAKENHI Grant Number 18J13709.

\end{document}